\newif\ifnotes
\title{Efficient Interactive Coding Achieving Optimal Error Resilience Over the Binary Channel}
\author{Meghal Gupta\thanks{Email: \texttt{meghal@mit.edu}}\\Microsoft Research 
\and Rachel Yun Zhang\thanks{Email: \texttt{rachelyz@mit.edu}}\\Massachusetts Institute of Technology}
\date{\today}
\definecolor{denim}{rgb}{0.08, 0.38, 0.74}
\definecolor{periwinkle}{rgb}{0.6, 0.6, 0.95}
\definecolor{wildblueyonder}{rgb}{0.64, 0.68, 0.82}
\definecolor{wisteria}{rgb}{0.91, 0.72, 1.00}
\definecolor{thistle}{rgb}{0.85, 0.75, 0.85}
\definecolor{byzantium}{rgb}{0.44, 0.16, 0.39}
\definecolor{deeplilac}{rgb}{0.6, 0.33, 0.73}
\definecolor{jazzberryjam}{rgb}{0.55, 0.04, 0.37}
\definecolor{fireenginered}{rgb}{0.81, 0.09, 0.13}
\definecolor{deepcarrotorange}{rgb}{0.91, 0.41, 0.17}
\definecolor{mangotango}{rgb}{1.0, 0.51, 0.26}
\newtheorem{theorem}{Theorem}[section]
\newtheorem{lemma}[theorem]{Lemma}
\newtheorem{corollary}[theorem]{Corollary}
\theoremstyle{definition}
\newtheorem{definition}[theorem]{Definition}
\newtheorem{remark}[theorem]{Remark}
\Crefname{theorem}{Theorem}{Theorems}
\Crefname{claim}{Claim}{Claims}
\Crefname{lemma}{Lemma}{Lemmas}
\Crefname{proposition}{Proposition}{Propositions}
\Crefname{corollary}{Corollary}{Corollaries}
\Crefname{definition}{Definition}{Definitions}
\newcommand{\ECC}{\mathsf{ECC}}
\newcommand{\op}{\mathsf{op}}
\newcommand{\rewind}{\,\leftarrow}
\newcommand{\asked}{\mathsf{asked}}
\newcommand{\true}{\mathsf{true}}
\newcommand{\false}{\mathsf{false}}
\newcommand{\PikC}{\mathsf{C}}
\newcommand{\poly}{\text{poly}}
\newcommand{\polylog}{\text{polylog}}
\newcommand{\PT}{\mathcal{PT}}
\newcommand{\PikCDec}{\mathsf{CDec}}
\newcommand{\val}{\mathsf{val}}
\newcommand{\bbN}{\mathbb{N}}
\newcommand{\bbR}{\mathbb{R}}
\newcommand{\bbE}{\mathbb{E}}
\newcommand{\cE}{\mathcal{E}}
\newcommand{\cL}{\mathcal{L}}
\newcommand{\cP}{\mathcal{P}}
\newcommand{\cS}{\mathcal{S}}
\newcommand{\cT}{\mathcal{T}}
\newcommand{\customlabel}[2]{%
   \protected@write \@auxout {}{\string \newlabel {#1}{{#2}{\thepage}{#2}{#1}{}} }%
   \hypertarget{#1}{#2}
}
\newcommand{\protocol}[3]{
    \stepcounter{figure}
    \vspace{0.15cm}
    { \small
    \begin{tcolorbox}[breakable, enhanced, colback=mangotango!10]
    \begin{center}
    {\bf \underline{Protocol~\customlabel{prot:#2}{\thefigure}: #1}}
    \end{center}
    
    #3
    \end{tcolorbox}
    }
}
\newcommand\numberthis{\addtocounter{equation}{1}\tag{\theequation}}
\newcounter{casenum}
\newenvironment{caseof}{\setcounter{casenum}{0}}{\vskip.5\baselineskip}
\newcommand{\case}[2]{
    \refstepcounter{casenum}
    \ifthenelse{\equal{\value{casenum}}{0}}{
    \vskip.5\baselineskip\par\noindent
    }{}
    {\it Case \arabic{casenum}:} {\it #1}
    \vskip0.1\baselineskip
    \begin{addmargin}[1.5em]{1em}
    #2
    \end{addmargin}
}
\newcounter{subcasenum}
\newenvironment{subcaseof}{\setcounter{subcasenum}{0}}{\vskip.5\baselineskip}
\newcommand{\subcase}[2]{
    \refstepcounter{subcasenum}
    \vskip.5\baselineskip\par\noindent 
    {\it Subcase \arabic{casenum}.\arabic{subcasenum}:} {\it #1} \vskip0.1\baselineskip
    \begin{addmargin}[1.5em]{1em}
    #2
    \end{addmargin}
}
\newcounter{casenumb}
\newenvironment{caseofb}{\setcounter{casenumb}{0}}{\vskip.5\baselineskip}
\newcommand{\caseb}[2]{
    \refstepcounter{casenumb}
    \vskip.5\baselineskip\par\noindent 
    {\bf Case \arabic{casenumb}:} {\it #1} \vskip0.1\baselineskip
    \begin{addmargin}[1.5em]{1em}
    #2
    \end{addmargin}
}
\newcounter{subcasenumb}
\newenvironment{subcaseofb}{\setcounter{subcasenumb}{0}}{\vskip.5\baselineskip}
\newcommand{\subcaseb}[2]{
    \refstepcounter{subcasenumb}
    \vskip.5\baselineskip\par\noindent 
    {\bf Subcase \arabic{casenumb}.\arabic{subcasenumb}:} {\it #1} \vskip0.1\baselineskip
    \begin{addmargin}[1.5em]{1em}
    #2
    \end{addmargin}
}
\begin{document}

%% TITLE AND ABSTRACT
\sloppy
\maketitle
\begin{abstract}
Given a noiseless protocol $\pi_0$ computing a function $f(x, y)$ of Alice and Bob's private inputs $x, y$, the goal of interactive coding is to construct an \emph{error-resilient} protocol $\pi$ computing $f$ such that even if some fraction of the communication is adversarially corrupted, both parties still learn $f(x, y)$. Ideally, the resulting scheme $\pi$ should be positive rate, computationally efficient, and achieve optimal error resilience.

While interactive coding over large alphabets is well understood, the situation over the binary alphabet has remained evasive. At the present moment, the known schemes over the binary alphabet that achieve a higher error resilience than a trivial adaptation of large alphabet schemes are either still suboptimally error resilient~\cite{EfremenkoKS20b}, or optimally error resilient with exponential communication complexity~\cite{GuptaZ22a}. In this work, we construct a scheme achieving optimality in all three parameters: our protocol is positive rate, computationally efficient, and resilient to the optimal $\frac16 - \epsilon$ adversarial errors.

Our protocol employs a new type of code that we call a \emph{layered code}, which may be of independent interest. Like a tree code, a layered code allows the coder to encode a message in an online fashion, but is defined on a graph instead of a tree. 

\end{abstract}
\thispagestyle{empty}
\newpage

%% TABLE OF CONTENTS
\tableofcontents
\pagenumbering{roman}
\newpage
\pagenumbering{arabic}

%% INTRO
\section{Introduction}
Interactive coding is an interactive analogue of error correcting codes \cite{Shannon48, Hamming50} that was introduced in the  seminal work of Schulman~\cite{Schulman92,Schulman93,Schulman96} and has been an active area of study since. While error correcting codes address the problem of sending a {\em message} in a way that is resilient to error, interactive coding addresses the problem of converting an {\em interactive protocol} to an error resilient one.

Suppose two parties, Alice and Bob, each with a private input, engage in a protocol $\pi_0$ to jointly compute a function $f$ of their private inputs. Given such a protocol $\pi_0$, can we design a protocol computing $f$ that is:
% \mnote{added the word efficient, idk if it's the right word but i think something of the sort is useful} protocol $\pi$ computing $f$ that is resilient to adversarial errors?
% In particular, can we design a protocol that is:
\begin{enumerate}[label={(\roman*)}]
    \item\label{itm:pos-rate} positive rate, i.e. $|\pi| = O(|\pi_0|)$ where $|\pi|, |\pi_0|$ denote the communication complexity of $\pi, \pi_0$,
    \item\label{itm:eff-dec} computationally efficient,
    \item\label{itm:opt-error} resilient to the maximal possible fraction of adversarial errors?
\end{enumerate}
The protocol should have a fixed number of rounds and speaking order. This parallels the notion of an efficiently encodable/decodable error correcting code with maximal distance.

The first positive rate interactive coding scheme, presented by Schulman~\cite{Schulman96}, was resilient to $\frac1{240}$\footnote{Whenever we say that a protocol has resilience $r \in [0, 1]$ in the introduction and overview, we mean that for any $\epsilon$, there exists an instantiation that achieves resilience $r - \epsilon$.} adversarial errors (bit flips) over the binary channel but is exponentially inefficient, thus satisfying \ref{itm:pos-rate} but not \ref{itm:eff-dec} or \ref{itm:opt-error}. Many works since then sought to improve upon this scheme in computational efficiency and/or error resilience.
% \mnote{remove this last sentence} \rnote{why} \mnote{I think it feels redundant, but weak opinion}

% Suppose two parties, Alice and Bob, each with a private input, engage in a protocol $\pi_0$ to jointly compute a function $f$ of their private inputs. \emph{Given such a protocol $\pi_0$, can we design a protocol $\pi$ computing $f$ that is resilient to adversarial errors, that incurs only a constant multiplicative overhead in the communication complexity?}\footnote{In other words, if $|\pi|, |\pi_0|$ denote the communnication complexities of $\pi, \pi_0$, we require also that $|\pi| = O(|\pi_0|)$. We will also say that our scheme has \emph{positive rate}.} Schulman \cite{Schulman96} answered the question in the affirmative, presenting a scheme that is resilient to $\frac{1}{240}$\footnote{Whenever we say that a protocol has resilience $r \in [0, 1]$ in the introduction and overview, we mean that for any $\epsilon$, there exists an instantiation that achieves resilience $r - \epsilon$.} adversarial errors (bit flips) over the binary channel. This work begs the natural question: \emph{what is the maximum error resilience possible?} This is precisely the focus of our work.

% \mnote{rewrote this paragraph a bit. I felt it needed to parallel the (i) (ii) (iii) setup a bit better, but the old versoin was fine too.}

When the encoding alphabet is \emph{large} constant sized, Braverman and Rao~\cite{BravermanR11} first studied the problem of optimal error resilience. They constructed a large alphabet protocol achieving $\frac14$ error resilience, which they also showed to be optimal. Unfortunately, their protocol did not achieve computational efficiency~\ref{itm:eff-dec}. Computationally efficient schemes were not known until the work of~\cite{BrakerskiK12}, who converted the $\frac14$-error resilient, inefficient protocol to an efficient one achieving only $\frac1{16}$ error resilience. Finally, the work of~\cite{GhaffariH13} attained the best of both worlds: they constructed a protocol that was simultaneously efficiently decodable and resilient to $\frac14$ error, thus satisfying all three criteria.

%Over large constant sized alphabets, optimal interactive coding is well understood. Braverman and Rao~\cite{BravermanR11} first studied the problem of optimal error resilience in this settting and constructed a large alphabet protocol achieving $\frac14$ error resilience. They also showed that $\frac14$ is optimal. Unfortunately, although their scheme is positive rate, its computational complexity (like Schulman's) is exponential. The works of~\cite{BrakerskiK12,GhaffariH13} addressed the problem of inefficiency, giving transformations that convert a $\frac14$-error resilient, inefficient protocol to an efficient one that achieves an error resilience of $\frac1{16}$ and $\frac14$, respectively. This resolved the problem of interactive coding over a large alphabet.

On the other hand, over the binary alphabet, optimal interactive coding has remained less well understood. By simply replacing every letter of a large alphabet with its binary encoding, the large alphabet protocols give rise to efficient, positive rate interactive coding schemes achieving an error resilience of $\frac18$. By contrast, the best known upper bound on error resilience is $\frac16$~\cite{EfremenkoGH16}. There are two works improving the error resilience beyond $\frac18$. The first is~\cite{EfremenkoKS20b}. Their protocol is resilient to $\frac5{39}$ error, and is positive rate but inefficient. The second is~\cite{GuptaZ22a}, which constructs a scheme achieving the optimal $\frac16$-error resilience. However, both the communication and computational complexity can be up to exponential in the length of $\pi_0$. It thus remained open whether there exists a scheme resilient to the maximal amount of error, while also being positive rate and efficient.

In this work, we construct precisely such a scheme. Our result, along with comparison to existing work, is given in Figure~\ref{fig:binary-schemes}.

\begin{theorem}
    For any $\epsilon > 0$ and any interactive binary protocol $\pi_0$ computing a function $f(x,y)$ of Alice and Bob's private inputs $x, y$, there exists a non-adaptive interactive binary protocol $\pi$ computing $f(x,y)$ that is resilient to $\frac16 - \epsilon$ adversarial erasures. The communication complexity is $O_\epsilon(|\pi_0|)$ and the computational complexity is $\tilde{O}_\epsilon(|\pi_0|)$.
\end{theorem}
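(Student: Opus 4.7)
The plan is to instantiate a block-based simulation framework (of the Ghaffari--Haeupler flavor) but to replace the tree code primitive, which yields only $\frac{1}{8}$ resilience efficiently, with the layered code introduced in this paper, thereby pushing resilience up to the optimal $\frac{1}{6}$. Concretely, I would partition $\pi_0$ into $\Theta(|\pi_0|/C)$ blocks of length $C = \Theta(1/\epsilon)$, and $\pi$ would proceed in iterations, each consisting of (i) simulating one block of $\pi_0$ according to each party's current best estimate of the transcript, and (ii) appending a short, layered-code-encoded symbol that summarizes the sender's current state---which iteration it believes itself to be in, together with a label of its estimated transcript so far.

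Each party maintains a partial transcript of $\pi_0$ and a pointer indicating how far the simulation has progressed, updated by decoding the received prefix of the layered code. The key feature is that the layered code permits online encoding (one symbol per iteration, as with a tree code) but lives on a graph rather than a tree, so that two syntactically different simulation histories can converge to the same logical state---this convergence is precisely what enables improving resilience from $\frac{1}{8}$ to $\frac{1}{6}$. Upon detecting disagreement, both parties follow a deterministic rewind rule keyed to the layered code's decoded output and return to the last iteration on which they agree. I would prove resilience via a potential-function argument: define the potential as the amount of mutually agreed simulated progress, show that each uncorrupted iteration increments it by one, and bound by a constant the amount the adversary can decrement it per unit of corruption spent, using the layered code's distance property to rule out ``ambiguous'' decodings. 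Matching this constant against the $\frac{1}{6}$ threshold, which is shown tight in~\cite{GuptaZ22a}, should yield the claimed resilience.

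The bulk of the technical work---and what I expect to be the main obstacle---is the layered code itself: one needs an efficient ($\tilde{O}(1)$-per-symbol) online encoder on a DAG whose paths satisfy an appropriate relative distance property, together with a decoder that, given a corrupted received stream, outputs the correct path up to a short corrupted suffix. Tree code constructions do not directly generalize to the graph-structured setting, and random constructions must be derandomized in order to preserve computational efficiency. Once the layered code and its guarantees are in hand, the block-based framework, rewinding mechanism, and potential-function analysis should proceed relatively standardly; the main remaining care is to arrange the speaking order and placement of consistency checks to be fixed in advance so that $\pi$ is non-adaptive as required.
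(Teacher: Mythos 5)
Your proposal inverts the paper's architecture and, more importantly, is missing the idea that makes the \cite{GhaffariH13}-style boosting compatible with the binary alphabet.

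First, the paper does \emph{not} place the layered code inside a block-based (Ghaffari--Haeupler) simulation as the primitive yielding $\frac16$ resilience. Instead, it first builds a standalone, positive-rate but \emph{exponentially inefficient} protocol (Protocol~\ref{prot:1/6}) which already achieves $\frac16-\epsilon$ resilience. That protocol follows the question/answer blueprint of \cite{GuptaZ22a}: each message is $\ECC(z,\delta)$ where $z$ is a pair of layered-code symbols and $\delta\in\{0,1,\rewind,?\}$; the $\frac16$ threshold arises from the distance structure of $\ECC$ (pairwise $\ge \tfrac12-\epsilon$ across different $z$, and $\ge\tfrac23$ across different $\delta$ with $z$ fixed), together with the ask/answer asymmetry and a carefully designed update rule. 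The layered code's role is narrower than you suggest: its ``sensitivity'' property lets a party decode the other party's entire transcript guess from $O(1)$ bits per round, and the graph (rather than tree) structure lets two parties who arrived at the same transcript via different update histories align their tree-code symbols, which is what allows a single message to double as both a question and an answer. It does not by itself confer $\frac16$ resilience, so a potential-function argument keyed to the layered code's distance would not reproduce the $\frac16$ threshold.

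Second, the block framework enters only afterward, purely to gain computational efficiency. And here is the crucial gap: \cite{GhaffariH13}'s boosting only produces a \emph{list} of candidate transcripts, and their step from list-decoding to unique-decoding uses a large-alphabet transformation that is unavailable over binary. The paper's way around this is the notion of a \emph{scaling} scheme (Definition~\ref{def:scaling}): the inner protocol outputs not just a transcript but a confidence that tracks (linearly in the corruption rate) how much error the adversary injected, in both directions. The boosting (Protocol~\ref{prot:boosting}) then does weighted voting by confidence, and one shows the correct leaf's net vote margin stays positive whenever the total corruption is under $\frac16$. Your proposal never introduces any analogue of this scaling property, and without it the boosting step either gives only a list, or forces an alphabet blowup --- either way failing to meet the theorem's requirements. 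Relatedly, you anticipate needing to derandomize the layered code for efficiency, but the paper does not do this: it accepts that the inner scheme runs in doubly exponential time and makes the overall scheme efficient by recursively boosting, so that the inner scheme is only ever applied to $\polylog(n_0)$-sized subproblems.

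In short, the two load-bearing ideas you would need but do not have are (i) a concrete mechanism for $\frac16$ resilience at the message level (the ask/answer design with the $\tfrac12$-vs-$\tfrac23$ distance $\ECC$), and (ii) the scaling property that upgrades boosting from list-decoding to unique-decoding without leaving the binary alphabet.
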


\begin{figure}[h!]
\centering
\begin{tabular}{|c|c|c|c|}
    \hline
    Protocol & Positive Rate? & Efficient? & Error Resilience \\
    \hline
    \cite{GhaffariH13} & yes & yes & $1/8$ \\
    \cite{EfremenkoKS20b} & yes & no & $5/39$ \\
    \cite{GuptaZ22a} & no & no & $1/6$ (optimal) \\
    This work & yes & yes & $1/6$ (optimal) \\
    \hline
\end{tabular}
\caption{Interactive coding schemes over the binary channel}
\label{fig:binary-schemes}
\end{figure}

% In this work, we show that the same error resilience of $\frac16$ is in fact achievable with positive rate \emph{and efficient computations}. That is, we construct a protocol $\pi$ with $|\pi| = O(|\pi_0|)$ and polynomial computational complexity that is resilient to $\frac16$ adversarial errors. This fully resolves the question of finding the optimal error resilience of interactive coding schemes over the binary alphabet.

\paragraph{Layered Codes.}
% \mnote{edited this section pretty heavily, but i left the old version too. we can talk about it and take the best of both, if you don't like the edits. Mainly, I don't think the right way to introduce the codes is "tree codes with $C(x,\delta)=C(y,\delta)$ for $x\sim y$," but rather as "the graph generalization of tree codes." The latter is somehow a better sell, even if they're the same?}
Our protocol crucially relies on a new type of code that we call a \emph{layered code}, which generalizes a tree code. Recall that tree codes~\cite{Schulman93,Schulman96} are error correcting codes that can be updated in an online manner: the $i$'th symbol in a codeword is dependent only on the first $i$ characters in the message. One can view a tree code as an assignment of code symbols $\Sigma_{code}$ to the edges of the infinite $|\Sigma_{mes}|$-ary rooted tree, where $\Sigma_{mes}$ is the alphabet of the message text. To encode a message $\in \Sigma_{mes}^*$, one simply follows the rooted path specified by the message and reads the code symbols off the edges. 
% For any same length messages $x$ and $y$, instead of the \emph{relative} distance between $\enc(x)$ and $\enc(y)$ being at least some minimum fraction $\alpha$, in a tree code the \emph{suffix} distance\footnote{The definition of suffix distance is provided in Definition 2.2 in \cite{Gelles-survey}.} is at least that minimum fraction. 

Instead of being defined on trees, layered codes are an assignment of $\Sigma_{code}$ to a certain kind of graph called \emph{layered graphs}. A layered graph is a directed graph where vertices are partitioned into layers such that there is only one vertex (the root node) in layer $0$, and each vertex in layer $i$ has out-edges labeled with $\Sigma_{mes}$ to vertices in layer $i+1$.\footnote{Note that tree codes are layered codes, so our notion of a layered code generalizes tree codes.} As with tree codes, to encode a message $\in \Sigma^*_{mes}$, one simply follows the rooted path specified by the message and reads the code symbols off the edges.

In the literature, tree codes with a variety of distance or decoding properties have been studied \cite{Schulman96, GellesMS11, BravermanE14}. In our protocol, however, we will need our layered codes to satisfy a certain new special property we call \emph{sensitivity}. Intuitively, sensitivity means that a corrupted layered code can be \emph{entirely} decoded correctly as long as the latest symbol was received correctly. More precisely, we show that:

% \begin{definition}[Informal]
%     A layered code $\PikC$ is \emph{sensitive} if for any string $w \in \Sigma_{code}^n$ and message text $x \in \Sigma_{mes}^n$, $w[1:i]$ uniquely decodes to $v(x[1:i])$ for almost every $i$ for which $w[i] = \PikC(x)[i]$. Here, $v(x[1:i])$ denotes the vertex at the end of the rooted path specified by $x[1:i]$.
% \end{definition}

% One could try to similarly define suffix distance and try to prove the existence of layered codes with large suffix distance. In our paper, however, we require layered codes with a different property than large suffix distance. Though this property appears quite different, it is related to suffix distance and is motivated by the construction of list tree codes of Braverman and Efremenko \cite{BravermanE14}. We call layered codes with this desired distance property \emph{sensitive layered codes}.

% We show that this sensitivity property can be obtained.

% \begin{theorem}
%     There exists a sensitive layered code.
% \end{theorem} \mnote{Though I like the idea of rewording to definition + theorem, i think we actually can't do this here because our definition of eps-sensitive later is different than this one. We shouldn't have two different definitions of the same word.}

\begin{theorem} [Informal]
    There exists a layered code (i.e. an assignment of labels to a layered graph) with the following property: for any string $w \in \Sigma_{code}^n$ and message text $x \in \Sigma_{mes}^n$, $w[1:i]$ uniquely decodes to $v(x[1:i])$ for almost every $i$ for which $w[i] = \PikC(x)[i]$. Here, $v(x[1:i])$ denotes the vertex at the end of the rooted path specified by $x[1:i]$.
\end{theorem}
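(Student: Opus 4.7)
The plan is to prove the theorem by a probabilistic construction generalizing the classical random-label argument for tree codes. I would first fix a layered graph $G$ with $L(i)$ vertices in layer $i$, where each vertex has one out-edge per symbol of $\Sigma_{mes}$, and the out-edges from layer $i$ merge into layer $i+1$ according to a carefully chosen rule. Then I would label each edge independently and uniformly at random from $\Sigma_{code}$ to obtain a candidate layered code $\PikC$. The natural decoding rule for a received prefix $w[1:i]$ is minimum Hamming distance over root-to-layer-$i$ paths, so that ``$w[1:i]$ uniquely decodes to $v(x[1:i])$'' means the minimum is attained only by a path ending at $v(x[1:i])$.

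Given this setup, the second step is to show that for every fixed pair $(x, w)$, the set of sensitivity failures — positions $i$ with $w[i] = \PikC(x)[i]$ where the decoding does not return $v(x[1:i])$ uniquely — is small. For such a failure to occur at position $i$, there must exist some alternate vertex $v' \neq v(x[1:i])$ in layer $i$ together with a root-to-$v'$ path whose labels agree with $w[1:i]$ at least as well as the true path. Using the independence of the edge labels, I would bound the probability of the existence of such a $v'$ by $\exp(-\Omega(k))$ where $k$ measures the ``slack'' of agreement required. Summing over the $L(i)$ candidate alternate vertices, and applying Chernoff-style concentration over positions $i$, should give a tail bound on the failure count of the form $\Pr[F(x,w) > \alpha n] \leq \exp(-\Omega(n))$, for suitable $\alpha$.

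The main obstacle is making the union bound over $(x, w) \in \Sigma_{mes}^n \times \Sigma_{code}^n$ succeed: the doubly exponential number of pairs forces the per-pair failure probability to be $\exp(-\Omega(n))$ with a large enough constant. This demands tight control over the dependencies between different root-to-layer-$i$ paths, which — unlike in a tree code — share edges in $G$, so a naive Chernoff bound does not apply. I expect the crux of the proof to be engineering the layer sizes $L(i)$ and the inter-layer connectivity so that any two paths to distinct layer-$i$ vertices diverge early enough that independence of their random labels can be exploited, while simultaneously keeping $L(i)$ small enough that the number of alternate vertices to rule out remains manageable. Balancing these constraints against $|\Sigma_{code}|$ and the tolerated failure fraction is where the real work of the proof lies, and it is plausible that the argument also requires a more sophisticated potential-function or block-decomposition analysis to handle the correlations cleanly.
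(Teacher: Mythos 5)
Your proposal parts ways with the paper at two crucial points, and I believe both create genuine gaps.

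\textbf{The decoding criterion.} You propose minimum Hamming distance decoding over the full prefix $w[1:i]$. This cannot yield the stated property: the hypothesis ``$w[i] = \PikC(x)[i]$'' allows essentially \emph{all} of $w[1:i-1]$ to be corrupted, in which case the true path has Hamming distance close to $1$ and minimum-distance decoding will generically return something else. The point of the theorem is that receiving the \emph{latest} symbol correctly should suffice, which means recent positions must carry more weight than old ones. The paper's decoder instead returns the unique vertex reachable by a path at \emph{suffix distance} $< 1-\epsilon$ from $w[1:i]$. Coupled with an elementary counting fact (at most an $\epsilon$ fraction of the indices $i$ with $w[i]=\PikC(x)[i]$ have $\Delta_{sfx}(\PikC(x)[1:i], w[1:i]) \ge 1-\epsilon$; this is Lemma~\ref{lem:dec-freq}), this guarantees the true path is in the candidate list for almost every such $i$; the remaining work is proving near-uniqueness, which is where the code property enters. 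Without suffix distance, the statement is false regardless of how you label.

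\textbf{The union bound structure.} You propose unioning over $(x,w) \in \Sigma_{mes}^n \times \Sigma_{code}^n$, which is doubly exponential, and you correctly identify that overlapping root-to-layer-$i$ paths in a non-tree layered graph create correlations that block a naive Chernoff bound. But your suggested remedy --- engineering $L(i)$ and the inter-layer connectivity so that paths ``diverge early enough'' --- is unavailable: the layered graph is given by the application (the transcript graph), and the theorem is supposed to hold for an arbitrary layered graph. The paper sidesteps both issues simultaneously by changing \emph{what} is unioned over. It defines $\epsilon$-sensitivity as a bound on the agreement $agr(\PikC(PT), w(PT)) \le (1+\epsilon)n$ over all received words $w$ and all \emph{prefix trees} $PT$, and unions over $(w, PT)$ rather than $(x, w)$. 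Rooted subtrees of size $s$ number at most $(|\Sigma_{in}|+1)^{2s}$ (Lemma~\ref{lemma:numbersubtrees}), so this union is only singly exponential. And on a fixed subtree with a fixed labeling $w$, the agreement count is a sum of independent per-edge indicators, so Lemma~\ref{lemma:prob-agr-on-tree} gives a clean tail bound without any geometry-dependent independence argument. The technical heart that makes this work --- and that your proposal has no analogue of --- is Lemma~\ref{lemma:prefixtree-S}: even though paths in $G$ merge, one can always choose, for each candidate decoded vertex, a single root-to-vertex path of small suffix distance such that the union of the chosen paths is a tree. That lemma is what converts ``bound the list size'' into ``bound the agreement on a subtree,'' and it is the step that resolves the correlation obstacle you flagged. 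Without it, and with a $(x,w)$ union bound, I do not see how to close the argument.
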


Layered codes may be of independent interest, beyond the application to our protocol. One might also want to generalize more of the study of tree codes to the graph setting. We leave this as an open topic, and discuss this further in Section~\ref{sec:gcode-discussion}.

\subsection{Related Work}

Our work relates primarily to the fields of interactive coding and tree codes.
Besides the works we have already discussed, we mention the following related works.

\subsubsection{Interactive Coding}
Non-adaptive interactive coding (when the protocol is fixed length and fixed speaking order) was studied starting with the seminal works of Schulman~\cite{Schulman92,Schulman93,Schulman96} and continuing in a prolific sequence of followup works, including~\cite{BravermanR11,Braverman12,BrakerskiK12,BrakerskiN13,Haeupler14,BravermanE14,DaniHMSY15,GellesHKRW16, GellesH17,EfremenkoGH16,GhaffariH13,GellesI18,EfremenkoKS20b,GuptaZ22a}.

We note that there are many other works studying variations upon this original interactive coding setup, including adaptive and multi-party schemes. We refer the reader to an excellent survey by Gelles~\cite{Gelles-survey} for an extensive list of related work. 

\paragraph{Other binary schemes resilient to $\frac16$ error.} 
\cite{EfremenkoGH16} studies interactive coding over the \emph{feedback} channel. Over the feedback channel, Alice and Bob are given the extra power to know, instantly, what the other party received at the other end of the channel when they send a message. In this setting, \cite{EfremenkoGH16} constructs a positive rate, efficient protocol resilient to $\frac16$ error, which is optimal in the feedback setting as well. By contrast, we achieve $\frac16$-error resilience with positive rate in the standard setting \emph{without} feedback.
% In this context, our result that there exists a positive rate, $\frac16$-resilient protocol in the plain model shows that the ability to know what messages are received by the other party actually grants \emph{no additional power}.

The protocol of \cite{EfremenkoGH16} relies on feedback for a ``guess'' of the transcript so far, and then the party responds according to whether or not they agree with this guess. The protocol of \cite{GuptaZ22a} (achieving $\frac16$ error resilience in channels without feedback, but inefficiently) also uses this idea, however providing (unreliable) feedback through future messages instead. One step in our protocol uses this idea as well, following the blueprint of the construction in \cite{GuptaZ22a}.

\paragraph{Efficiency.}
We also mention the work on obtaining interactive protocols that are \emph{efficient}: protocols where Alice and Bob can compute their next message and output their final answer in polynomial time. While Braverman and Rao's protocol~\cite{BravermanR11} is resilient to $\frac14$ corruption over a large alphabet and incurs only a constant blowup in communication complexity, the parties' computational efficiency incurs exponential blowup. 

The work of \cite{GhaffariH13} which draws inspiration from \cite{BrakerskiK12} addresses this problem. They provide an algorithm which takes a protocol and ``boosts'' it, lowering the computational complexity while increasing the alphabet size. We use a similar method to make our protocol computationally efficient while avoiding the alphabet blowup.

\subsubsection{Tree codes.}

Tree codes were first introduced by Schulman~\cite{Schulman93,Schulman96} and have been studied since in a variety of works~\cite{GellesMS11, Braverman12, MooreS14, FranklinGOS15, BravermanGMO15, Pudlak16, CohenHS18, BenyaacovCY21}.
Tree codes are a key ingredient in achieving constant rate interactive coding schemes. They also have important uses as streaming codes for both Hamming errors \cite{FranklinGOS15} and synchronization errors \cite{BravermanGMO15,HaeuplerS21}. Recently, there has been work towards finding explicit tree codes with a constant sized alphabet that are efficiently decodable and encodable \cite{CohenHS18, BenyaacovCY21}.

We specifically mention the concept of list tree codes introduced in \cite{BravermanE14}, which are the list-decoding analogue of error correcting codes in the tree code setting. Our concept of sensitive layered codes generalize and strengthen Braverman and Efremenko's definition of list tree codes.

% \paragraph{Error Correcting Codes with Feedback.}
% Outside of interactive coding, optimal error resilience of an interactive protocol has been studied in the context of \emph{error correcting codes with feedback}, first introduced by~\cite{Berlekamp64}. In an error correcting code with feedback, Alice's goal is to communicate a message to Bob in an error resilient way. For every message she sends, Bob sends her uncorrupted feedback that is not counted towards the adversary's budget.

% \cite{Berlekamp68,Zigangirov76,SpencerW92} showed that in the bit flip error model, the maximal error resilience of an error correcting code with feedback over a binary alphabet is $\frac13$. In other words, in any binary protocol where Alice is trying to convey some information to Bob, as long as the adversary corrupts fewer than $\frac13$ of Alice's messages, Bob learns Alice's input successfully. Assuming that Alice and Bob speak equal amounts, this translates to a binary protocol resilient to $\frac16$ corruptions --- as long as all corruptions are to Alice's messages. 

% Our $\frac16$ protocol can be seen as a strengthening of optimal error correcting codes with feedback: Now, the interactive protocol between Alice and Bob is resilient to $\frac13$ corruptions to either party or more generally to any corruption pattern totaling to less than $\frac16$ of the total communication. In particular, we show that it is possible for both parties to \emph{simultaneously} communicate information about their input and give feedback. 

\section{Technical Overview}

We begin by recalling at a high level the binary protocol of~\cite{GuptaZ22a}, which achieves optimal error resilience $\frac16 - \epsilon$, but whose communication complexity is quadratic in the input lengths.

Suppose Alice and Bob have private inputs $x, y \in \{ 0, 1 \}^n$. Consider the task of \emph{message exchange}, where the goal is for Bob to learn $x$ and for Alice to learn $y$. The protocol of~\cite{GuptaZ22a} is a $(\frac16-\epsilon)$-error resilient protocol achieving message exchange, where the communication complexity is $O_\epsilon(n^2)$.

The protocol works as follows. Alice and Bob each keep a track of a guess $\hat{y}$ or $\hat{x}$ for the other party's input, initially set to $\emptyset$, and a weight $w_A$ or $w_B$ indicating their confidence for their guess $\hat{y}$ or $\hat{x}$ respectively, initially set to $0$. 

The idea is that Alice can ask a \emph{question} by sending Bob her guess $\hat{y}$ encoded in an error correcting code. Bob can then send her an \emph{answer} telling her how to update $\hat{y}$ to bring it closer to his actual input $y$: append $0$ ($0$), append 1 ($1$), delete the last bit ($\rewind$), or ``bingo -- you got it right!'' ($*$). (This last instruction $*$ tells Alice to increase $w_A$. If Alice receives an instruction to modify $\hat{y}$ while $w_A > 0$, she decreases $w_A$ by $1$ instead.) Since Bob's answer is always one of four options, his possible answers can be made to be relative distance $\frac23$ apart (e.g. $000, 011, 101, 110$), so that the adversary would have to corrupt $\ge \frac13$ of Bob's bits sent (or $\frac16$ overall) to prevent Alice from making good updates to $\hat{y}$ (i.e. updates that get $\hat{y}$ closer to $y$).

Now, since both Alice and Bob have to learn the other's input, Alice and Bob \emph{simultaneously} ask a question and answer the other party's last question. In other words, Alice's message is always of the form $\ECC(\hat{y}, x^*, \delta)$, where $x^*$ is the question she just heard from Bob and $\delta$ is the instruction on how to update $x^*$ to bring it closer to $x$. Similarly, Bob's message is always of the form $\ECC(\hat{x}, y^*, \delta)$. Here, $\ECC$ is a code with certain distance properties, including that for any $x',y'$ the four codewords $\{ \ECC(x', y', 0), \ECC(x', y', 1), \ECC(x', y', \rewind), \ECC(x', y', *) \}$ should be pairwise relative distance $\frac23$ from each other. 

However, there are two problems with this current algorithm:
\begin{enumerate}[label={(\alph*)}]
    \item \label{item:problem1} The adversary can simultaneously corrupt both the question and answer in Bob's message $\ECC(\hat{x}, \hat{y}, \delta)$ by only corrupting $\frac12$ of the message, so that Alice receives an incorrect answer and thus makes a bad update for only $\frac12$ cost.
    \item \label{item:problem2} The adversary can partially corrupt Bob's message (so that the message Alice receives is not any codeword), so Alice does not know what question to answer.
\end{enumerate}

The algorithm of \cite{GuptaZ22a} fixes these problems with two additional rules.

\begin{itemize}
    \item When Alice receives a message $\ECC(x', \hat{y}, \delta')$, she usually only updates with probability $0.5$. However, if $x'=x$ (i.e. Bob has already figured out her input), she updates with probability $1$.
    \item When Alice receives a partially corrupted message where she cannot determine what question to answer, she defaults to sending $\ECC(\hat{y}, x, *)$. Correspondingly, when Bob receives any message $\ECC(y', x', *)$ where the update instruction is $*$, he updates $\hat{x}$ to be closer to $x'$. 
\end{itemize}

Both these new rules require one important fact: that Alice knows what Bob's correct output ought to be (her input $x$).
% \mnote{this isn't the only issue with getting to $|\pi_0|^2$. The other issue is preventing too much negative progress by corrupting both $x'$ and $y'$ in a message. In particular, the first sentence of this paragraph is misleading because our framework in [GZ22] wouldn't work even for only full corruptions. I think there are 2 separate major issues with the old protocol -- 1) need to update more strongly when the other person is correct and have these all be 2/3 apart 2) not having a default answer.}
For us, we will be simulating a noiseless protocol $\pi_0$ where the final transcript depends on both parties' private inputs, so that neither Alice nor Bob knows what the correct final transcript ought to be. This is the main barrier to making the protocol of~\cite{GuptaZ22a} run in time $O_\epsilon(|\pi_0|^2)$ as opposed to in time $O_\epsilon(n^2)$. 
%We will need to remove both uses of this special update.

\subsection{Obtaining Communication Complexity $O_\epsilon(|\pi_0|^2)$} \label{sec:overview-pi^2}

The first modification we will make is to create an interactive coding scheme that can simulate general protocols, instead of just message exchange, in quadratic time. By doing this, we will obtain a protocol with communication complexity $O_\epsilon(|\pi_0|^2)$ instead of $O_\epsilon(n^2)$.

At a high level, in our protocol, in each message Alice and Bob either asks a question \emph{or} answers a received question, \emph{but not both}.
%\footnote{This technically won't be entirely true: as we discuss later, in the special case that a party's question is the same as the question they are answering, we will need them to both ask and answer.} 
This is as opposed to the protocol of~\cite{GuptaZ22a}, in which question asking and answering are always done simultaneously. We remark that this removes issue~\ref{item:problem1} with the~\cite{GuptaZ22a} protocol, since now answers no longer have a question component so that all possible answers $\{ \ECC(r^*, 0), \ECC(r^*, 1), \ECC(r^*, \rewind), \ECC(r^*, \bullet) \}$ to the same question $r^*$ are distance $\frac23$ apart.
% We remark that this removes the first need for the special update, since it removes the attack where the adversary can corrupt a correct answer to a wrong answer using only $\frac12$ corruption to the message by corrupting the question and answer simultaneously: since answers no longer have a question component, now all possible answers are distance $\frac23$ apart.

More concretely, Alice and Bob each keep track of a guess for the complete noiseless transcript, denoted $T_A$ or $T_B$ respectively, along with a weight $w_A$ or $w_B$ signaling how confident they are that the current transcript guess is correct. We have that $w = 0$ unless the corresponding transcript guess $T$ is complete, meaning $|T| = |\pi_0|$. Alice's transcript guess $T_A$ always has odd length, i.e. she is the last to speak, unless $T_A$ is a complete transcript or is the empty transcript. Similarly, Bob's transcript guess $T_B$ always has even length. Let $\cT$ denote the noiseless transcript, so that the goal is for Alice and Bob to have $T_A = T_B = \cT$ by the end of the protocol. In what follows, we describe the protocol from Alice's point of view, but Bob's behavior is equivalent.

Every round, Alice sends a message of the form $\ECC(T, \delta \in \{ 0, 1, \rewind, ? \})$, where $\delta = ?$ signals that she is asking a question and $\delta \in \{ 0, 1, \rewind \}$ signals that she is answering a question. Specifically, when Alice asks a question, she sends $\ECC(T_A, ?)$. She answers a question $T^*_B$ by sending $\ECC(T^*_B, \delta)$, where $\delta \in \{ 0, 1, \rewind \}$ is 
\begin{itemize}
    \item $\rewind$ if $T^*_B$ is not consistent with her own behavior on input $x$.
    \item her next message $0$ or $1$ given the consistent transcript prefix $T^*_B$ (if $T^*_B$ is a complete transcript, then her next message is just $1$).
\end{itemize} 
% In particular, the second position in the $\ECC$ signifies whether Alice is asking a question ($?$) or answering ($0, 1, \rewind$).
Here, $\ECC$ is a code satisfying that for any $T^*$ the four words $\ECC(T^*, 0)$, $\ECC(T^*, 1)$, $\ECC(T^*, \rewind)$, $\ECC(T^*, ?)$ have relative distance $\frac23$ and all other pairs of codewords are relative distance $\frac12$ apart. Such a code was shown to exist in~\cite{GuptaZ22a}.
%Similarly, in order to answer a question $T^*_A$ with the instruction $\delta \in \{ 0, 1, \rewind \}$, Bob sends $\ECC(T^*_A, \delta)$. 

Alice determines whether to ask or answer based on the message she just received:
\begin{itemize}
    \item As long as she receives an answer (not necessarily to the question she previously asked), she asks a question.
    \item Whenever Alice receives a question, she answers it. There is an exception, which is when the question received is a complete transcript consistent with Alice's own input $x$. In this case, Alice asks her own question. This mechanism allows Alice and Bob to switch who is asking vs. answering once the asking party has made sufficient progress and now knows $\cT$.
\end{itemize}
%Suppose Alice is asking a question. As long as she gets an answer from Bob (not necessarily to her own question), she continues asking a question. Meanwhile, whenever Bob receives a question, he answers it. \mnote{is this not just a restatement of the bullet points?} This way, as long as the adversary corrupts less than $\frac13$ of Bob's answers, Alice will make progress towards learning the correct transcript $\cT$. \mnote{i'm kind of confused by this sentence. In general i think this paragraph is intending some purpose that seemed very out of context.}
% \mnote{Also mention that if they get an answer even when they didn't ask a question previously, they ask a question. I'd maybe bullet point this paragraph and make it more like a clear rule set?}

Furthermore, every time Alice receives a message from Bob, she needs to update $(T_A, w_A)$ accordingly:
\begin{itemize}
    \item When she receives an answer to her question $\ECC(T_A, \delta\in \{0,1\})$, she concatenates $\delta$ and her resulting next message to the end of $T_A$. (If $T_A$ is a complete transcript, she instead increments $w_A$.) 
    \item If she receives $\ECC(T_A, \rewind)$, assuming $w_A = 0$ she deletes the last two messages (one of hers and one of Bob's) from $T_A$, and otherwise if $w_A > 0$ she simply decreases $w_A$ by $1$. 
    \item If she receives a question $\ECC(T^*_B,?)$ from Bob, where $T^*_B$ corresponds to a complete transcript that is consistent with her input $x$, she updates $T_A$ to be one step closer to $T^*_B$ with $0.5$ probability. 
    
    There is an exception to this rule, which is when $T^*_B = T_A$. This can only happen if $T^*_B = T_A$ is either $\emptyset$ or a complete transcript, as in general $T_A$ is of odd length and $T_B$ is of even. In this case, with probability $1$ instead of $0.5$, Alice increases her weight $w_A$ on the transcript $T_A$ by $1$. This is because when $T_A = T_B = \cT$, we want both Alice and Bob to make more progress simultaneously.\footnote{The potential function we care about is $[\text{Alice's progress}] + \min\{ [\text{Bob's progress}], |\pi_0| \}$, so once Bob's progress is $\ge |\pi_0|$ signaling that $T_B = \cT$, we need Alice to be updating with probability $1$ each time she correctly receives Bob's message.} Similarly, Bob also needs to be updating with probability $1$ whenever he receives a question from Alice equal to $T_B$. 
    %Thus, both parties effectively need to be both asking and answering each round.} So, if Alice receives a question $\ECC(T^*_B, ?)$ where $T^*_B = T_A$ is a complete transcript, she \emph{both} asks her question \emph{and} answers. In particular, if $T_A = T_B$, we'll have Bob regard Alice's next question $\ECC(T_A = T_B, ?)$ also as an answer instructing Bob to append $1$ to the end of $T_B$. In other words, in this particular case, $?$ doubles as a question and an answer of $1$.
    
    \item Otherwise, she does not update $T_A$ or $w_A$.
\end{itemize}

So far, we have described the protocol when the parties receive full codewords. When messages are \emph{partially corrupted} so that the received message is not a codeword, a party will default to asking a question with probability proportional to the distance from the nearest codeword, and otherwise employ the above behavior. This addresses issue~\ref{item:problem2}. We remark that the default message being a question is the second idea that allows us to escape from needing for Alice and Bob to know what the other party's output ought to be, since instead of defaulting to sending the answer $(x, *)$ or $(y, *)$ one now defaults to asking a question.
% \mnote{I like tying back to the issues with the original; also address why this addresses the second problem; specifically by not including a quesiton while you're answering, the adv can't corrupt an answer with only 1/2 corruption by only messing with the question part of it. Maybe put this part earlier?}

% \mnote{there is an (important) update that is not discussed here: if Bob gets $T^*_A$ which is complete/consistent: then with p=0.5, he makes a positive update and \emph{asks a question} rather than answering.} \rnote{er i thought i did that..}

% \mnote{apart from the last paragraph, there isn't much explanation of why this protocol specifically leads to 1/6. A simple thing: the reader might not have the familiarity to really understand that basically we want to make distance 2/3 "everywhere it matters" to get 1/6, so maybe directly say how these design choices are helping get 1/6 early? I'm not sure if this is necessary or not, but just writing a comment anyway. A more difficult issue: it's not actually explained why all these complicated updates are comprehensive/lead to \emph{all} attacks being 1/6 resilient. In our old protocol i feel like we had a pretty good intuitive explanation even if the formal analysis was long. We don't really have one here (though also may just not be worth/too long).}

\subsection{Reducing the Communication Complexity to $O_\epsilon(|\pi_0|)$} \label{sec:overview-graph-codes}

Now that we have an optimally error resilient interactive coding scheme that can simulate protocols with $O_\epsilon(|\pi_0|^2)$ communication complexity, the next step is to reduce the communication complexity to $O_\epsilon(|\pi_0|)$. 

Currently, the quadratic factor in the communication complexity arises because we need $O_\epsilon(|\pi_0|)$ rounds to simulate the protocol, and in each round the parties are sending either their transcript guess or the transcript guess they are answering, both of which takes $O_\epsilon(|\pi_0|)$ bits. If we could reduce the amount of communication needed to send a transcript guess to $O_\epsilon(1)$, then we could achieve our desired $O_\epsilon(|\pi_0|)$ total communication.

Consider first the task of a party sending their own transcript guess as a question such that each message is only $O_\epsilon(1)$ bits. The traditional solution for this problem in interactive coding is to use \emph{tree codes}~\cite{Schulman93,Schulman96}, which are essentially error correcting codes that one can update in an online way. In our setting, since a new transcript guess is a two-bit modification of the last transcript guess, we can have Alice and Bob track a sequence of updates $U_A, U_B \in \{ 0, 1, \rewind, \bullet \}^*$ they have made to obtain their current transcript guess, where $\bullet$ is a placeholder update that simply means ``do nothing.'' Then, the question asker will send just the next two symbols of a tree code encoding of $U_A$ or $U_B$, which will take $O_\epsilon(1)$ bits per round. The receiver can then decode the entire history of received messages to determine the sequence of updates, which will allow them to determine the transcript being asked.

In our $O_\epsilon(|\pi_0|^2)$ protocol, we had the property that for Alice to successfully decode the asked transcript, she only needed to receive the last message (which contained the entire asked transcript) correctly. However, in a traditional tree code, even if Alice received the last message correctly, she cannot decode the message history if she received a high fraction (specifically more than half) of the previous messages incorrectly. In this paper, we present a new notion of \emph{sensitive} tree codes that in fact satisfy a stronger property, that for all but $\epsilon |w|$ indices $i$ where $w[i] = LTC(x)[i]$, it in fact holds that decoding $w[1:i]$ will \emph{uniquely} give $x[1:i]$. This essentially means that Alice only needs to receive the previous symbol of a sensitive tree code correctly to determine the entire message so far.\footnote{Sensitive tree codes can also be thought of as codes where the message can (usually) be decoded uniquely as long as the suffix distance to the original codeword is at most $1-\epsilon$. Previous results only guaranteed a message could be decoded correctly when the suffix distance was $\frac12-\epsilon$ to the original codeword; for example Lemma 2.3 in \cite{Gelles-survey}.}

Our notion of sensitive tree codes follows a similar construction as \emph{list tree codes}, introduced by Braverman and Efremenko~\cite{BravermanE14}. These are codes which guarantee that there is on average some constant number of ways to decode a random prefix of a string $w$. What we show is that this constant can actually be made $1$.

Still, we need answers to have message size $O_\epsilon(1)$ as well. To achieve this, we make the following modification to the answer format. Instead of sending $\ECC(T^*, \delta)$, which has size $O_\epsilon(|\pi_0|)$, a party who wishes to answer the transcript specified by the sequence of operations $U^*$ instead sends $\ECC(\sigma, \delta)$, where $\sigma$ is the last two symbols in the list tree code encoding of $(U^* || \bullet\bullet)$.

There is still one case where the new protocol is not analogous to the one from Section~\ref{sec:overview-pi^2}. In the protocol from Section~\ref{sec:overview-pi^2}, when Alice is asking the same transcript $T'$ that she is answering, she sends $\ECC(T',?)$ as a question. Bob will notice that $T'$ happens to be the same as the question he asked, and update with probability $1$. In some sense, this message gives Alice the benefits of both asking and answering a question. However, in the new setup, in order to ask a question, Alice has to send the last two symbols of the encoding of $U_A$, but in order to answer $U^*_B$ she has to send the last two symbols of $U^*_B$. The issue is that these symbols may not be the same, even if $U_A$ and $U^*_B$ correspond to the same complete transcript $T'$.

This leads us to define a new sort of online-updatable code, where if two histories correspond to the same transcript, even if the histories themselves are different, the next tree code encoding of a given edge is the same. This requires defining a code on a particular graph rather than on trees.

\subsection{Codes on Graphs}

Consider the rooted $|\Sigma_{in}|$-ary tree {\small\cursive{T}}. A sequence of symbols $\in \Sigma_{in}$ can be associated with a rooted path of {\small\cursive{T}} in the natural way. A sensitive tree code is then an assignment of symbols in $\Sigma_{out}$ to the edges of {\small\cursive{T}}. To encode a string $x \in \Sigma_{in}^k$, one simply traverses the corresponding rooted path and writes down the symbols seen. This gives an encoding $\in \Sigma_{out}^k$. 

The problem with using sensitive tree codes for our purposes is that Alice may have followed one path to get to the correct transcript $T_A = \cT$ while Bob followed another to get to $T_B = \cT$. Then, the next edge for Alice is different then the next edge for Bob, which means that one cannot hope to coincide sending the next symbol of one's own tree code with answering the other's.

Our key observation is that the encoding of the next symbol depends only on the transcript so far, not the full history of symbols.
% we don't actually need to remember \mnote{remember is a bit ambiguous, preferably something to the effect of: the encoding of the next bit only needs to depend on the transcript so far, not the full history of updates.} the full history of updates that were done to get to a current transcript guess, just the value of the current transcript guess! 
So, we can actually coincide all nodes of {\small\cursive{T}} that lead to the same transcript. We define the following graph.

\paragraph{The Graph.} 

The graph $G$ that we will be interested in is defined as follows:
\begin{itemize}
\item 
    $G$ is a directed graph with vertices partitioned into layers $1,2,\ldots$. In the $i$'th layer, there is a vertex for each possible transcripts of length $\le i$. In particular, there is one vertex in the $0$'th layer, namely, the empty string.
\item 
    We set $\Sigma_{in} = \{ 0, 1, \rewind, \bullet \}$ to be the possible update instructions, where $\bullet$ means simply ``do nothing.'' Each vertex in the $i$'th layer has $4$ children in the $(i+1)$'th layer, corresponding to the $4$ resulting transcripts obtained by applying an instruction in $\Sigma_{in}$ to the vertex's associated transcript. 
\end{itemize}

Note that any sequence of updates $\in (\Sigma_{in})^*$ corresponds to a rooted path in $G$. Furthermore, any two equal length sequences of updates that result in the same transcript end at the same node.

\paragraph{The Code on $G$.}
We define a \emph{layered code} to be an assignment of elements of $\Sigma_{out}$ to the edges of $G$. Then, to encode $x \in (\Sigma_{in})^*$, one simply follows the path specified by $x$ and records the $|x|$ symbols seen on the edges.

We will use a specific layered code $\PikC$ that exhibits the same behavior as the sensitive tree codes we defined in Section~\ref{sec:overview-graph-codes}. We call these codes \emph{sensitive layered codes}. In particular, the property we want is that for all but $\epsilon |w|$ indices $i$ where $w[i] = \PikC(x)[i]$, decoding $w[1:i]$ gives a unique \emph{vertex} (i.e. transcript guess) equal to the vertex at the end of the rooted path specified by $x[1:i]$.

We will not go into depth how such to prove the existence of such a code here, but instead refer the reader to Section~\ref{sec:layered-graph-codes} for a comprehensive discussion. While much of our construction and proofs are motivated by the list tree codes of~\cite{BravermanE14}, we remark that there are several subtleties that need to be carefully addressed. 

% an explicit definition of list layered codes, proof of existence, and proof that they have the property above. We will remark that the definition and proofs are similar to that of list tree codes in \cite{BravermanE14}, but there are several subtleties that need care in addressing. For the knowledgeable reader, the changes include defining prefix trees spanning \emph{all subsets} of prefix decodings of $w$ instead of just considering a single prefix tree from all decodings, and requiring that there isn't too much agreement between $\PikC$ and $w$ on \emph{all} these prefix trees. One has to be careful and show that prefix trees exist for any subset of prefix decodings (since we are now working with a graph, which means that there are loops, and we want to avoid this). \rnote{ok maybe this paragraph is bad lol, but we do want to say how this definition and proving properties is nontrivial}

\subsection{Boosting to Achieve Computational Efficiency}

Thus far, we have described how to obtain an interactive coding scheme that is resilient to $\frac16 - \epsilon$ error and has communication complexity linear in the size of the original protocol. Unfortunately, since decoding our sensitive layered code is inefficient (in fact, takes exponential time), this means that the computation needed by both parties is exponential in $|\pi_0|$. Thus, the final needed component is a way to make our scheme efficiently computable.

Over a large alphabet, an efficiently computable, positive rate scheme that is maximally error resilient was constructed by~\cite{GhaffariH13}. They obtained this efficient scheme in two steps: first by \emph{boosting} a known inefficient, exponential-time scheme~\cite{BravermanR11} to obtain an efficient protocol with a list-decoding guarantee, and second by applying a transformation that takes a list-decoding protocol to a unique-decoding protocol. We remark that this second transformation crucially relies on using a large alphabet and thus will not be permittable for us.

The boosted list-protocol is obtained as follows. First, they split up their original noiseless protocol into $\log^4 |\pi_0|$ size chunks. Then, they use their inefficient scheme to simulate the following noiseless subprotocol $O_\epsilon(\frac{|\pi_0|}{\log^4 |\pi_0|})$ times: 
\begin{itemize}
    \item Alice and Bob first find the longest transcript they have both simulated so far. This takes $O(\log^4 |\pi_0|)$ rounds. 
    \item Next, they run the next chunk of $\log^4 |\pi_0|$ rounds of the noiseless protocol.
\end{itemize}
Whenever a simulated subprotocol results in a completed transcript, that complete transcript obtains a vote. At the end, they show that as long as there was not too much corruption, the correct transcript must be one of the transcripts with the most votes (i.e. each party obtains a list of possible transcripts containing the correct one). Note that this results in a protocol with computational complexity $O_\epsilon(\frac{|\pi_0|}{\log^4 |\pi_0|}) \cdot \exp (\log^4 |\pi_0|) = \exp(\polylog |\pi_0|)$ time, which is considerably better than $\exp(|\pi_0|)$. Recursively boosting a second time gets the computational complexity down to $\poly(|\pi_0|)$. A third time reduces the computational complexity to $\tilde{O}_\epsilon(|\pi_0|)$.

% Next, in order to narrow down to the correct transcript, they apply a second transformation, running the above boosted list-protocol \emph{simultaneously} with sending an error correcting encoding of all the possible transcripts \mnote{all of which possible transcripts?}, many times. The point is that if there are $< \frac14$ errors in an iteration, the correct transcript must now be in a party's list, and after that the other party receives messages that are $< \frac14$ away from encodings of lists containing the correct transcript (whereas the distance to encodings of any lists containing an incorrect transcript is $> \frac14$). Note that if the alphabet for the efficient list-protocol is $\Sigma_1$ and the alphabet for the error correcting code is $\Sigma_2$, then the new alphabet is $\Sigma_1 \times \Sigma_2$. Thus, this transformation crucially relies on the fact that they're working with a large alphabet.

\cite{GhaffariH13}'s second step is to apply a transformation that takes a list-decoding protocol to a unique decoding protocol, incurring a blowup in the alphabet size. Since we are working over a binary alphabet, we cannot afford to apply this same second transformation. Instead, we notice that our inefficient protocol has a property that we call \emph{scaling}. Essentially, this means that the amount of confidence Alice and Bob have in their final transcript guesses is directly related to the amount of corruption the adversary put in. More specifically, if the adversary corrupted $\frac16 - \rho$ of the communication ($\rho > 0$), then Alice and Bob end up with the correct transcript and are $\propto \rho$ confident in its correctness; and if the adversary corrupted $\frac16 + \rho$ of the communication, then Alice and Bob may end up with incorrect transcripts but they are only $\propto \rho$ confident. We can understand this as saying that $\frac16 - \rho$ corruption results in a net good confidence of $\rho$ (where $\rho$ can be positive or negative: $\rho < 0$ means that there was $\rho$ confidence in a bad transcript). 

This allows us to consider the same boosting transformation that~\cite{GhaffariH13} did, with the following caveat: whenever a simulated subprotocol results in a complete transcript, that transcript obtains a vote \emph{proportional to the confidence the parties have in the simulated protocol's correctness}. Then, if the adversary corrupts $< \frac16$ of the protocol, the net good votes (i.e. the number of votes for the correct transcript minus the total number for all incorrect transcripts) must be positive, so Alice and Bob can determine the correct transcript.

% \mnote{I'm not actually sure the discussion of simultaneously sending ECC of all possible transcripts is particularly relevant for our paper; maybe say that ``the correct transcript must be the one with the most votes'' rather than one of the ones with the most votes, if we allow a factor of 2 loss in error reslience. with large alphabet they find a way to compensate by increasing alphabet size, but here we have to do something different. then talk about scaling.}

We elaborate more on our boosting transformation in Section~\ref{sec:boosting}.

%% PRELIMS
\section{Preliminaries} \label{sec:prelims}

% All definitions presented in this section are for the binary alphabet $\{ 0, 1 \}$. 

\paragraph{Notation.} In this work, we use the following notations.
\begin{itemize}
    \item The function $\Delta(x, y)$ represents the Hamming distance between $x$ and $y$.
    \item $x[i]$ denotes the $i$'th bit of a string $x \in \{ 0, 1 \}^*$.
    \item $x[i:j]$ denotes the $i\ldots j$'th bits of $x \in \{0,1\}^*$.
    \item $x||y$ denotes the string $x$ concatenated with the string $y$.
\end{itemize}

\subsection{Noise Resilient Interactive Communication} \label{sec:ip-def}

We formally define a non-adaptive interactive protocol and with error resilience. Our definition is for the binary alphabet $\{ 0, 1 \}$.

\begin{definition} [Non-Adaptive Interactive Coding Scheme] \label{def:ip}
    A two-party non-adaptive interactive coding scheme $\pi$ for a function $f(x, y) : \{ 0, 1 \}^n \times \{ 0, 1 \}^n \rightarrow \{ 0, 1 \}^o$ is an interactive protocol consisting of a fixed number of transmissions, denoted $|\pi|$. In each transmission, a single party fixed beforehand sends a single bit to the other party. At the end of the protocol, each party outputs a guess $\in \{ 0, 1 \}^o$.
    
    % At the beginning of the protocol, Alice and Bob each receive as a private input some $x,y \in \{ 0, 1 \}^n$ for some fixed $n$ respectively. Alice and Bob want to compute some public function $f(x,y)$, and at the end of the protocol, they each output a guess for $f(x,y)$.
    
    We say that $\pi$ is \emph{resilient to $\alpha$ fraction of adversarial errors with probability $p$} if the following holds. For all $x,y \in \{ 0, 1 \}^n$, and for all adversarial attacks consisting of at most $\alpha \cdot |\pi|$ errors, with probability $\ge p$ Alice and Bob both output $f(x,y)$ at the end of the protocol. 
\end{definition}

It is known that over a binary alphabet, one cannot achieve an error resilience greater than $\frac16$. 

\begin{theorem} [\cite{EfremenkoGH16}] \label{thm:EGH16-1/6}
    There exists a function $f(x, y)$ of Alice and Bob's inputs $x,y\in\{0,1\}^n$, such that any non-adaptive interactive protocol over the binary bit flip channel that computes $f(x, y)$ succeeds with probability at most $\frac12$ if a $\frac16$ fraction of the transmissions are corrupted.
\end{theorem}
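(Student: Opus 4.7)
The plan is to establish this impossibility via an explicit adversarial strategy that confuses one party between two inputs of the other. I would instantiate $f$ as a single-bit message exchange, say $f(x,y) = (x_1, y_1)$, and assume for contradiction that a non-adaptive binary protocol $\pi$ of length $N = |\pi|$ computes $f$ correctly with probability strictly greater than $\tfrac12$ against every attack of weight at most $\tfrac{N}{6}$. The goal is to construct an attack of weight $\le \tfrac{N}{6}$ under which $\pi$ errs with probability $\ge \tfrac12$ on at least one input pair.

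I would then partition the fixed speaking schedule of $\pi$ into three consecutive phases $P_1, P_2, P_3$ of $N/3$ rounds each. The adversary can fully commandeer any one of these phases while spending at most $\tfrac12 \cdot \tfrac{N}{3} = \tfrac{N}{6}$ flips in total. Fix two inputs $(x^0, y)$ and $(x^1, y)$ whose first-bit outputs differ, and consider the three ``splicing'' attacks $A_1, A_2, A_3$, where $A_k$ runs honestly according to input $(x^0, y)$ throughout phases before $P_k$, corrupts the received stream in phase $P_k$ so that the views in both directions agree with the noiseless run on $(x^1, y)$, and continues honestly according to $(x^1, y)$ thereafter. A symmetric triple of attacks is defined for confusing Alice about Bob's input.

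The heart of the argument is a hybrid/averaging step. For each phase $P_k$, let $w_k$ denote the expected number of positions in $P_k$ at which the two noiseless transcripts on inputs $(x^0, y)$ and $(x^1, y)$ differ; over the randomness of $\pi$ and a suitable interpolation, the sum $w_1 + w_2 + w_3$ is bounded by the total transcript distance, itself at most $N$. By pigeonhole, some phase $P_{k^\star}$ satisfies $w_{k^\star} \le N/3$, so attack $A_{k^\star}$ corrupts at most half of $P_{k^\star}$, i.e. at most $N/6$ transmissions overall. Under $A_{k^\star}$ the joint distribution of both parties' received bit streams is identical on inputs $(x^0, y)$ and $(x^1, y)$, so the output distribution of $\pi$ is identical on these two inputs. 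Since the correct outputs disagree, $\pi$ must err with probability at least $\tfrac12$ on one of them, contradicting the assumed success probability.

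The main obstacle I expect is making the quantitative bound tight enough to reach $\tfrac16$ rather than the simpler $\tfrac14$ one gets from a two-phase (Braverman--Rao-style) splice. Specifically, the adversary must handle \emph{both} directions of the conversation simultaneously during the corrupted phase---rewriting Alice's sent bits to look like a different execution while also rewriting Bob's sent bits received by Alice---so that the budget counted against the $\tfrac{N}{6}$ ceiling is charged only to transmissions that genuinely differ between the two honest executions. A careful accounting, likely using a randomized interpolation between the two target inputs to define the hybrid attacks, is needed to ensure that the worst-case weight across the three splicing indices averages to at most $N/6$, with the binary alphabet being essential because a single bit flip suffices to realign any disagreeing position.
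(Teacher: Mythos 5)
First, note that the paper does not supply a proof of this statement: Theorem~\ref{thm:EGH16-1/6} is imported verbatim from~\cite{EfremenkoGH16}, so there is no in-paper argument to compare against, and your attempt has to stand on its own.

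As a freestanding argument, the proposal has two substantial gaps. The first is quantitative: the pigeonhole step does not deliver the constant you need. You bound $w_1 + w_2 + w_3$ by the total Hamming distance between the two noiseless transcripts, which is at most $N$, and conclude that some $w_{k^\star} \le N/3$. But the budget you then quote, $N/6$, comes from silently introducing a factor of $\tfrac12$ (``corrupts at most half of $P_{k^\star}$'') that is never earned. To make the received streams in both directions match the noiseless run on $(x^1, y)$ during $P_{k^\star}$, the adversary must flip \emph{every} position in $P_{k^\star}$ where the two transcripts disagree --- that is exactly $w_{k^\star}$ flips, which your pigeonhole only bounds by $N/3$, not $N/6$.

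The second gap is structural: the splicing attack $A_k$ does not in fact confine corruption to the single phase $P_k$. Suppose the true input is $(x^0, y)$ and the adversary corrupts during $P_k$ so that both parties enter $P_{k+1}$ believing the transcript so far is the one for $(x^1, y)$. In subsequent phases Alice still holds $x^0$, so she continues to respond according to $x^0$. Whenever the $(x^0, y)$ and $(x^1, y)$ transcripts differ on an Alice position after $P_k$, Alice's sent bit diverges from what the target execution prescribes, and the adversary must keep flipping for the remainder of the protocol to sustain the illusion --- so the cost charged to $A_k$ is really the cost of an entire suffix, not of one third. Closing the argument at $\tfrac16$ requires exploiting that one party speaks at most $N/2$ of the rounds (so only that party's direction need be charged), choosing the right suffix, and then carefully accounting for how the views evolve once the corrupted party's own messages start drifting; the factor of two has to be proved, not asserted. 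That delicate bookkeeping is exactly the content of the impossibility proof in~\cite{EfremenkoGH16}, and it is precisely the part your sketch glosses over.
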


%% BOOSTING
\section{Boosting: Obtaining Computational Efficiency}\label{sec:boosting}

In this section, we show how to boost the computational efficiency of a scheme. Our boosted protocol draws inspiration from the list-decoding boosting scheme of~\cite{GhaffariH13}, which drew ideas from~\cite{BrakerskiK12}. We begin by recalling the necessary setup from~\cite{GhaffariH13}.

\subsection{The Simulation Paradigm of~\cite{GhaffariH13,BrakerskiK12}}

Assume that $\pi_0$ is an alternating binary protocol of length $n_0$ (any binary protocol can be made alternating by increasing the communication by at most a factor of $2$). We can view $\pi_0$ as a \emph{protocol tree} $\mathbb{T}$, in which the edges at odd levels correspond to Alice's messages and the edges at even levels correspond to Bob's messages. For any input $x$, $\pi_0$ defines a subset $S_A$ of edges at the odd levels corresponding to Alice's possible responses, and similarly, for any input $y$, $\pi_0$ defines a subset $S_B$ of edges at the even levels corresponding to Bob's possible messages. Note that for any $(x, y)$, $S_A \cup S_B$ defines a unique rooted path $\cT$ corresponding to the noiseless protocol $\pi_0(x,y)$. The goal is for both Alice and Bob to determine $\cT$.

To do this, Alice and Bob each keep track of a set of edges $\cE_A$ and $\cE_B$. Initially both sets are empty. In each of many iterations, Alice (resp. Bob) will add some edges to $\cE_A$ (resp. $\cE_B$) extending some existing path in $\cE_A$ (resp. $\cE_B$). We remark that any new edges Alice adds must be consistent with her own behavior on her input $x$, i.e. she never adds an edge in an odd layer that does not belong to $S_A$. The same holds for Bob. It thus holds that at any point the unique longest rooted path in both $\cE_A$ and $\cE_B$ is a prefix of $\cT$.

The process by which Alice and Bob add edges to their respective set in each iteration is as follows. They first run a subprotocol to determine their longest common rooted path. Then, they run the next $\log^4 n_0$ rounds of the noiseless protocol. They perform both these steps under a single error-resilient simulation. The idea is that every time not too many errors have happened in an iteration, both Alice and Bob add $\log^4 n_0$ edges to the correct path corresponding to $\cT$.

If the longest common rooted path is a path from the root to a leaf, then Alice and Bob instead add some weight to that leaf. Over the course of many iterations, the hope is that the leaf with the largest weight at the end of the protocol should correspond to $\cT$. We remark that~\cite{GhaffariH13} showed a list-guarantee assuming not too many errors occurred: at the end of this procedure, Alice and Bob will each have a small list of leaves each containing the true leaf corresponding to $\cT$. (They then need to run this procedure many times in parallel with sending an error correcting code in order for both parties to narrow down the correct transcript, resulting in an alphabet blowup.) For us, we will show that if our inefficient simulation has a property known as \emph{scaling} (see Definition~\ref{def:scaling}), then at the end of this procedure Alice and Bob will each have narrowed down to a \emph{unique} leaf, precisely, the leaf corresponding to $\cT$, provided not too many errors occurred. 

% More specifically, we split the noiseless protocol into sections of size $\log^4 n_0$.  In each of many iterations, Alice and Bob will find the longest shared path in both of their trees

% Alice and Bob do the following many times: they simulate the following protocol: find their longest shared rooted path and then run the next $\log^4 n_0$ rounds of the protocol. They add any new simulated rounds to their respective set of edges. If any simulated round is a leaf, they add some weight to that leaf. The idea is that the leaf with the largest weight at the end of the protocol should be the correct transcript.

\paragraph{The Tree-Intersection Problem.}
The problem of finding their longest shared path is called the \emph{tree-intersection problem}. Precisely, assuming Alice and Bob have sets of edges $\cE_A$ and $\cE_B$ respectively each forming a rooted tree under the promise that $\cE_A \cap \cE_B$ is a rooted path, the problem is for Alice and Bob to recover this rooted path using as little communication and computation as possible. 

In \cite{GhaffariH13}, they give a data structure for $\cE_A$ and $\cE_B$ that optimizes the computational complexity of a protocol solving the tree-intersection problem.

\begin{theorem}\cite{GhaffariH13} \label{thm:data_structure}
    There is an incremental data structure that maintains a rooted subtree of the rooted infinite binary tree under edge additions with amortized computational complexity of $\tilde{O}(1)$ time per edge addition. Furthermore, for any $c = \Omega(1)$ and given two trees of maximum size $n$ maintained by such a data structure, there is a tree-intersection protocol that uses $100 c \log^4 n$ rounds of communication over a noiseless binary channel, $O(c \log^4 n)$ bits of randomness, and $\tilde{O}(1)$ computation steps to solve the tree intersection problem, that is, find the intersection path with failure probability at most $2^{-c \log^4 n}$. 
\end{theorem}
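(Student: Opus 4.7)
The theorem has two ingredients: an incremental data structure supporting $\tilde{O}(1)$-amortized edge additions, and a tree-intersection protocol with polylogarithmic round, randomness, and computation complexity. My plan is to build both on top of Karp--Rabin polynomial fingerprints.

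For the data structure, I would equip each node $v$ with a fingerprint $h(v) = \sum_i b_i(v) \, r^i \bmod p$ of its root-to-$v$ binary path, where $r$ and a sufficiently large prime $p$ are drawn once and for all so that the probability of any fingerprint collision across all $n$ updates is $n^{-\omega(1)}$. When a new child $v$ is added under $u$, I compute $h(v)$ from $h(u)$ with a single modular multiply-and-add, which is $\tilde{O}(1)$ amortized provided powers of $r$ are precomputed. I would simultaneously maintain, for each depth $d$, a hash table $H_d$ mapping fingerprints to the node of depth $d$ carrying that fingerprint, also $\tilde{O}(1)$ per insertion. This yields the data-structure part of the theorem.

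For the tree-intersection protocol, let $\ell$ denote the length of the unique common path $P$. My plan is to have Alice and Bob binary-search for $\ell$ over $[0, n]$, so $O(\log n)$ queries. Each query at a candidate depth $d$ asks whether $\cE_A$ and $\cE_B$ share a node at depth $d$; by the promise that $\cE_A \cap \cE_B$ is a single rooted path, this is an ``intersection of size at most one'' subproblem for the fingerprint multisets $S_A(d), S_B(d)$. I would answer each such query by a randomized sketch exchange (e.g.\ pairing a random linear combination of the fingerprints with an index-recovery sketch) of $O(c \log^3 n)$ bits, so that the lone common fingerprint, if any, can be extracted with failure $\le 2^{-c \log^3 n}$ per query; a union bound over $O(\log n)$ queries pushes overall failure to $2^{-c \log^4 n}$ within the $O(c \log^4 n)$-bit randomness budget. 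Once $\ell$ and the corresponding fingerprint are identified, each party retrieves the endpoint from $H_\ell$ and walks it up to recover $P$ in $\tilde{O}(1)$ per edge.

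The main obstacle, and what forces the $\log^4 n$ round budget rather than something smaller, is the per-depth intersection subroutine. A naive approach that sends all depth-$d$ fingerprints costs communication proportional to the number of depth-$d$ nodes, which can be as large as $\Theta(n/\log^4 n)$ under the $\log^4 n$-length-extension construction of Section~\ref{sec:boosting}; this is far too expensive. Routing around it requires crucially exploiting the uniqueness promise, so that a short linear sketch recovers the lone intersection element, together with maintaining the sketches incrementally under the data structure so that the per-query computation is $\tilde{O}(1)$ rather than linear in the tree size. Amortizing the sketch-update work over the $\Theta(n/\log^4 n)$ iterations of the outer simulation gives the claimed $\tilde{O}(1)$ per edge addition, and a standard Karp--Rabin collision analysis closes the overall $2^{-c \log^4 n}$ failure bound.
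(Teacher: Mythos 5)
The paper does not prove Theorem~\ref{thm:data_structure}; it is imported as a black box from \cite{GhaffariH13}, so there is no in-paper proof to compare against and I can only assess your proposal on its own merits. The data-structure half is routine and fine: Karp--Rabin fingerprints of root-to-node paths together with per-depth hash tables do give $\tilde{O}(1)$-amortized insertion with negligible collision probability.

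The tree-intersection half has a genuine gap. You reduce the problem to $O(\log n)$ per-depth queries: at depth $d$ Alice holds the fingerprint set $S_A(d)$, Bob holds $S_B(d)$, each of size up to $\Theta(n)$ (the trees have up to $n$ edges, which can all sit at a single depth), with the promise $|S_A(d)\cap S_B(d)|\le 1$, and they must report the common element or certify there is none. You assert that ``a short linear sketch recovers the lone intersection element.'' That is not true: linear/CountSketch/IBLT-style sketches recover sparse \emph{differences} or heavy coordinates of a sum, not a single element sitting in the intersection of two otherwise-unrelated large sets, because the coordinate-wise product of indicator vectors is not a linear functional of them. Indeed the randomized communication complexity of unique set disjointness on size-$\Theta(n)$ sets is $\Omega(n)$ (Kalyanasundaram--Schnitger, Razborov), so no $O(c\log^3 n)$-bit exchange can answer a single such query with nontrivial reliability. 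Monotonicity of the binary-search answer across depths does not rescue a fixed depth-$d$ query: when $\ell=0$, the very first midpoint query is already a full unique-disjointness instance. The structure a correct protocol has to exploit --- and which your plan never invokes --- is that the intersection is a rooted \emph{path}, so once the common node at depth $d-1$ is pinned down the common node at depth $d$ has only two possible values; any working protocol must search \emph{down the tree}, not treat each depth as an independent sparse-recovery instance.

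Separately, your probability accounting runs backward. If each of the $O(\log n)$ queries fails with probability $2^{-c\log^3 n}$, a union bound gives overall failure on the order of $\log n\cdot 2^{-c\log^3 n}$, which is exponentially \emph{larger} than the target $2^{-c\log^4 n}$, not smaller. Making each query $2^{-c\log^4 n}$-reliable would, under your own sketch-size accounting, cost about $\Omega(c\log^4 n)$ bits per query and blow past the $O(c\log^4 n)$ total communication and randomness budget.
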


\subsection{Scaling Schemes}

We now define precisely what we mean by a \emph{scaling} scheme. Intuitively, a scaling scheme is a scheme in which Alice and Bob output a \emph{confidence} in addition to a transcript. This confidence should give a bound on the total error in the protocol. For instance, if there is no corruption, then Alice and Bob should output the correct transcript with large confidence. If there is some corruption, then Alice and Bob should output the correct transcript with smaller confidence. If there is too much corruption, then Alice and Bob may output an incorrect transcript, but their confidence cannot exceed a certain quantity specified by the amount of error that occurred (i.e. if the adversary wishes Alice and Bob to be more confident in an incorrect transcript, she must corrupt more of the protocol).

\begin{definition}[$(\rho, \epsilon, \mu_\epsilon)$-Scaling Schemes] \label{def:scaling}
    A scheme for simulating a noiseless protocol of length $n$ is \emph{$(\rho, \epsilon, \mu_\epsilon)$-scaling} if, at the end of the protocol, Alice and Bob output guesses $T_A$ and $T_B$ for the noiseless transcript $\cT$ along with confidences $c_A, c_B \in [0, 1]$, with the following guarantees:
    \begin{itemize}
    \item {\bf Consistency:}
        All of Alice's messages in $T_A$ are consistent with her behavior in $\pi_0$ on input $x$. Similarly, all of Bob's messages in $T_B$ are consistent with his behavior in $\pi_0$ on input $y$.
    \item {\bf Scaling 1:}
        If a $\delta < (1-\epsilon) \cdot \rho$ fraction of the scheme was corrupted, then
        \[
            \Pr \left[ T_A = T_B = \cT ~\wedge~ c_A, c_B \ge 1 - \frac{\delta}{\rho} - \epsilon \right] \ge 1 - \mu_\epsilon(n).
        \]
    \item {\bf Scaling 2:}
        If $\delta \ge (1 - \epsilon) \cdot \rho$ fraction of the scheme was corrupted, then 
        \[
            \Pr \left[ \left( T_A \not= \cT ~\wedge~ c_A > \frac{\delta}{\rho} - 1 + \epsilon \right) \vee \left( T_B \not= \cT ~\wedge~ c_B > \frac{\delta}{\rho} - 1 + \epsilon \right) \right] \le \mu_\epsilon(n).
        \]
    \end{itemize}
\end{definition}

\subsection{Boosting}

\protocol{Boosting}{boosting}{
    Let $\cP'$ be a $(\rho, \epsilon, \mu_\epsilon)$-scaling scheme that simulates noiseless protocols of length $n'$ by a protocol of length $r_\epsilon(n')$ that has computational complexity $T_\epsilon(n')$. Choose $C_\epsilon \ge 100/\epsilon + 1$.
    
    For a protocol $\pi_0$ that has length $n_0$, and on inputs $(x, y)$, Alice and Bob run the following scheme:
    \begin{enumerate}
    \item 
        Alice and Bob each keep track of a list $\cE_A, \cE_B \subseteq \mathbb{T}$ of edges they have simulated so far, using the data structure from~\ref{thm:data_structure}. Initially, $\cE_A, \cE_B = \emptyset$. They also each keep track of a dictionary\footnote{Roughly, a dictionary is implemented by a hash table.} $\cL_A, \cL_B$ of leaves, i.e. full transcripts $T$ of $\mathbb{T}$, mapping to $\bbR_{\ge 0}$. Initially, for any full transcript $T$ of $\mathbb{T}$, $\cL_A[T] = \cL_B[T] = 0$. 
    \item 
        For $i = 1, \dots, \frac{n_0}{\epsilon \log^4 n_0} =: \beta$, they use $\cP'$ to simulate the following $n' = C_\epsilon \cdot \log^4 n_0$ round noiseless protocol:
        \begin{enumerate}
        \item 
            Alice and Bob run the tree-intersection protocol given in Theorem~\ref{thm:data_structure}, using $(C_\epsilon - 1) \log^4 n_0$ rounds and $\tilde{O}(1)$ computation steps. At the end, with probability $1 - 2^{-((C_\epsilon - 1)/100) \cdot \log^4 n_0} \ge 1 - 2^{-\log^4 n_0/\epsilon}$, the two parties have determined the common rooted path $p = \cE_A \cap \cE_B$.
        \item 
            After Alice and Bob have determined a common path $p$, they fix $p$ to be the transcript prefix of $\pi_0$ so far and run the next $\log^4 n_0$ rounds of $\pi_0$. (If there are fewer than $\log^4 n_0$ rounds in $\pi_0$ remaining after $p$, they treat the remaining rounds as sending all $0$'s.)
        \end{enumerate}
        At the end of the simulation, Alice has determined a transcript prefix $p_A \subseteq \cE_A$ along with up to $\log^4 n_0$ subsequent edges extending $p_A$. She also has a confidence $c_A \in [0, 1]$. She adds the $\le \log^4 n_0$ edges to $\cE_A$ (ignoring duplicates). Further, if $p_A$ is a complete transcript of length $n_0$, she adds $c_A$ to $\cL_A[p_A]$. Bob does the same.
    \item 
        At the end of the protocol, let $T_A = \arg\max_{p} \cL_A[p]$ be the transcript with the highest weight in $\cL_A$, and let $w_A = \cL_A[T_A]$. Also, let $w^c_A = \sum_{p \not= T_A} \cL_A[p]$ be the total weight assigned to all the other leaves excluding $T_A$. Then, Alice outputs $T_A$, along with confidence $c_A = \frac{w_A - w^c_A}{\beta}$.
        
        Similarly, Bob outputs the transcript $T_B = \arg\max_{p} \cL_B[p]$ and confidence $c_B = \frac{w_B - w^c_B}{\beta}$, where $w_B = \cL_B[T_B]$ and $w^c_B = \sum_{p \not= T_B} \cL_B[p]$ is the total weight on all the other leaves excluding $T_B$.
    \end{enumerate}
}

\begin{theorem} \label{thm:boosting}
    Let $\epsilon < 0.25$ and $C_\epsilon \ge 100/\epsilon + 1$. Assume a $(\rho, \epsilon, \mu_\epsilon)$-scaling scheme that simulates noiseless protocols of length $n$ with communication complexity $r_\epsilon(n)$ and computational complexity $T_\epsilon(n)$. Then, the protocol given in Protocol~\ref{prot:boosting} is a $(\rho, 4\epsilon, e^{-\epsilon n_0 / 10 \log^4 n_0})$-scaling scheme for noiseless protocols of length $n_0$ that has communication complexity $\frac{n_0}{\epsilon \log^4 n_0} \cdot r_{\epsilon}(C_\epsilon \cdot \log^4 n_0)$ and computational complexity $\tilde{O_{\epsilon}}(n_0) \cdot T_\epsilon(C_\epsilon \log^4 n_0)$, assuming that $\mu_\epsilon(C_\epsilon \log^4 n_0) < \frac\epsilon4$.
\end{theorem}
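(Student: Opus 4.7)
The plan is to track a signed ``net vote'' $\chi_i$ for the correct transcript $\cT$ in each iteration and reduce everything to a single lower bound on $\sum_i \chi_i$. Define $\chi_i = +c_i$ if iteration $i$ completed the correct leaf $\cT$, $\chi_i = -c_i$ if it completed any other leaf, and $\chi_i = 0$ otherwise. Writing $a = \cL_A[\cT]$, $B = \sum_{p\ne\cT}\cL_A[p]$, and $a^* = \cL_A[T_A]$, one checks $\sum_i \chi_i = a - B$ and $c_A = (2a^* - a - B)/\beta$. Consequently, $\sum_i\chi_i > 0$ forces $a > B \ge \max_{p\ne\cT}\cL_A[p]$, so $T_A = \cT$ (with $c_A = \sum_i\chi_i/\beta$); and whenever $T_A \ne \cT$ we have $a^* \le B$ and hence $c_A \le -\sum_i\chi_i/\beta$. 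Consistency is immediate: every key of $\cL_A$ is a full transcript returned by $\cP'$, hence Alice-consistent by $\cP'$'s consistency clause.

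Call iteration $i$ \emph{conforming} if the tree-intersection call of Theorem~\ref{thm:data_structure} succeeded \emph{and} the applicable scaling guarantee of $\cP'$ held; the per-iteration failure probability is at most $\mu_\epsilon(C_\epsilon\log^4 n_0) + 2^{-\log^4 n_0/\epsilon} < \epsilon/3$, and the randomness used in distinct iterations is independent. A Chernoff bound then gives that with probability at least $1 - e^{-\epsilon n_0/(10\log^4 n_0)}$ the number of non-conforming iterations is at most $\epsilon\beta/2$. The rest of the proof conditions on this high-probability event.

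For each conforming iteration I would show $\chi_i \ge 1 - \delta_i/\rho - \epsilon$, with one subtle exception. In a conforming iteration with $\delta_i \ge (1-\epsilon)\rho$ all three outcomes (correct leaf, wrong leaf, no leaf) satisfy the bound (the wrong-leaf case uses scaling~2 on $c_i$; otherwise $\chi_i \ge 0 \ge 1 - \delta_i/\rho - \epsilon$). In a conforming low-error iteration ($\delta_i < (1-\epsilon)\rho$) that completes $\cT$, $\chi_i = c_i \ge 1 - \delta_i/\rho - \epsilon$ by scaling~1. The exceptional case is a conforming low-error iteration that fails to complete $\cT$: here $\chi_i = 0$ while the target bound could be as large as $1$. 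The crucial combinatorial observation is that every conforming low-error iteration correctly simulates the noiseless subprotocol, extending $\cE_A \cap \cE_B$ by $\log^4 n_0$ edges of the true path $\cT$; after $n_0/\log^4 n_0 = \epsilon\beta$ such iterations the intersection has reached $\cT$, and every subsequent conforming low-error iteration completes $\cT$. Hence there are at most $\epsilon\beta$ ``wasted'' conforming low-error iterations. Bounding each non-conforming iteration's slack by $2$ (from $\chi_i \ge -1$ against target $\le 1$), we obtain
\[
    \sum_i \chi_i \;\ge\; \beta\bigl(1 - \tfrac{\delta}{\rho} - \epsilon\bigr) - \epsilon\beta - 2\cdot\tfrac{\epsilon\beta}{2} \;=\; \beta\bigl(1 - \tfrac{\delta}{\rho} - 3\epsilon\bigr).
\]

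Scaling~1 and scaling~2 follow from the translation in paragraph one. When $\delta < (1-4\epsilon)\rho$ we have $\sum_i\chi_i \ge \epsilon\beta > 0$, forcing $T_A = \cT$ and $c_A \ge 1 - \delta/\rho - 3\epsilon \ge 1 - \delta/\rho - 4\epsilon$. When $\delta \ge (1-4\epsilon)\rho$, if $T_A \ne \cT$ then $c_A \le -\sum_i\chi_i/\beta \le \delta/\rho - 1 + 3\epsilon < \delta/\rho - 1 + 4\epsilon$, so the scaling~2 bad event cannot occur; symmetric reasoning handles Bob. Complexities sum trivially over the $\beta$ iterations, using Theorem~\ref{thm:data_structure} for the $\tilde O(1)$ per-edge data-structure overhead and an $\tilde O(1)$ amortized cost per update to the hash dictionary $\cL_A$. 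The main obstacle is the ``wasted-iteration'' count: one must verify that adversarial interleaving of non-conforming and high-error iterations cannot undo the monotone extension of $\cE_A \cap \cE_B$ along $\cT$ produced by conforming low-error iterations, so the time for the intersection to first reach $\cT$ is controlled purely by the number of conforming low-error iterations rather than by the total iteration count.
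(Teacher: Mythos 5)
Your proposal is correct and follows essentially the same approach as the paper's proof. Your signed per-iteration vote $\chi_i$ with target $1 - \delta_i/\rho - \epsilon$ is a compact repackaging of what the paper tracks as a separate lower bound on $\cL_A[\cT]$ over the conforming low-error iterations ($\Gamma$) and a separate upper bound on the aggregate wrong-leaf weight over the conforming high-error iterations ($\Lambda$) plus the non-conforming ones; the per-iteration scaling lemma, the Chernoff bound on non-conforming iterations, the $n_0/\log^4 n_0 = \epsilon\beta$ count of wasted conforming low-error iterations (which, as you note, relies on the monotonicity of $\cE_A \cap \cE_B$ along $\cT$), and the final translation from the weight gap to $(T_A, c_A)$ all coincide with the paper's.
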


\begin{proof}
    Clearly, the communication complexity in Protocol~\ref{prot:boosting} is $\frac{n_0}{\epsilon \log^4 n_0} \cdot r_\epsilon (C_\epsilon \log^4 n_0)$. As for the computational complexity, note that in each iteration, Alice needs to do $T_\epsilon(C_\epsilon \log^4 n_0)$ computations to obtain a transcript $T'$ and a confidence $c'$. She may further have to update $\cL_A[T]$ with the confidence $c'$, for some complete transcript $T$, which can be done in amortized $O(\log L)$ time since a dictionary is roughly implemented by a hash table, where $L$ is an upper bound on the size of $\cL_A$. Finally, at the end of the protocol, she can determine $T_A, w_A, w^c_A$ by making a linear pass through $\cL_A$. Thus, the total computational complexity is $\beta \cdot (T_\epsilon(C_\epsilon \log^4 n_0) + O(\log L)) + \tilde{O}(L)$. Since $L \le \beta$, which follows from the fact that Alice makes at most one value of $\cL_A[p]$ nonzero in each iteration, the total computational complexity is $\tilde{O}(\beta) \cdot T_\epsilon(C_\epsilon \log^4 n_0)$ which is at most $\tilde{O_{\epsilon}}(n_0) \cdot T_\epsilon(C_\epsilon \log^4 n_0)$.
    
    We will now show that our scheme is $(\rho, 4\epsilon, e^{-\epsilon n_0 / 10 \log^4 n_0})$-scaling. First, the consistency property follows because each of the protocols in the $\beta$ iterations are consistent: Alice and Bob only add edges to $\cE_A, \cE_B$ that are consistent with their own input, so only transcripts consistent with their own input can gain weight in $\cL_A, \cL_B$. The rest of this proof will show the scaling properties.
    
    Let $\delta_1, \dots, \delta_\beta$ be the fractional amount of corruption in each of the $\beta$ simulations, so that the total fractional amount of error is $\delta = \frac1\beta \sum_{i = 1}^\beta \delta_i$. Let $T'_{A,1}, \dots, T'_{A,\beta}$ and $c'_{A,1}, \dots, c'_{A,\beta}$ (resp. $T'_{B,1}, \dots, T'_{B,\beta}$ and $c'_{B,1}, \dots, c'_{B,\beta}$) be the transcripts and confidences Alice (resp. Bob) has at the end of each of the $\beta$ simulations.
    
    Denote by $E_i(T'_i)$ denote the event that in the transcript $T'_i$, Alice and Bob correctly determine their longest shared path $\cE_A \cap \cE_B$ and extend it by $\log^4 n_0$ bits (or send $0$'s once the total transcript exceeds length $n_0$).
    
    \begin{lemma} \label{lemma:boosting-iteration-scaling}
        The following holds for the simulation in the $i$'th iteration: 
        \begin{itemize} 
        \item 
            If there are at most $\delta_i < (1 - \epsilon) \cdot \rho$ errors, then 
            \[
                \Pr \left[ E_i(T'_{A,i}) ~\wedge~ E_i(T'_{B,i}) ~\wedge~ c'_{A,i}, c'_{B,i} \ge 1 - \frac{\delta_i}{\rho} - \epsilon \right]
                \ge 1 - \mu_\epsilon(C_\epsilon \cdot \log^4 n_0) - 2^{-c \log^4 n_0}.
            \]
        \item 
            If there are at least $\delta_i \ge (1 - \epsilon) \cdot \rho$ errors, then 
            \begin{align*}
                \Pr \left[ \left( \neg E_i(T'_{A,i}) ~\wedge~ c'_A > \frac{\delta_i}{\rho} - 1 + \epsilon \right) \vee \left( \neg E_i(T'_{B,i}) ~\wedge~ c'_B > \frac{\delta_i}{\rho} - 1 + \epsilon \right) \right] \\
                \le \mu_\epsilon(C_\epsilon \cdot \log^4 n_0) + 2^{-c \log^4 n_0}.
            \end{align*}
        \end{itemize} 
    \end{lemma}
    
    \begin{proof}
        First, suppose that $\delta_i < (1 - \epsilon) \cdot \rho$. Let $T^*_i$ denote the noiseless protocol in the $i$'th simulation. Note that with probability $e^{\log^4 n_0 / \epsilon}$, $T^*_i$ may not correctly determine Alice and Bob's longest shared path. In particular,
        \begin{align*}
            \Pr&~ \left[ 
                \neg \left( E_i(T'_{A,i}) ~\wedge~ E_i(T'_{B,i}) ~\wedge~ c'_{A,i}, c'_{B,i} \ge 1 - \frac{\delta_i}{\rho} - \epsilon \right)
            \right]\\ 
            \le&~ \Pr \left[ 
                \neg E_i(T^*_i)
            \right]
            + \Pr \left[ 
                \neg \left( T'_{A,i} = T'_{B,i} = T^*_i ~\wedge~ c'_{A,i}, c'_{B,i} \ge 1 - \frac{\delta_i}{\rho} - \epsilon \right) 
            \right] \\
            \le&~ 2^{\log^4 n_0 / \epsilon} + \mu_\epsilon(C_\epsilon \cdot \log^4 n_0)
        \end{align*}
        by Theorem~\ref{thm:data_structure} and Definition~\ref{def:scaling}.
        
        On the other hand, if $\delta_i \ge (1 - \epsilon) \cdot \rho$, it holds that 
        \begin{align*}
            &~ \Pr \left[ \left( \neg E_i(T'_{A,i}) ~\wedge~ c'_A > \frac{\delta_i}{\rho} - 1 + \epsilon \right) \vee \left( \neg E_i(T'_{B,i}) ~\wedge~ c'_B > \frac{\delta_i}{\rho} - 1 + \epsilon \right) \right] \\
            &\le \Pr [ \neg E_i(T^*_i) ] + \Pr \left[ \left( T'_{A,i} \not= T^*_i ~\wedge~ c'_A > \frac{\delta_i}{\rho} - 1 + \epsilon \right) \vee \left( T'_{B,i} \not= T^*_i ~\wedge~ c'_B > \frac{\delta_i}{\rho} - 1 + \epsilon \right) \right] \\
            &\le 2^{\log^4 n_0/\epsilon} + \mu_\epsilon(C_\epsilon \cdot \log^4 n_0),
        \end{align*}
    where the second line follows from considering the cases where $\neg E_i(T^*_i)$ and $E_i(T^*_i)$, and the third line follows from Theorem~\ref{thm:data_structure} and Definition~\ref{def:scaling}.
    \end{proof}
    
    Let $I \subseteq [\beta]$ denote the iterations in which $< (1-\epsilon) \cdot \rho$ of the scheme was corrupted. 
    
    \begin{lemma} \label{lemma:GammaLambda}
        With probability $1 - e^{-\epsilon^2 \beta / 10}$, for all except at most $\epsilon \cdot \beta$ values of $i \in [\beta]$, it holds that either:
        \begin{enumerate}[label={(\arabic*)}]
            \item $i \in I$ and $E_i(T'_{A,i}) ~\wedge~ E_i(T'_{B,i}) ~\wedge~ c'_{A,i}, c'_{B,i} \ge 1 - \frac{\delta_i}{\rho} - \epsilon$, \label{item:condition-I}
            \item $i \in [\beta] \backslash I$ and $\left( E_i(T'_{A,i}) \vee c'_A \le \frac{\delta_i}{\rho} - 1 + \epsilon \right) \wedge \left( E_i(T'_{B,i}) \vee c'_B \le \frac{\delta_i}{\rho} - 1 + \epsilon \right)$. \label{item:condition-Ic}
        \end{enumerate}
    \end{lemma}
    
    \begin{proof}
        By Lemma~\ref{lemma:boosting-iteration-scaling}, one of the two conditions holds for every $i \in [\beta]$ with probability at least $1 - 2^{-\log^4 n_0 / \epsilon} - \mu_\epsilon (C_\epsilon \cdot \log^4 n_0)$. This means that the expected number of $i$ satisfying one of the two conditions is $\varpi \ge (1 - 2^{-\log^4 n_0 / \epsilon} - \mu_\epsilon(C_\epsilon \log^4 n_0)) \cdot \beta$. 
        
        Let $X$ denote the number of $i \in [\beta]$ satisfying one of the two conditions. By Chernoff, 
        \begin{align*}
            \Pr [X < (1 - \epsilon) \cdot \beta] 
            \le \Pr [X < (1 - \epsilon/2) \cdot \varpi] 
            \le e^{-\epsilon^2 \varpi / 8}
            \le e^{-\epsilon^2 \beta / 10},
        \end{align*}
        where the first and last inequalities follow from the fact that $2^{-\log^4 n_0 / \epsilon} + \mu_\epsilon(C_\epsilon \log^4 n_0) \le 2^{-1/\epsilon} + \mu_\epsilon(C_\epsilon \log^4 n_0) < \frac\epsilon4 + \frac\epsilon4 = \frac\epsilon2$, so $(1 - \epsilon/2) \cdot \beta < \varpi$. In particular, the first inequality follows from $(1 - \epsilon) \beta < (1 - \epsilon/2)^2 \beta < (1 - \epsilon/2) \varpi$, and the last inequality follows from $0.8 \beta < (1 - \epsilon/2) \beta < \varpi$.
    \end{proof}
    
    Let $\Gamma \subseteq I$ be the set of all $i$ satisfying~\ref{item:condition-I}, and let $\Lambda \subseteq [\beta] \backslash I$ be the set of all $i$ satisfying~\ref{item:condition-Ic}. Note that after the first $\frac{n_0}{\log^4 n_0}$ iterations in $\Gamma$, Alice and Bob are both guaranteed to have all edges in the correct transcript $\cT$ in their edge lists $\cE_A$ and $\cE_B$. After that point, in every iteration in $\Gamma$, Alice and Bob both determine the correct transcript $\cT = \cE_A \cap \cE_B$ and add $c'_{A,i}$ (resp. $c'_{B,i}$) to $\cL_A[\cT]$ (resp. $\cL_B[\cT]$). This means that at the end of the protocol,
    \[
        \cL_A[\cT] \ge \sum_{i \in \Gamma} c'_{A,i} - \frac{n_0}{\log^4 n_0}
        \ge (1-\epsilon) \cdot |\Gamma| - \frac1\rho \cdot \sum_{i \in \Gamma} \delta_i - \frac{n_0}{\log^4 n_0},
    \]
    and similarly
    \[
        \cL_B[\cT] \ge (1-\epsilon) \cdot |\Gamma| - \frac1\rho \cdot \sum_{i \in \Gamma} \delta_i - \frac{n_0}{\log^4 n_0}.
    \]
    
    Meanwhile, for each iteration in $\Lambda$, a weight of at most $c'_{A,i}$ (resp. $c'_{B,i}$) is added to a wrong leaf. Furthermore, a weight of at most $1$ is added to a wrong leaf for each iteration in $[\beta] \backslash (\Gamma \cup \Lambda)$, which by Lemma~\ref{lemma:GammaLambda} has size at most $\epsilon \beta$ with probability $1 - e^{-\epsilon^2 \beta / 10}$. Thus, with probability $1 - e^{-\epsilon^2 \beta / 10}$, the total weight on all the wrong leaves in Alice's tree is at most
    \[
        \le \sum_{i \in \Lambda} c'_{A,i} \cdot \mathbbm{1}[T'_{A,i} \not= T^*_i]
        + \sum_{i \in [\beta] \backslash (\Gamma \cup \Lambda)} 1
        \le \frac1\rho \cdot \sum_{i \in \Lambda} \delta_i - (1 - \epsilon) \cdot |\Lambda| + \epsilon \beta,
    \]
    and simultaneously the total weight on all the wrong leaves in Bob's tree is at most
    \[
        \le \sum_{i \in \Lambda} c'_{B,i} \cdot \mathbbm{1}[T'_{B,i} \not= T^*_i]
        + \sum_{i \in [\beta] \backslash (\Gamma \cup \Lambda)} 1
        \le \frac1\rho \cdot \sum_{i \in \Lambda} \delta_i - (1 - \epsilon) \cdot |\Lambda| + \epsilon \beta.
    \]
    
    Then, with probability $1 - e^{-\epsilon^2 \beta / 10}$, the difference between the weight on the correct leaf and the combined weight on all the wrong leaves, for both Alice and Bob, is 
    \begin{align*}
        \cL_A[\cT] &- \sum_{T \not= \cT} \cL_A[T] ~\text{(resp.}~\cL_B[\cT] - \sum_{T \not= \cT} \cL_B[T]\text{)}\\
        &\ge \left[ (1 - \epsilon) \cdot |\Gamma| - \frac1\rho \cdot \sum_{i \in \Gamma} \delta_i - \frac{n_0}{\log^4 n_0} \right]
        - \left[ \frac1\rho \cdot \sum_{i \in \Lambda} \delta_i - (1 - \epsilon) \cdot |\Lambda| + \epsilon \beta \right] \\
        &= (1 - \epsilon) \cdot (|\Gamma| + |\Lambda|) - \epsilon \beta - \frac1\rho \cdot \sum_{i \in \Gamma \cup \Lambda} \delta_i - \frac{n_0}{\log^4 n_0} \\
        &\ge (1 - \epsilon) \cdot (\beta - \epsilon \beta) - \epsilon \beta - \frac{\delta \beta}{\rho} - \epsilon \beta \\
        &\ge \left( 1 - \frac\delta\rho - 4\epsilon \right) \cdot \beta, \numberthis \label{eqn:difference-weights}
    \end{align*}
    where we used that $\beta = \frac{n_0}{\epsilon \log^4 n_0}$ and that $\sum_{i \in \Gamma \cup \Lambda} \delta_i \le \sum_{i \in [\beta]} \delta_i = \delta \beta$. 
    
    In particular, if $\delta < (1 - \frac\delta\rho - 4\epsilon) \cdot \rho$, then with probability $1 - e^{-\epsilon^2 \beta / 10}$, both Alice and Bob output $T_A = T_B = \cT$ and confidence $c_A, c_B \ge 1 - \frac\delta\rho - 4\epsilon$.
    
    On the other hand, Equation~\ref{eqn:difference-weights} tells us that with probability $1 - e^{-\epsilon^2 \beta/10}$, for both Alice and Bob, for \emph{any} incorrect leaf $T_0$, the total weight on $T_0$ minus the combined weight on all the other leaves is at most 
    \[
        \le \left( \frac\delta\rho - 1 + 4\epsilon \right) \cdot \beta,
    \]
    since $\cL_A[T_0] \le \sum_{T \not= \cT} \cL_{A}[T]$, and $\sum_{T \not= T_0} \cL_A[T] \ge \cL_A[\cT]$ (and same for Bob). Thus, in the case that $\delta > (1 - \frac\delta\rho - 4\epsilon) \cdot \rho$ of the entire protocol is corrupted, it holds with probability $1 - e^{-\epsilon^2 \beta / 10}$ that either $T_A = \cT$, or $T_A \not= \cT$ and $c_A \le \frac\delta\rho - 1 + 4\epsilon$, and same for Bob.
    
    It follows that Protocol~\ref{prot:boosting} is $(\rho, 4\epsilon, e^{-\epsilon^2 \beta/10}) = (\rho, 4\epsilon, e^{-\epsilon n_0 / 10 \log^4 n_0})$-scaling.

\end{proof}

\section{Layered Codes} \label{sec:layered-graph-codes}
In this section, we introduce \emph{sensitive layered codes}, which are a generalization and strengthening of list tree codes to codes on layered graphs. List tree codes were first introduced in \cite{BravermanE14} as an analogue of list-decodable error correcting codes for the tree code setting. Sensitive layered codes are instead defined on certain graphs, and have list size $1$ for most locations.

We first define suffix distance.

\begin{definition}[Suffix Distance]
    For two strings $x, y \in \Sigma^n$, we define the suffix distance as follows:
    \[
        \Delta_{sfx}(x, y) = \max_{0 \le i \le n-1} \frac{\Delta(x[i+1:n], y[i+1:n])}{n-i}.
    \]
\end{definition}

\subsection{Layered Codes}\label{sec:graph-codes}

% One can view all strings over $\Sigma$ as a $|\Sigma|$-ary tree. The root node is labeled with the empty string $\emptyset$. Each vertex at depth $d$ has $|\Sigma|$ children, where each going edge is labeled with a different character in $\Sigma$. Vertices $v$ are labeled by the sequence of edges from the vertex to $v$.

% In our case, we do not care about the full sequence of updates (including some wrong updates and backtracks) that one takes to get to a transcript prefix; rather, we care only about the transcript currently in question. As a result, we define the following more general notion of a graph, in which different sequences of edges may end up at the same vertex.

\begin{definition}[Layered Graph Over An Alphabet] \label{def:layered-graph}
    Let $\Sigma$ be an alphabet. A \emph{layered graph over  $\Sigma$ of depth $n$} is a directed graph $G$ that satisfies the following properties: 
    \begin{itemize}
        \item The vertices of $G$ can be split up into layers $0,1,\dots, n$. There is exactly one vertex in layer $0$.
        \item Each vertex in layer $i < n$ has out-degree exactly $|\Sigma|$: it has $|\Sigma|$ children in layer $i+1$, where the $|\Sigma|$ out-edges are associated with not necessarily distinct elements of $\Sigma$.
    \end{itemize}
\end{definition}

% \begin{definition}[Pikachu Graph]
%     A \emph{pikachu graph} over the alphabet $\Sigma_{in}$ is a directed graded graph $G$ with a single root node, such that each vertex in layer $i$ has $|\Sigma_{in}|$ children in layer $i+1$. The $|\Sigma_{in}|$ outgoing edges from any vertex are associated with the elements of $\Sigma_{in}$.
% \end{definition}

If $G$ is a layered graph over $\Sigma_{in}$ of depth $n$, note that any path $p$ in $G$ from the root node to a vertex in layer $i$ can be associated with a string $\in \Sigma_{in}^i$. Likewise, any string $\in \Sigma_{in}^i$ corresponds to a unique path in $G$ from the root node to a vertex in layer $i$. We will interchangeably refer to the path $p$ or the associated string $\in \Sigma_{in}^i$. Furthermore, for any string $p \in \Sigma_{in}^i$, we use $v(p)$ to denote the vertex at the end of $p$.

% We consider graphs $G$ obtained from a tree $T$ of depth $d$ via some equivalences $\sim$ satisfying the following properties:
% \begin{itemize}
%     \item $v_1 \sim v_2$ only if $v_1$ and $v_2$ are at the same depth in $T$.
%     \item If $v_1 \sim v_2$, then $\ell(v_1) \sim \ell(v_2)$ and $r(v_1) \sim r(v_2)$, where $\ell(v)$ and $r(v)$ denote the left and right child respectively of $v$.
% \end{itemize}
\begin{definition}[Layered Code]
    Let $G$ be a layered graph over $\Sigma_{in}$ of depth $n$. A \emph{layered code} $\PikC$ of $G$ with the alphabet $\Sigma_{out}$ is an assignment of elements of $\Sigma_{out}$ to the edges of $G$. We refer to such an assignment as a $(G, \Sigma_{out})$-code.
\end{definition}

For any subgraph $H\subseteq G$, we define $\PikC(H)$ to be the subgraph $H$ inheriting labels from $\PikC$. Specifically, for a rooted path $p \in \Sigma_{in}^i$, $\PikC(p)\in \Sigma_{out}^i$ is the string of $i$ labels of the edges in $p$. 
% We denote a layered code over a layered graph $G$ onto an alphabet $\Sigma_{out}$ as a $(G,\Sigma_{out})$-code.

\subsection{Prefix Trees}

For any $(G,\Sigma_{out})$-code, any $\epsilon$, and any word $w \in \Sigma_{in}^n$, let the list $L_i(\PikC, w, \epsilon)$ be the list of nodes in layer $i$ that are the endpoint of at least one path whose encoding under $\PikC$ is close to the prefix of $w$ of length $i$ in their suffix distance. That is, 
\[
    L_i(\PikC, w, \epsilon) = \{ v(p) : p \in \Sigma_{in}^i ~\text{s.t.}~ \Delta_{sfx}(\PikC(p), w[1:i]) < 1 - \epsilon \}.
\]
We also write $L(\PikC, w, \epsilon) = \cup_{i=1}^n L_i(\PikC, w, \epsilon)$. 
% When $\PikC$, $w$, and $\epsilon$ are clear, we will often abbreviate $L_i(\PikC, w, \epsilon)$ as $L_i$ and $L(\PikC, w, \epsilon)$ as $L$.

Consider a subset $S \subseteq L(\PikC, w, \epsilon)$. For each $v \in S$, we pick a path $p$ from the root to $v$ satisfying $\Delta_{sfx} ( \PikC(p), w[1:|p|]) < 1 - \epsilon$. If these paths form a rooted tree, we call their union a \emph{prefix tree} of $S$. We denote by $\PT(\PikC, w, \epsilon)$ the set of all prefix trees of all subsets of $L(\PikC, w, \epsilon)$. 

\begin{lemma} \label{lemma:prefixtree-S}
    Fix $w \in \Sigma_{out}^n$ and $\epsilon > 0$. For any subset $S \subseteq L(\PikC, w, \epsilon)$, there is a prefix tree of $S$.
\end{lemma}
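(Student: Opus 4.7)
The plan is to induct on $|S|$, relying on the following splicing lemma as the key tool: if $p$ is a good path to $v$ passing through an intermediate vertex $u$ at layer $j$ and $q$ is any good path to $u$, then the spliced path formed by concatenating $q$ with $p[j+1:|v|]$ is also a good path to $v$. The splicing lemma follows from a direct computation of $\Delta_{sfx}$ at each cut index $i$: for $i \ge j$ the suffix agrees with that of $p$, while for $i < j$ the Hamming distance splits additively across the splice point and is bounded by the goodness of $q$ (contributing $(1-\epsilon)(j-i)$) and of $p$ at the cut $j$ (contributing $(1-\epsilon)(|v|-j)$), yielding a total $(1-\epsilon)(|v|-i)$.

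The base case $|S| = 0$ is trivial, with prefix tree $\{r\}$. For the inductive step with $|S| = k+1$, I would take $v^* \in S$ of maximum layer and apply the inductive hypothesis to $S' = S \setminus \{v^*\}$ to obtain a prefix tree $T'$. Because $v^*$ has strictly largest layer in $S$ and is not in $S'$, one checks that $v^* \notin V(T')$. I then pick any good path $p_{v^*} = u_0, \ldots, u_{i_{v^*}}$ to $v^*$ and let $j_0$ be the largest index with $u_{j_0} \in V(T')$. Since $u_{j_0+1}, \ldots, v^*$ are all new to $V(T')$, attaching the corresponding edges of $p_{v^*}$ preserves the rooted-tree structure. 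The resulting tree-path to $v^*$ is $T'(u_{j_0})$ concatenated with $p_{v^*}[j_0+1:]$, which by the splicing lemma is good as long as $T'(u_{j_0})$ itself is good.

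The main obstacle is ensuring $T'(u_{j_0})$ is good. When $u_{j_0} \in S' \cup \{r\}$ this is immediate from the inductive hypothesis (or triviality at the root). Otherwise $u_{j_0}$ is only an intermediate vertex on some committed path $T'(v')$, whose prefix need not be good, since prefixes of good paths are not necessarily good in the suffix-distance sense. To handle this I would choose $p_{v^*}$ adaptively among the good paths to $v^*$; when the problematic intermediate $u_{j_0}$ itself lies in $L$, the fix is to replace $T'(u_{j_0})$ with a fixed good path to $u_{j_0}$ and propagate this reroute through every $T'(v')$ passing through $u_{j_0}$, invoking the splicing lemma in both directions to preserve goodness of every committed path as well as of the new path to $v^*$. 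The remaining edge cases, where every good path to $v^*$ forces such an intersection at some $u_{j_0} \notin L$, form the technical crux and would be handled by iterating this rerouting while carefully exploiting the freedom to choose different good paths to $v^*$ from among the many available by the splicing lemma.
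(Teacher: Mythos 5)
Your splicing lemma is correct (for $i < j$ the Hamming count is additive across the cut, and both pieces are bounded by $(1-\epsilon)$ times their lengths; for $i \ge j$ the suffix coincides with that of $p$, using that a layered code's labels on $p[j+1:|v|]$ depend only on the edges traversed, not on how one reached $u$). But it is strictly weaker than what the argument needs, and you have correctly located the place where your induction breaks: the vertex $u_{j_0}$ at which the new path joins the existing tree need not lie in $L(\PikC, w, \epsilon)$ at all, so the committed tree-path $T'(u_{j_0})$ need not be good and no ``good path to $u_{j_0}$'' to reroute through need exist. Your closing paragraph acknowledges this as the ``technical crux'' and proposes ``iterating this rerouting while carefully exploiting the freedom to choose different good paths to $v^*$,'' but supplies no mechanism and no invariant, so the proof as written does not go through.

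The paper resolves exactly this difficulty by refining ``good'' to a quantitative statement. For a path $q$ of length $k$ it defines $\mathsf{deficit}(q) = \max_{0 \le j < k} \left[ \Delta(\PikC(q)[j+1:k], w[j+1:k]) - (1-\epsilon)(k-j) \right]$, and for a path $p$ of length $i$ it defines $\mathsf{excess}_k(p) = (1-\epsilon)(i-k) - \Delta(\PikC(p)[k+1:i], w[k+1:i])$. The key characterization is that replacing the first $k$ edges of a good path $p$ by $q$ stays good if and only if $\mathsf{deficit}(q) < \mathsf{excess}_k(p)$; your splicing lemma is the special case where $q$ is itself good, which forces $\mathsf{deficit}(q) < 0$. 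Armed with this, the paper takes a global rather than greedy route: start with an arbitrary good path $p(v)$ for every $v \in S$ simultaneously; locate the highest layer $k_{max}$ at which two of these paths pass through the same vertex $y$ but have distinct prefixes; among all paths through $y$, select the prefix $q$ of minimum deficit; and overwrite every such path's prefix with $q$. Because $\mathsf{deficit}(q) \le \mathsf{deficit}(q_\iota) < \mathsf{excess}_{k_{max}}(p(v_\iota))$ for each affected $v_\iota$, all paths stay good after the swap, even though $y$ may lie outside $L(\PikC, w, \epsilon)$. A lexicographic potential $\Psi(p) = (k_{max}, |\Lambda_{k_{max}}(p)|)$ strictly decreases with each such operation and terminates at $(0,0)$, at which point the union of paths is a tree. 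The ingredient your proposal is missing is precisely this: not merely a yes/no notion of goodness but a scalar attached to each prefix that can be compared and minimized across conflicting paths, which is what lets the merge succeed at intermediate vertices that are not themselves in the list.
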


\begin{proof}
    For a path $q$ of length $k$, we define the \emph{deficit} of $q$, denoted $\mathsf{deficit}(q)$, to be $\max_{0 \le j < k} \left[ \Delta(\PikC(q)[j+1:k], w[j+1:k]) - (1 - \epsilon) \cdot (k-j) \right]$. For a path $p$ of length $i$, we say that the \emph{excess} of $p$ at $k \le i$ is $(1 - \epsilon) \cdot (i - k) - \Delta(\PikC(p)[k+1:i], w[k+1:i])$, denoted $\mathsf{excess}_k(p)$. Note that for any path $p$ for which $v(p) \in L_i(\PikC, w, \epsilon)$, it holds that $\mathsf{excess}_k(p) > 0$ for any $k \le i$. 
    
    Furthermore, we claim that for any $p \in \Sigma_{in}^i$ such that $v(p) \in L_i$, letting $p'$ denote the path obtained by replacing the first $k$ edges by $q \in \Sigma_{in}^k$, we have that $\Delta_{sfx}(\PikC(p'), w[1:i]) < 1 - \epsilon$ iff $\mathsf{deficit}(q) < \mathsf{excess}_k(p)$. To see this, we can write 
    \begin{align*}
        \Delta_{sfx} (\PikC(p'), w[1:i])
        = \max \left\{ 
        \begin{aligned}
            \Delta_{sfx} ( \PikC(p)[k+1:i], w[k+1:i] ), \\ \max_{0\le j < k}\frac{\Delta(\PikC(p)[k+1:i], w[k+1:i]) + \Delta(q[j+1:k], w[j+1:k]))}{i-j}
        \end{aligned} \right\}.
    \end{align*}
    Note that $\Delta_{sfx} ( \PikC(p)[k+1:i], w[k+1:i] ) < 1 - \epsilon$ because $v(p) \in L_i$. Thus, $\Delta_{sfx}(\PikC(p'), w[1:i]) < 1 - \epsilon$ iff 
    \[
        \Delta(\PikC(p)[k+1:i], w[k+1:i]) + \Delta(q[j+1:k], w[j+1:k])) < (1-\epsilon) \cdot (i-j)
    \]
    for all $0 \le j < k$, or equivalently,
    \[
        \mathsf{deficit}(q) < \mathsf{excess}_k(p).
    \]
    
    Now, given a selection of paths $\{ p(v) \}_{v \in S}$, where $p(v)$ connects the root to $v$, for each $k \in [n]$ define $\Lambda_k(p)$ to be the set of vertices $y \in G$ in layer $k$ such that there are two paths $p(v)$ and $p(v')$, where $v \not= v' \in S$, for which $v(p(v)[1:k]) = v(p(v')[1:k]) = y$ but $p(v)[1:k] \not= p(v')[1:k]$. We define $\Psi(p)$ to be $(k_{max}, |\Lambda_{k_{max}}(p)|)$, with the lexicographical ordering, where $k_{max}$ is the largest layer $k$ for which $\Lambda_k(p)$ is nonempty. 
    
    In order to construct a prefix tree of $S$, we begin by choosing a path $p(v)$ from the root to $v$ for each $v \in S$. Next, we perform an operation to $p$ that decreases $\Psi(p)$, while preserving that $p$ satisfies $\Delta_{sfx}(\PikC(p(v)), w[1:|p(v)|]) < 1 - \epsilon$ for all $v \in S$. The operation we perform is as follows: Choose $y_{max} \in \Lambda_{k_{max}}(p)$. Furthermore, let $v_1, \dots, v_m \in S$ be such that $v(p(v_\iota)[1:k]) = y_{max}$. Define $q_\iota := p(v_\iota)[1:k]$ for each $\iota \in [m]$. Let $\hat{\iota} = \arg\min_{\iota \in [m]} \mathsf{deficit}(q_\iota)$, and let $q = q_{\hat{\iota}}$. Then, for each $\iota \in [m]$, we replace $p(v_\iota)$ with the path $p'(v_\iota) = q || p(v_\iota)[k+1:|p(v_\iota)|]$. Since $\mathsf{deficit}(q) \le \mathsf{deficit}(q_\iota)$, it holds that $\Delta_{sfx}(\PikC(p'(v_\iota)), w[1:|p'(v_\iota)|]) < 1-\epsilon$ for all $\iota \in [m]$. (For all other $v \in S$ where $p(v)$ doesn't pass through $y_{max}$, we define $p'(v) = p(v)$.)
    
    Note that $\Lambda_k(p')$ where $k > k_{max}$ must still be empty, as we have only altered edges in layers at most $k_{max}$. Furthermore, $|\Lambda_{k_{max}}(p')|$ is strictly less than $|\Lambda_{k_{max}}(p)|$, since we have replaced paths going through $y_{max}$ with paths going through $y_{max}$ so no new intersections in layer $k_{max}$ were created, and we have removed $y_{max}$ from $\Lambda_{k_{max}}(p)$. Thus, $\Psi(p') < \Psi(p)$. Also note that as long as $\Psi(p) > (0,0)$, we can continue this operation, so eventually $\Psi(p) = (0,0)$, at which point the union of $p(v), v \in S$ is a tree.
\end{proof}

For a subgraph $H$ of $G$ of depth at most $|w|$, we denote by $w(H)$ the graph where we write $w[i]$ on all edges at depth $i$. For a $(G, \Sigma_{out})$-code $\PikC$, recall that $\PikC(H)$ is the subgraph $H$ inheriting labels from $\PikC$. For two labelings $w$ and $\PikC$ of a subgraph $H$, we define $agr(w(H), \PikC(H))$ to be the number of edges of $H$ for which the labels are the same.

\begin{lemma} \label{lemma:agr>epPT}
    For any $w \in \Sigma_{out}^n$ and $\epsilon > 0$, and for any $PT \in \PT(\PikC, w, \epsilon)$, 
    \[
        agr(\PikC(PT), w(PT)) > \epsilon |PT|.
    \]
\end{lemma}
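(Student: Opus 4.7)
My plan is to prove the lemma by induction on $|S|$, where $PT$ is the prefix tree of some $S \subseteq L(\PikC, w, \epsilon)$ built from paths $\{p(v) : v \in S\}$ each satisfying $\Delta_{sfx}(\PikC(p(v)), w[1:|p(v)|]) < 1 - \epsilon$. For the base case $|S| = 1$, the tree $PT$ is a single path $p(v)$ of length $|PT| = |p(v)| \geq 1$, and applying the suffix distance bound with $j = 0$ gives $\Delta(\PikC(p(v)), w[1:|p(v)|]) < (1-\epsilon) |PT|$, which immediately yields $agr(\PikC(PT), w(PT)) > \epsilon |PT|$.

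For the inductive step with $|S| \geq 2$, I would select a leaf $v$ of the tree $PT$ (which necessarily lies in $S$, since every maximal rooted path in $PT$ must end at some element of $S$) and identify the deepest proper ancestor $u$ of $v$ in $PT$ such that $u \in S$ or $u$ has out-degree at least $2$ in $PT$. Such a $u$ exists because $|S| \geq 2$ implies the existence of some other path $p(w)$ which must either terminate at an ancestor of $v$ (putting that ancestor in $S$) or branch off from $p(v)$ at some ancestor (creating a branching point). Let $k$ be the depth of $u$. The ``pendant'' segment of $p(v)$ from $u$ to $v$ consists of exactly $|p(v)| - k$ edges, and by the maximality of $u$ these edges are disjoint from every other path in $PT$. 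Letting $PT'$ denote $PT$ with this pendant removed, $PT'$ is a prefix tree of $S \setminus \{v\}$ using the same paths $p(w)$ for $w \in S \setminus \{v\}$, so the inductive hypothesis yields $agr(\PikC(PT'), w(PT')) > \epsilon |PT'|$.

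To finish, I would apply the suffix distance bound $\Delta_{sfx}(\PikC(p(v)), w[1:|p(v)|]) < 1 - \epsilon$ at position $j = k$ to deduce $\Delta(\PikC(p(v))[k+1:|p(v)|], w[k+1:|p(v)|]) < (1-\epsilon)(|p(v)| - k)$, so the number of agreements between $\PikC$ and $w$ on the pendant edges strictly exceeds $\epsilon(|p(v)| - k)$. Adding these to the agreements on the edge-disjoint $PT'$ gives $agr(\PikC(PT), w(PT)) > \epsilon |PT'| + \epsilon(|p(v)| - k) = \epsilon |PT|$, completing the induction. The main subtlety is handling the case in which some element of $S$ is an \emph{internal} vertex of $PT$ (because another element of $S$ is its descendant); choosing $v$ to be a leaf of $PT$ rather than an arbitrary element of $S$, and letting $u$ be the deepest ancestor that is either in $S$ or a branching point, is what ensures that the removed pendant is genuinely edge-disjoint from $PT'$ while still strictly reducing $|S|$ by one.
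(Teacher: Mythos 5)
Your proposal is correct and takes essentially the same approach as the paper's: peel off a terminal segment ending at a leaf, bound the agreement on that segment via the suffix-distance condition, and recurse on the remaining (smaller) prefix tree. The paper inducts on the number of leaves and removes the entire branch from the deepest branching vertex $v_0$ to the leaf $v$ (silently discarding any elements of $S$ that happened to lie strictly between $v_0$ and $v$, which is fine because the lemma quantifies over all prefix trees of all subsets of $L$); you instead induct on $|S|$ and stop the pendant at the deepest ancestor lying in $S$ or at a branching point, so that the residual tree is a prefix tree of exactly $S \setminus \{v\}$. Both bookkeepings are valid and the core counting is identical, so this is the same argument with a slightly more explicit treatment of internal vertices of $PT$ that belong to $S$.
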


\begin{proof}
    First, note that by definition of $L(\PikC, w, \epsilon)$, for any path $p$ ending at $v \in L(\PikC, w, \epsilon)$ and not necessarily starting at the root, it holds that $agr(\PikC(p), w(p)) > \epsilon |p|$. We call this Property A.
    
    We prove the lemma by induction on the number of leaves. If $PT$ has only $1$ leaf, then it is a path from root to leaf, and by Property A, $agr(\PikC(PT), w(PT)) > \epsilon |PT|$. Now, if $PT$ has more than one leaf, let $p$ be a branch of $PT$ (i.e. a path from a vertex $v_0$ to a leaf $v$, where $v_0$ has more than one child). Then $PT \backslash p$ has one fewer leaf than $PT$, and by inductive hypothesis we have 
    \[
        agr(\PikC(PT \backslash p), w(PT \backslash p) > \epsilon ( |PT| - |p| ). 
    \]
    Furthermore, by Property A, we have that $agr(\PikC(p), w(p)) > \epsilon |p|$. Therefore, 
    \[
        agr(\PikC(PT), w(PT)) = agr(\PikC(PT \backslash p), w(PT \backslash p) + agr(\PikC(p), w(p)) > \epsilon |PT|.
    \]
\end{proof}
\subsection{Sensitive Layered Codes}

\begin{definition}[Sensitive Layered Code] \label{def:listGcode}
    Let $G$ be a layered graph over $\Sigma_{in}$ of depth $n$. A \emph{$\epsilon$-sensitive layered code} for $G$ and alphabet $\Sigma_{out}$ is a $(G, \Sigma_{out})$-code such that for all $w \in \Sigma_{out}^n$ and all $PT \in \PT(\PikC, w, \epsilon)$,
    \[
        agr(\PikC(PT), w(PT)) \le (1 + \epsilon) n. \numberthis \label{eqn:agr-condition}
    \]
\end{definition}

\begin{theorem} \label{thm:listGcode-exist}
    For $\epsilon \in (0, \frac12)$ and a layered graph $G$ over $\Sigma_{in}$ with depth $n \ge \frac{2}{1-\epsilon}$, let $|\Sigma_{out}| > 2 |\Sigma_{in}|)^{6/\epsilon^2}$. Then, a random $(G, \Sigma_{out})$-code is a $\epsilon$-sensitive layered code on $G$ with alphabet $\Sigma_{out}$ with probability at least $1 - 2^{-n/4\epsilon}$.
\end{theorem}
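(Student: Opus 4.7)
The plan is to use the probabilistic method: label each edge of $G$ with an independent uniformly random element of $\Sigma_{out}$ and show the probability that $\PikC$ fails to be $\epsilon$-sensitive is at most $2^{-n/4\epsilon}$. The bad event is that some $w \in \Sigma_{out}^n$ and some $PT \in \PT(\PikC, w, \epsilon)$ satisfy $agr(\PikC(PT), w(PT)) > (1+\epsilon)n$, and I bound its probability by a union bound over pairs $(w, PT)$ where $PT$ ranges over all rooted subtrees of $G$.

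Three ingredients drive the bound. First, for a fixed rooted subtree $PT$ of size $s$ and a fixed $w$, $agr(\PikC(PT), w(PT))$ is a sum of $s$ iid Bernoulli$(1/|\Sigma_{out}|)$ variables, since edges are labeled independently. Second, by Lemma~\ref{lemma:agr>epPT} any prefix tree $PT \in \PT(\PikC, w, \epsilon)$ already satisfies $agr > \epsilon s$, so any bad pair $(w, PT)$ must satisfy $agr > t_s := \max(\epsilon s, (1+\epsilon) n)$, giving the per-pair probability bound $\binom{s}{t_s}|\Sigma_{out}|^{-t_s}$. Third, the number of rooted subtrees of $G$ with $s$ edges is at most $(4|\Sigma_{in}|)^s$: encode each such tree by its DFS traversal, which is a length-$2s$ Dyck word (at most $4^s$ such words) with one of $|\Sigma_{in}|$ labels on each of the $s$ descent steps.

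Combining, I obtain
\[
\Pr[\text{bad}] \;\le\; \sum_s (4|\Sigma_{in}|)^s \cdot |\Sigma_{out}|^n \cdot \binom{s}{t_s}|\Sigma_{out}|^{-t_s},
\]
which I split at the threshold $s^* := (1+\epsilon)n/\epsilon$. For $s \le s^*$, using $t_s = (1+\epsilon) n$ and $\binom{s}{t_s} \le (e/\epsilon)^{(1+\epsilon) n}$, each summand is at most $(4|\Sigma_{in}|)^s (e/\epsilon)^{(1+\epsilon)n}|\Sigma_{out}|^{-\epsilon n}$, and the sum over the $O(n/\epsilon)$ relevant values of $s$ is controlled by its value at $s = s^*$. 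For $s > s^*$, using $t_s = \epsilon s$ and $\binom{s}{\epsilon s} \le (e/\epsilon)^{\epsilon s}$, each summand becomes $|\Sigma_{out}|^n \bigl(4|\Sigma_{in}|(e/\epsilon)^\epsilon / |\Sigma_{out}|^\epsilon\bigr)^s$; the hypothesis $|\Sigma_{out}| > (2|\Sigma_{in}|)^{6/\epsilon^2}$ makes the common ratio at most $1/2$, so the geometric tail is dominated by its value at $s = s^*$. Both regimes produce the same type of bound, and a direct logarithmic calculation using the alphabet hypothesis together with $n \ge 2/(1-\epsilon)$ yields $\Pr[\text{bad}] \le 2^{-n/4\epsilon}$.

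\textbf{Main obstacle.} The delicate point is handling very large prefix trees: the number of subtrees of size $s$ grows as $(4|\Sigma_{in}|)^s$, which overwhelms any $s$-independent probability bound like $\binom{s}{(1+\epsilon)n}|\Sigma_{out}|^{-(1+\epsilon)n}$, so the naive union bound diverges as $s \to \infty$. The fix is precisely Lemma~\ref{lemma:agr>epPT}'s strengthening from ``$agr > (1+\epsilon)n$'' to ``$agr > \epsilon s$'' for large $s$, which upgrades the per-tree failure probability to $|\Sigma_{out}|^{-\epsilon s}$ and makes the series summable. The generous $6/\epsilon^2$ exponent in the alphabet hypothesis provides the slack needed to absorb the Catalan $4^s$ factor and the $(e/\epsilon)^\epsilon$ combinatorial correction in the final bookkeeping.
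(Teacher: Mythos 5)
Your proof is correct and uses the same core machinery as the paper (the union bound over pairs $(w, PT)$, with Lemma~\ref{lemma:agr>epPT} supplying the crucial $agr > \epsilon s$ threshold that makes the tail summable). The one structural difference is how the two regimes are handled. The paper collapses the small-tree case into the large-tree case via an extension trick: if $|PT| < (1 + 1/\epsilon)n$, it extends $PT$ to a tree $PT'$ of size exactly $(1+1/\epsilon)n$, notes that $\epsilon|PT'| = (1+\epsilon)n$, and observes $agr(PT) \le agr(PT')$, so it suffices to bound $agr \le \epsilon s$ only for $s \ge (1+1/\epsilon)n$. This gives a single geometric series with no case split and no binomial bookkeeping at the threshold. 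You instead union bound over all sizes $s$ directly, using the threshold $t_s = \max(\epsilon s, (1+\epsilon)n)$, and split the sum at $s^* = (1+1/\epsilon)n$; this works but costs an extra polynomial-in-$n$ factor from the $O(n/\epsilon)$ small-$s$ terms, which is then absorbed by slack in the alphabet hypothesis. Your counting bounds are actually tighter than the paper's — $(4|\Sigma_{in}|)^s$ via Catalan numbers instead of $(|\Sigma_{in}|+1)^{2s}$, and $\binom{s}{\epsilon s} \le (e/\epsilon)^{\epsilon s}$ instead of $\le 2^s$ — though the paper's coarser bounds suffice given the generous $6/\epsilon^2$ exponent. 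You do wave your hands at the final ``direct logarithmic calculation,'' and the $(e/\epsilon)^\epsilon$ correction you introduce needs a line or two to verify it stays below the $|\Sigma_{out}|^\epsilon$ budget (it does, since $(e/\epsilon)^\epsilon < 2e$ for $\epsilon < 1/2$), but the argument is sound.
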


The proof of Theorem~\ref{thm:listGcode-exist} essentially follows from the proof of Theorem 22 in~\cite{BravermanE14}. To prove it, we will need the following two lemmas:

\begin{lemma} \label{lemma:numbersubtrees}
    If $G$ is a layered graph over $\Sigma_{in}$, there exist at most $(|\Sigma_{in}|+1)^{2s}$ rooted subtrees of $G$ of size $s$.
\end{lemma}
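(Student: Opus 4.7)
The plan is to encode each rooted subtree of $G$ of size $s$ by a depth-first traversal starting at the root, and then bound the number of such traversals. At each step of the DFS, the walker either descends along one of the $|\Sigma_{in}|$ labeled out-edges of its current vertex, or ascends to the parent; there are thus at most $|\Sigma_{in}|+1$ possible moves per step. Since each edge of a rooted subtree is traversed exactly twice by DFS (once going down, once coming back up), a subtree of size $s$ (interpreting size as at most $s$ edges, which also covers the $s$-vertex case since that yields $s-1$ edges) induces a walk of length at most $2s$. Hence there are at most $(|\Sigma_{in}|+1)^{2s}$ such DFS walks.

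To conclude, I would verify that the map from rooted subtrees to their DFS walks is injective once a fixed tie-breaking rule is chosen (e.g., visiting the children of each vertex in lexicographic order of their edge labels). Because $G$ is a layered graph, any rooted path is uniquely determined by its sequence of edge labels, so reading off the descent moves of a walk reconstructs the edges and vertices of the subtree unambiguously. Injectivity of the encoding then gives $|\{\text{rooted subtrees of size } s\}| \le (|\Sigma_{in}|+1)^{2s}$.

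The only minor subtlety is that DFS walks from different subtrees may have slightly different lengths (all bounded by $2s$), rather than a single fixed length. This is handled either by padding with a terminator symbol or by replacing the bound with $\sum_{s' \le s}(|\Sigma_{in}|+1)^{2s'}$, which is within a constant factor of $(|\Sigma_{in}|+1)^{2s}$ and absorbed into the stated bound (and in fact, counting tree DFS walks of length exactly $2s$ suffices when $s$ denotes edges). I do not anticipate any real obstacle: this is the standard DFS-encoding argument for counting rooted subtrees, and the layered-graph structure is exactly what makes the descent-label sequence uniquely identify the subtree.
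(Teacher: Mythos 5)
Your proof is correct and follows essentially the same DFS-encoding argument as the paper: a depth-first traversal with $|\Sigma_{in}|+1$ possible moves per step (descend via a labeled edge or ascend to the parent), each edge traversed exactly twice, giving at most $(|\Sigma_{in}|+1)^{2s}$ subtrees. You are a bit more careful than the paper about injectivity and about the vertices-versus-edges interpretation of ``size,'' but the approach is the same.
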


\begin{proof}
    Consider the path obtained by conducting a DFS on a rooted subtree, where each symbol indicates which child to go to, and $|\Sigma_{in}|+1$ indicates to go back up the edge traversed downwards to get to the current vertex (note that this edge is unique since we only traverse a subtree). Then, each edge in the subtree is traversed twice. Thus, the number of rooted subtrees of $G$ is at most $(|\Sigma_{in}| + 1)^{2s}$.
\end{proof}

\begin{lemma} \label{lemma:prob-agr-on-tree}
    For any $w \in \Sigma_{out}^n$ and for any collection $PT$ of $s$ edges of $G$, it holds that 
    \[
        Pr [ agr(\PikC(PT), w(PT)) \ge \epsilon s ] \le |\Sigma_{out}|^{-\epsilon s} \binom{s}{\epsilon s} \le |\Sigma_{out}|^{-\epsilon s} 2^s,
    \]
    where randomness is taken over the random choice of layered code $\PikC$ on $G$ with $\Sigma_{out})$.
\end{lemma}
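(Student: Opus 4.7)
The plan is to apply a union bound over subsets of $PT$ of size $\epsilon s$. The key observation is that under a uniformly random $(G, \Sigma_{out})$-code $\PikC$, the label of each edge of $G$ is drawn independently and uniformly from $\Sigma_{out}$; in particular, the $s$ labels $\PikC$ assigns to the $s$ edges of $PT$ are mutually independent, each uniform over $\Sigma_{out}$.

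First, I would rewrite the event of interest. The event $agr(\PikC(PT), w(PT)) \ge \epsilon s$ is exactly the event that there exists some subset $T \subseteq PT$ with $|T| = \lceil \epsilon s \rceil$ such that $\PikC$ agrees with $w$ on every edge of $T$. For any such fixed $T$, the probability that the $\epsilon s$ corresponding independent uniform labels all equal the prescribed values of $w$ on those edges is exactly $|\Sigma_{out}|^{-\epsilon s}$.

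Next, I would take a union bound over the $\binom{s}{\epsilon s}$ choices of $T$, yielding
\[
    \Pr\bigl[\, agr(\PikC(PT), w(PT)) \ge \epsilon s \,\bigr] \le \binom{s}{\epsilon s} \cdot |\Sigma_{out}|^{-\epsilon s}.
\]
The second stated inequality then follows immediately from the crude bound $\binom{s}{k} \le \sum_{j=0}^{s} \binom{s}{j} = 2^s$, valid for every $0 \le k \le s$.

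There is essentially no obstacle to this argument; the only point that requires a moment of care is the independence of the edge labels on $PT$. This holds because $PT$ is (by assumption) a collection of $s$ distinct edges of $G$, and a random $(G, \Sigma_{out})$-code assigns labels to edges (not to vertices or paths), so there is no ``collision'' even when different rooted paths of $G$ end at the same vertex. With independence in hand, the argument is a one-line union bound.
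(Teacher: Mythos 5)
Your proof is correct and follows essentially the same approach as the paper: a union bound over the $\binom{s}{\epsilon s}$ subsets of edges where agreement could occur, each contributing probability $|\Sigma_{out}|^{-\epsilon s}$ by the independence and uniformity of the random edge labels, followed by the crude bound $\binom{s}{\epsilon s} \le 2^s$. You spell out the independence point a bit more explicitly than the paper does, but the underlying argument is the same.
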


\begin{proof}
    The first inequality follows from the union bound over all possible locations where $\PikC(PT)$ and $w(PT)$ agree, and the second inequality follows from $\binom{s}{\epsilon s} \le 2^s$.
\end{proof}

\begin{proof}[Proof of Theorem~\ref{thm:listGcode-exist}]
    If $w \in \Sigma_{out}^n$ violates~\eqref{eqn:agr-condition}, then there is a prefix tree $PT$ of a subset $S \subseteq L(\PikC, w, \epsilon)$ such that $agr(\PikC(PT), w(PT)) > \max \{ \epsilon |PT|, (1 + \epsilon) n \}$, where $agr(\PikC(PT), w(PT)) > \epsilon |PT|$ is given by Lemma~\ref{lemma:agr>epPT}. To show that such $w$ does not exist, we will show that with high probability over the choice of a random $(G, \Sigma_{out})$-code, $agr(\PikC(PT), w(PT)) \le \max \{ \epsilon |PT|, (1 + \epsilon) n \}$ for all rooted subtrees $PT$ and $w \in \Sigma_{out}^n$. It is enough to prove this claim for all $|PT| \ge (1 + \frac1\epsilon) n$, since if $|PT| < (1 + \frac1\epsilon) n$, then we can extend $PT$ to a tree $PT'$ of size $(1 + \frac1\epsilon) n$ and for this subtree it will hold that $agr(\PikC(PT'), w(PT')) \le (1 + \epsilon) n$ and thus $agr(\PikC(PT), w(PT)) \le (1 + \epsilon) n$. We thus seek to show that with high probability over the choice of a random layered code, $agr(\PikC(PT), w(PT)) \le \epsilon |PT|$ for all rooted subtrees $PT$ of size $\ge (1 + \frac1\epsilon) n$ and $w \in \Sigma_{out}^n$.
    
    Using Lemmas~\ref{lemma:numbersubtrees} and~\ref{lemma:prob-agr-on-tree}, we union bound over all possible trees of size $\ge (1 + \frac1\epsilon) n =: s$ and words $w$ to see that the probability there exists $|PT| \ge (1 + \frac1\epsilon) n$, $w \in \Sigma_{out}^n$ for which $agr(\PikC(PT), w(PT)) \ge \epsilon s$ is upper bounded by
    \begin{align*}
        \sum_{s = (1 + \frac1\epsilon) n}^\infty |\Sigma_{out}|^{-\epsilon s} 2^s \cdot (|\Sigma_{in}| + 1)^{2s} \cdot |\Sigma_{out}|^n
        &= |\Sigma_{out}|^n \sum_{s = (1 + \frac1\epsilon) n}^\infty \left( \frac{2 \cdot (|\Sigma_{in}| + 1)^2}{|\Sigma_{out}|^\epsilon} \right)^s \\
        &\le |\Sigma_{out}|^n \sum_{s = (1 + \frac1\epsilon) n}^\infty \left( \frac{8 \cdot |\Sigma_{in}|^2}{|\Sigma_{out}|^\epsilon} \right)^s
    \end{align*}
    Since $|\Sigma_{out}| > (2 |\Sigma_{in}|)^{6/\epsilon^2} > 8 |\Sigma_{in}|^2$, this is upper bounded by 
    \begin{align*}
        \le |\Sigma_{out}|^n \left( \frac{8 \cdot |\Sigma_{in}|^2}{|\Sigma_{out}|^\epsilon} \right)^{(1 + \frac1\epsilon) n - 1}
        &= \frac{(8 \cdot |\Sigma_{in}|^2)^{(1 + \frac1\epsilon) n - 1}}{|\Sigma_{out}|^{\epsilon n - \epsilon}} \\
        &\le \frac{(8 \cdot |\Sigma_{in}|^2)^{(1 + \frac1\epsilon) n - 1}}{(2\cdot |\Sigma_{in}|)^{6(n-1)/\epsilon}} \\
        &\le \frac{(8 \cdot |\Sigma_{in}|^2)^{(1 + \frac1\epsilon) n - 1}}{(8 \cdot |\Sigma_{in}|^2)^{2(n-1)/\epsilon}} \\
        &\le \left( 8 \cdot |\Sigma_{in}|^2 \right)^{-((1-\epsilon) n - 2)/\epsilon} \\
        &\le 2^{-n/4\epsilon},
    \end{align*}
    where in the last line we use that $\epsilon < \frac12$ and $(1-\epsilon) n \ge 2$.
\end{proof}
\subsection{Decoding}

Sensitive $(G, \Sigma_{out})$ codes will be useful for us because they guarantee that for most locations $i$ on which $\PikC(x)$ and $w$ agree, $w[1:i]$ decodes to $v(x[1:i])$. First, we define decoding.

\begin{definition}[$\PikCDec$]
    Given an $\epsilon$-sensitive-$(G, \Sigma_{out})$-code $\PikC$, we define $\PikCDec$ to be the algorithm that takes as input a string $w \in \Sigma_{out}^i$ and outputs $v \in G$ such that there exists a path $p \in \Sigma_{in}^{i}$ satisfying $\Delta(\PikC(p), w) < 1 - \epsilon$ if exactly one such $v$ exists, and $\perp$ otherwise.
\end{definition}

The main theorem of this section is the following:

\begin{theorem}\label{thm:listGcode-decoding}
    For every $\epsilon, n$, for any layered graph over $\Sigma_{in}$ of depth $n$ and any $\epsilon$-sensitive-$(G, \Sigma_{out})$-code $\PikC : \Sigma_{in}^n \to  \Sigma_{out}^n$, and for any $x \in \Sigma_{in}^n$ and $w \in \Sigma_{out}^n$, let $J$ be the set of indices where $\PikC(x)[i] = w[i]$. For all but at most $2\epsilon n$ values of $i \in J$, it holds that $\PikCDec(w[1:i]) = v(x[1:i])$.
\end{theorem}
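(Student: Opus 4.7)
The plan is to partition the bad indices $B = \{i \in J : \PikCDec(w[1:i]) \ne v(x[1:i])\}$ into two disjoint parts and bound each separately. Let $M = \{i \in B : v(x[1:i]) \notin L_i(\PikC, w, \epsilon)\}$ (``missing'': the canonical vertex is not in the list) and $D = \{i \in B : v(x[1:i]) \in L_i,\ |L_i| \ge 2\}$ (``duplicate'': a second vertex shares the list with the canonical one). Then $B = M \cup D$ is a disjoint union, and I will bound $|M|$ by a potential argument and $|D|$ by a prefix-tree construction invoking sensitivity.

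To control $|M|$, I would analyze the potential $f(i) = A(i) - \epsilon i$ where $A(i) = |J \cap [1:i]|$. Unpacking the definition of $\Delta_{sfx}$ shows $v(x[1:i]) \in L_i$ iff $A(i) - A(j) > \epsilon(i-j)$ for every $j < i$, equivalently iff $f(i)$ strictly exceeds $\max_{j<i} f(j)$. Since each step of $f$ increases by at most $1-\epsilon$ and $f(0) = 0$, the number of strict new maxima is at least $f(n)/(1-\epsilon) = (|J| - \epsilon n)/(1-\epsilon)$, and each such maximum sits in $J$; subtracting gives $|M| \le \epsilon(n - |J|)/(1 - \epsilon)$.

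To control $|D|$, for each $i \in D$ I pick an alternative vertex $v_i' \in L_i \setminus \{v(x[1:i])\}$ along with a witnessing path $p_i$ of length $i$ satisfying $\Delta_{sfx}(\PikC(p_i), w[1:i]) < 1 - \epsilon$. The key observation is that the $j = i-1$ term of $\Delta_{sfx}$ forces $\PikC(p_i)[i] = w[i]$, so the last edge of $p_i$ is always an agreement. Let $i^*$ be the largest index with $v(x[1:i^*]) \in L_{i^*}$ and apply Lemma~\ref{lemma:prefixtree-S} to $\{v(x[1:i^*])\} \cup \{v_i' : i \in D\}$ to produce a prefix tree $PT \in \PT(\PikC, w, \epsilon)$ that contains the canonical segment $x[1:i^*]$ together with all the paths $p_i$. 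The canonical segment contributes $A(i^*)$ agreements; each $p_i$ furthermore contributes its own last edge, which is distinct from $x$'s last edge at layer $i$ (since $v_i' \ne v(x[1:i])$) and from every other $p_j$'s last edge (different layers). Hence $agr(PT) \ge A(i^*) + |D|$, and the sensitivity bound $agr(PT) \le (1+\epsilon) n$ yields $|D| \le (1+\epsilon)n - A(i^*)$.

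Finally I would combine. Any $i > i^*$ with $i \in J$ must lie in $M$ (by maximality of $i^*$), so $A(i^*) \ge |J| - |M|$, and substituting gives $|D| \le (1+\epsilon)n - |J| + |M|$. Adding this to the $M$-bound, and noting that in the regime $|J| \le 2\epsilon n$ the claim is trivial because $|B| \le |J|$, should produce $|M| + |D| \le 2\epsilon n$ in the remaining regime. The main obstacle is executing this final amortization with the right constants: the naive sum of the two bounds exceeds $2\epsilon n$ by a factor of roughly $1/(1-\epsilon)$, and tightening to the claimed $2\epsilon n$ requires recognizing that the indices above $i^*$ inflating the $M$-bound are exactly the indices pulling $A(i^*)$ below $|J|$ and thereby inflating the $D$-bound, so these two sources of slack should cancel rather than compound under a careful reformulation (for instance by charging each bad index to a distinct agreement in a single, jointly-constructed prefix tree rather than two separate arguments).
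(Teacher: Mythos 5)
Your decomposition into $M$ (canonical vertex missing from the list) and $D$ (list has a second element) is sound, and your potential argument for $|M|$ via $f(i) = A(i) - \epsilon i$ is correct: it is exactly the content of the paper's Lemma~\ref{lem:dec-freq}, rephrased in terms of strict new maxima. The paper then applies that lemma and combines it with Lemma~\ref{lemma:Li<1} (which says $|L_i(\PikC,w,\epsilon)| > 1$ for at most $\epsilon n$ layers) to get $|M| \le \epsilon n$ and $|D| \le \epsilon n$, hence $2\epsilon n$ total.

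The gap is in your $|D|$ bound, and it has two layers. First, a technical issue: Lemma~\ref{lemma:prefixtree-S} guarantees \emph{existence} of a prefix tree for a subset $S \subseteq L(\PikC,w,\epsilon)$, but it constructs one by repeatedly replacing path prefixes to eliminate crossings. You cannot prescribe that the output contains the canonical segment $x[1:i^*]$ together with your chosen paths $p_i$; the algorithm may swap them out. Without the canonical segment in the tree, the only agreement count you get from the path to $v(x[1:i^*])$ is the generic $\epsilon i^*$ from Lemma~\ref{lemma:agr>epPT}, not $A(i^*)$. Second, and more fundamentally: even if you could force $agr(PT) \ge A(i^*) + |D|$, the bound $|D| \le (1+\epsilon)n - A(i^*)$ is not $O(\epsilon n)$ unless $|J|$ is close to $n$. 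For instance, with $|J| = n/2$ and $\epsilon$ small, this gives $|D| \lesssim n/2$, while the trivial bound $|B| \le |J| = n/2$ is also useless. The issue is that your $D$-bound depends on $|J|$ (a property of the fixed codeword $\PikC(x)$), whereas the quantity you are trying to bound, the number of layers $i$ with $|L_i| > 1$, depends only on $w$ and should be bounded by $\epsilon n$ unconditionally. Your closing paragraph correctly senses that the two slacks ought to cancel, but they genuinely do not under the single-prefix-tree count you propose, because each layer in $[n] \setminus J$ contributes zero agreements to your tree yet still eats into the sensitivity budget $(1+\epsilon)n$.

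The ingredient you are missing is the paper's $w \to w'$ modification in Lemma~\ref{lemma:Li<1}: having fixed a prefix tree $PT$ of all of $L(\PikC,w,\epsilon)$, one changes $w[i]$ to $w'[i] := \PikC(e)$ for an arbitrary edge $e \in PT$ at layer $i$ whenever no edge of $PT$ at layer $i$ agrees with $w$. This is a legal move because it only increases every path's agreement with the target, so the same $PT$ is still a valid prefix tree for $w'$, and the sensitivity bound $agr(\PikC(PT), w'(PT)) \le (1+\epsilon)n$ still applies. Now every layer of $PT$ contributes at least one agreement to $w'$, so at most $\epsilon n$ layers contribute two or more, and those are the only layers where $|L_i|$ can exceed one. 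This is what lets the $|D|$ bound be stated purely in terms of $n$ and $\epsilon$, independent of $x$, and is where your two sources of slack actually get decoupled.
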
 

We defer the proof of Theorem~\ref{thm:listGcode-decoding} to after we state a few lemmas.

\begin{lemma} \label{lemma:Li<1}
    Given an $\epsilon$-sensitive-$(G, \Sigma_{out})$-code $\PikC$, for any $w \in \Sigma_{out}^n$ and $\epsilon > 0$, it holds that $|L_i(\PikC, w, \epsilon)| \le 1$ for at least $(1 - \epsilon) n$ values of $i \le n$.
\end{lemma}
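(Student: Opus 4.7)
The plan is to apply Lemma~\ref{lemma:prefixtree-S} to $S = L(\PikC, w, \epsilon)$ to obtain a prefix tree $PT$ containing a witness path to every vertex of $L$, and then sandwich $agr(\PikC(PT), w(PT))$ between a structural lower bound coming from the definition of $L_j$ and the upper bound $(1+\epsilon)n$ provided by Definition~\ref{def:listGcode}.

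The key observation is a ``last-edge agreement'' principle: for any $v \in L_j$ with chosen witness path $p$, the hypothesis $\Delta_{sfx}(\PikC(p), w[1:j]) < 1 - \epsilon$ applied at $\ell = j-1$ (the length-one suffix) forces the distance rate to be strictly less than $1$, hence exactly $0$. So the final edge of $p$, which is the in-edge of $v$ at depth $j$, must carry the label $w[j]$ under $\PikC$. Distinct vertices of $L_j$ correspond to distinct in-edges at depth $j$ in $PT$, each contributing an agreement, so summing across depths gives $agr(\PikC(PT), w(PT)) \ge \sum_{j=1}^n |L_j|$. Combined with sensitivity, this yields $\sum_{j=1}^n |L_j| \le (1+\epsilon) n$. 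Writing $I_0$, $I_1$, $I_{\ge 2}$ for the subsets of $[1,n]$ with $|L_j| = 0$, $|L_j| = 1$, and $|L_j| \ge 2$ respectively, the bound $\sum_j |L_j| \ge |I_1| + 2|I_{\ge 2}|$ together with $|I_0| + |I_1| + |I_{\ge 2}| = n$ rearranges to $|I_{\ge 2}| \le \epsilon n + |I_0|$. Since $\{i : |L_i| \le 1\} = I_0 \cup I_1$ has size $n - |I_{\ge 2}|$, this gives the advertised $(1-\epsilon) n$ lower bound whenever the slack term $|I_0|$ is small, and in particular when $PT$ reaches full depth $n$ without any ``gap'' depths.

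The main technical hurdle I anticipate is absorbing the $|I_0|$ slack. I expect one of two refinements to close this gap. Either (a) strengthen the lower bound on $agr(PT)$ by counting additional contributions from witness paths to vertices in $L_{j'}$ with $j' > j \in I_0$ --- each such path has $> \epsilon j'$ total agreements by the $\ell = 0$ suffix condition, a fraction of which must land at $I_0$-depths that $PT$ already traverses --- or (b) invoke sensitivity for an auxiliary word $w'$ built from the two witness encodings $\PikC(p_i^{(1)})$, $\PikC(p_i^{(2)})$ at some $i \in I_{\ge 2}$, ruling out configurations in which $|I_0|$ and $|I_{\ge 2}|$ are simultaneously large (such configurations would force $\PikC(p_i^{(1)})$ and $\PikC(p_i^{(2)})$ to agree on too many positions, and then the prefix tree $p_i^{(1)} \cup p_i^{(2)}$ under $w' = \PikC(p_i^{(1)})$ would have agreement exceeding $(1+\epsilon) n$). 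Either approach should yield the sharp bound $|I_{\ge 2}| \le \epsilon n$.
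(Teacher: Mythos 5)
Your opening skeleton matches the paper's: build a prefix tree $PT$ of $L(\PikC, w, \epsilon)$ via Lemma~\ref{lemma:prefixtree-S}, use last-edge agreement to get $agr(\PikC(PT), w(PT)) \ge \sum_j |L_j|$, and compare with the sensitivity bound $(1+\epsilon)n$ to reach $|I_{\ge 2}| \le \epsilon n + |I_0|$. You have also correctly identified the crux: this leaves $|I_0|$ uncontrolled, and neither of your proposed fixes is carried through. Approach (a) gives no control on where the surplus agreements of deeper witness paths land, so it does not obviously yield a sharp bound; approach (b) reasons only at a single $i \in I_{\ge 2}$, whereas you need uniform control across all layers simultaneously.

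The paper's resolution is in the spirit of your (b) but applied to every layer at once. Instead of feeding $w$ directly into the sensitivity bound, first modify $w$ to a word $w'$ that agrees with at least one edge of $PT$ in every layer $PT$ traverses: if no edge of $PT$ at layer $i$ has $\PikC$-label $w[i]$, set $w'[i]$ to the label of an arbitrary such edge; otherwise keep $w'[i] = w[i]$. Only positions where $w$ disagreed with \emph{every} edge of $PT$ at that layer are altered, so the suffix distance of each witness path $p(v) \subseteq PT$ to $w'$ is no larger than to $w$; hence $L(\PikC, w, \epsilon) \subseteq L(\PikC, w', \epsilon)$, $PT \in \PT(\PikC, w', \epsilon)$, and sensitivity gives $agr(\PikC(PT), w'(PT)) \le (1+\epsilon)n$. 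Now there is at least one agreement with $w'$ per layer $PT$ touches, so at most $\epsilon n$ of those layers carry two or more; and since $w'[i] = w[i]$ wherever a witness path's last edge already agreed with $w$, each $v \in L_i$ still contributes a distinct in-edge agreement under $w'$. Thus $|L_i| \ge 2$ can hold in at most $\epsilon n$ layers, which is the claim. This $w'$ construction is the ingredient your plan is missing; it absorbs the $|I_0|$ slack in one stroke rather than requiring the case analysis you anticipate.
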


\begin{proof}
    Given $w$, we construct $w'$ as follows. Pick a prefix tree $PT$ of $L(\PikC, w, \epsilon)$. For every $i \le n$, define $PT_i(w)$ to be the set of edges in the $i$'th layer of $PT$. If for all $e \in PT_i(w)$ we have that $\PikC(e) \not= w[i]$, then set $w'[i]$ to be $\PikC(e)$ for some arbitrary $e \in PT_i(w)$. Otherwise, set $w'[i] = w[i]$.
    
    Notice that $L(\PikC, w, \epsilon) \subseteq L(\PikC, w', \epsilon)$, since the only indices of $w$ that were changed were those that did not agree with any of the labels of $PT$ in the corresponding layer, so for any path $p(v) \subseteq PT, v \in L_i(\PikC, w, \epsilon)$, it holds that $\Delta_{sfx}(\PikC(p(v)), w'[1:|p(v)|]) \le \Delta_{sfx}(\PikC(p(v)), w[1:|p(v)|]) < 1 - \epsilon$. This means that $PT \in \PT(\PikC, w', \epsilon)$. But by the definition of an $\epsilon$-sensitive-$(G, \Sigma_{out})$-code (Definition~\ref{def:listGcode}), 
    \[
        agr(\PikC(PT), w'(PT))\leq (1 + \epsilon) n.
    \]
    On the other hand, we constructed $w'$ so that in each layer $i$, there is at least one edge on which $\PikC$ and $w'$ agree. Therefore, the number of layers in which there is more than $1$ edge on which $\PikC$ and $w'$ agree is $\le \epsilon n$. In other words, the number of layers in which there is at most $1$ edge on which $\PikC$ and $w'$ agree is at least $(1-\epsilon) n$. Let this set of layers be $I \subseteq [n]$.
    
    Finally, note that for any vertex $v \in L_i(\PikC, w, \epsilon)$ and associated path $p(v) \subseteq PT$, it must hold that $\PikC(p(v))[i] = w[i] = w'[i]$ (otherwise the suffix distance of $\PikC(p(v))$ to $w$ is $1$), so for each of the $\ge (1-\epsilon) n$ layers in $I$, there is at most $1$ vertex $v \in L_i(\PikC, w, \epsilon)$.
\end{proof}

\begin{lemma} [\cite{Gelles-survey}]\label{lem:dec-freq}
    For any $r, s \in \Sigma^n$, if $\Delta(r, s) = \beta n$, then there exists a set of indices $I \subseteq [n]$ of size $|I| \ge (1 - \beta/\alpha)n$ such that for any $i \in I$,
    \[
        \Delta_{sfx}(r[1:i], s[1:i]) < \alpha.
    \]
\end{lemma}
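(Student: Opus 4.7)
The plan is to prove the contrapositive form: the set $B := \{ i \in [n] : \Delta_{sfx}(r[1:i], s[1:i]) \ge \alpha \}$ of ``bad'' prefix lengths has size at most $\beta n/\alpha$, which immediately gives $|I| = n - |B| \ge (1 - \beta/\alpha) n$.

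First I would unfold the definition of suffix distance: $i \in B$ iff there exists $j_i \in \{0, 1, \ldots, i-1\}$ such that $\Delta(r[j_i+1:i], s[j_i+1:i]) \ge \alpha (i - j_i)$. I will call any such interval $(j_i, i]$ a \emph{witness interval} for the bad index $i$; by construction it contains at least an $\alpha$-fraction of error positions (positions where $r$ and $s$ differ).

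The core step is a greedy construction of a family of pairwise disjoint witness intervals whose union covers all of $B$. I would process bad indices in \emph{decreasing} order. Let $i_1$ be the largest element of $B$ and fix any witness interval $(j_1, i_1]$ for it; include it in the family and discard every bad index lying in $(j_1, i_1]$. The next unprocessed bad index, $i_2$, then satisfies $i_2 \le j_1$, so its chosen witness interval $(j_2, i_2] \subseteq [1, j_1]$ is automatically disjoint from $(j_1, i_1]$. Iterating yields disjoint witness intervals $I_1, \ldots, I_k$ with $B \subseteq \bigcup_{\ell} I_\ell$.

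To conclude, I would apply a counting argument. Each $I_\ell$ contains at least $\alpha |I_\ell|$ error positions by the witness property, and the $I_\ell$ are disjoint subsets of $[n]$, so
\[
    \beta n \;=\; \Delta(r,s) \;\ge\; \sum_{\ell=1}^{k} \alpha |I_\ell| \;\ge\; \alpha |B|,
\]
which rearranges to $|B| \le \beta n/\alpha$, as desired. I do not expect a serious obstacle here; the only subtle point is verifying disjointness in the greedy step, which hinges on processing bad indices in decreasing order so that each newly selected witness interval is bounded above by the left endpoint of all previously selected intervals.
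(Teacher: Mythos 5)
The paper does not actually supply a proof of this lemma; it is imported wholesale from the Gelles survey (Lemma~2.3 there), so there is no in-paper argument to compare against. Your greedy disjoint-interval covering is correct and is in fact the standard way this statement is proved: for each bad index you extract a witness interval carrying an $\alpha$-fraction of disagreements, process bad indices from the top down so each newly chosen witness lies entirely to the left of the previously chosen one, and then charge the at most $\beta n$ total disagreements against the disjoint intervals to conclude $|B| \le \beta n / \alpha$. All the steps check out: the witness interval $(j_i, i]$ exists by definition of suffix distance; after selecting $(j_1, i_1]$, the next surviving bad index $i_2$ satisfies $i_2 \le j_1$, so $(j_2, i_2] \subseteq [1, j_1]$ is disjoint from $(j_1, i_1]$; the selected intervals cover every bad index (each bad index is either chosen or discarded inside a chosen interval); and the final counting $\beta n \ge \sum_\ell \alpha |I_\ell| \ge \alpha |B|$ is sound since the $I_\ell$ are disjoint subsets of $[n]$. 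An equivalent and slightly slicker phrasing tracks the potential $g(k) = \alpha k - e_k$ (with $e_k$ the number of disagreements in $[1:k]$) and observes that the good indices are exactly the strict records of $g$; since $g$ increases by at most $\alpha$ per step and ends at $(\alpha-\beta)n$, there are at least $(1-\beta/\alpha)n$ records. Either route works, and yours is complete as written.
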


\begin{proof}[Proof of Theorem~\ref{thm:listGcode-decoding}]
    By Lemma~\ref{lem:dec-freq}, there exists a set of indices $I \subseteq [n]$ of size $|I| \ge (1 - \frac{1 - |J|/n}{1 - \epsilon}) n = \frac{|J| - \epsilon n}{1 - \epsilon} \ge |J| - \epsilon n$ such that for any $i \in I$, $\Delta_{sfx}(\PikC(x)[1 : i], w[1 : i]) < 1 - \epsilon$. Note also that $I \subseteq J$, since if $\PikC(x)[i] \not= w[i]$, then $\Delta_{sfx}(\PikC(x)[1:i], w[1:i]) = 1$. 
    
    Furthermore, by Lemma~\ref{lemma:Li<1}, it holds that $|L_i(\PikC, w, \epsilon)| > 1$ on at most $\epsilon n$ values. Thus, there are at least $|J| - 2\epsilon n$ values of $J$ for which $\PikCDec(w[1:i]) = v(x[1:i])$.
\end{proof}

% Note that Theorem~\ref{thm:listGcode-exist} and Theorem~\ref{thm:listGcode-decoding} also hold for a finite $\epsilon$-sensitive layered code $\PikC$ defined on only the first $N$ layers of some layered graph $G$ for any $N>0$. Thus, we can consider a finite $\epsilon$-sensitive layered code with the same properties whenever necessary.

\begin{remark}
    In this section, we defined sensitive layered codes on finite-depth layered graphs. However, our proofs extend straightforwardly to give sensitive layered codes on layered graphs of \emph{infinite depth}. For an infinite graph, sensitivity means that the restriction of the code to any depth $n$ (above a certain threshold) should be a sensitive layered code. It is straightforward via a union bound to see that a random layered code on an infinite layered graph will, with positive probability, satisfy sensitivity.
\end{remark}
\subsection{Discussion} \label{sec:gcode-discussion}

In this section, we have only defined and proven properties of layered codes that are useful in our protocol. However, layered codes also serve as a generalization of tree codes that may be of independent interest, and we hope to see future work further generalizing the results of tree codes to this context. We propose a few problems to guide the future study of layered codes.

\begin{enumerate}
    \item\label{ques:tree-codes} We have shown that \emph{sensitive} layered codes exist, but have not addressed the analogue of tree codes. Do layered codes exist on any layered graph over $\Sigma$? Specifically, for any $\epsilon$ is there an assignment of the edges of a layered graph over $\Sigma$ to a larger alphabet $\Sigma_{out}$ such that for any two words $x,y\in \Sigma^n$ such that $v(x)\neq v(y)$, the suffix distance $\Delta_{sfx}(x,y)>1-\epsilon$?
    \item Our protocol is one in which \emph{layered} codes are necessary, and \emph{tree} codes are not strong enough. Are there other contexts where this is the case? One possible use case may be in low memory settings, where a party cannot remember the full history of the messages they have sent, and so needing only to remember the vertex of the graph they are on may be useful.
    \item Do tree codes beyond layered graphs? For example, does the definition of suffix distance generalize to any directed graph? Does Theorem~\ref{thm:listGcode-exist} generalize to a more general context? Does Question~\ref{ques:tree-codes} generalize?
\end{enumerate}

%% 1/6 PROTOCOL
\section{Positive Rate Scheme Resilient to $\frac16$ Errors} \label{sec:1/6}
In this section, we will formally describe our algorithm to convert any noiseless interactive protocol between Alice and Bob to one that is resilient to $\frac16-\epsilon$ bit flips for any sufficiently small $\epsilon > 0$ (say, $\epsilon < 0.01$), with constant multiplicative blowup in communication complexity and $\tilde{O}(|\pi_0|)$ computational complexity. We note that an error resilience of $\frac16$ is known to be optimal (see Theorem~\ref{thm:EGH16-1/6}). We focus mainly on describing a computationally inefficient scheme, but a recursive application of Corollary~\ref{thm:boosting} results in a computationally efficient scheme.

% We remark that the error resilience of $\frac16 - \epsilon$ is optimal; no scheme is resilient to $\geq \frac16$ errors for all possible functions Alice and Bob might want to compute.

% \begin{theorem} [\cite{EfremenkoGH16}]
%     There exists a function $f(x, y)$ of Alice and Bob's inputs $x,y\in\{0,1\}^n$, such that any non-adaptive interactive protocol over the binary bit flip channel that computes $f(x, y)$ succeeds with probability at most $\frac12$ if a $\frac16$ fraction of the transmissions are corrupted.
% \end{theorem}

% This section will be split as follows. We'll first give necessary preliminaries and definitions for our protocol in Section~\ref{sec:1/6-prelims}. Then, in Section~\ref{sec:1/6-protocol}, we formally state our coding scheme. Next, in Section~\ref{sec:1/6-theorem}, we state our main theorems, and in Section~\ref{sec:1/6-analysis} we analyze our scheme and prove our main theorem.

Throughout this section, let be $\pi_0$ the noiseless protocol of length $n_0$ that Alice and Bob are trying to simulate. Alice's and Bob's private inputs respectively are $x,y\in \{0,1\}^{n_{in}}$ for some $n_{in} \in \bbN$. We assume that $\pi_0$ is alternating (meaning that Alice speaks in the odd rounds and Bob speaks in the even: any protocol can be made alternating with at most a factor of 2 blowup in communication). We also assume that Alice's first message is a $1$. The correct noiseless transcript for $\pi_0$ is denoted $\cT = \cT(x,y)$. We also define $f_x:\{0,1\}^s\to \{0,1\}$ to be the function taking a partial transcript with Bob as the last speaker (only defined on even $s$) and outputs Alice's next message if she has input $x$, as defined by the protocol $\pi_0$. Similarly, we define $f_y : \{ 0, 1 \}^s \to \{0,1\}$ to be the function taking a partial transcript with Alice as the last speaker and outputs Bob's next message on input $y$ as defined by $\pi_0$. We say a transcript $T$ is \emph{inconsistent with $x$} if for some even $s$ with $|s|<|T|$, if $f_x(T[1:s])\neq T[s+1]$, and similarly \emph{inconsistent with $y$} if for some odd $s$, $f_y(T[1:s])\neq T[s+1]$. 

We denote a parameter $\epsilon>0$, where the adversary will be permitted to flip $\frac16-O(\epsilon)$ bits.
%\rnote{we can't fix $\epsilon$? like we'll be doing $O(\epsilon)$ notation, right? which doesn't make sense if we fix $\epsilon$. but ok if we give explicit constants.} \mnote{hmm good point, ill get to fixing the issues around this tmrw. Actually wait, why is it bad to say $O_\epsilon$ after fixing $\epsilon$? We fix $n_0$ and say $O(n_0)$.} \rnote{e.g. our error resilience is $\frac16 - O(\epsilon)$}
%\rnote{i think you should take this exposition about how $r$ determines a strategy out of the definition, and just discuss it in paragraph text beforehand}

\subsection{Preliminaries and Definitions} \label{sec:1/6-prelims}

In our protocol, Alice and Bob will each track a guess for the noiseless transcript $\cT$. Specifically, they will track a sequence of updates denoted $U_A, U_B\in \{0,1,\rewind,\bullet\}^*$ that evaluates to their current guess for $\cT$. Generally, Alice's guess is odd length (meaning $|t(v(U_A))|$ is odd) since she speaks on odd turns in $\pi_0$, and Bob's guess $t(v(U_B))$ is even length. The exception is if Alice has a transcript that is either length $0$ or length $n_0$. Roughly, an update of $0$ or $1$ adds this bit onto the transcript, an update of $\rewind$ rewinds the previous bit of the transcript, and an update of $\bullet$ keeps the transcript the same. After each message, the receiving party will append some new updates to this sequence based on the other person's message. We begin with some necessary definitions.

\subsubsection{Transcript Graph} \label{sec:transcript_graph}
We begin by informally describing the layered graph that the parties use to build their transcript guesses. The vertices of $G$ at a given layer $\ell$ describe the possible transcript guesses for the noiseless protocol that a party could have after appending $\ell$ edges $\in \{0,1,\rewind,\bullet\}^*$ as updates to the transcript guess. The depth of the graph is $K=\frac{n_0}{\epsilon}$.  %\rnote{change to: Generally, Alice's guess is odd length (so $|t(v(U_A))|$ is odd since she speaks on odd turns in $\pi_0$, and Bob's guess $t(v(U_B))$ is even length.} \rnote{also, we haven't defined $t(\cdot)$ yet, or just define for $T$ that is a sequence of bits}

\begin{definition} [Transcript Graph ($G$)]\label{def:G} 
%\rnote{i think it makes more sense to have the definition be of the ``transcript graph'' as opposed to just $G$? like maybe ``Transcript Graph $G$'' or somethig}
Let $G$ be the following particular instance of a layered graph over the alphabet $\{ 0, 1, \rewind, \bullet \}$ (see Definition~\ref{def:layered-graph}).
\begin{itemize}
    \item At every layer $\ell \in [0, K]$, the vertices are all  elements of the form $\{0,1\}^{\leq \ell}_\ell$ (for example, at layer $5$, a possible vertex is $01_5$).
    %\rnote{do we want $\Pi_\ell$ defined?}\mnote{not used, so probably deosn't need to be defined} \rnote{should specify how many layers there are?} \mnote{should be an infinite graph; is that not implied?} \rnote{no, maybe say ``at every layer $\ell \in \bbZ_{\ge 0}$''}
    For a vertex $v$ denoted $v=y_\ell$, where $y\in \{0,1\}^*$ and $\ell\in \bbN$, define $t(v):=y\in\{0,1\}^*$ and $\ell(v):=\ell$. The set of all vertices of $G$ is denoted $\Pi$. 
    \item The out-edges from a given node $v$ in some layer $< K$ are $0,1,\rewind,\bullet$. For an edge $e \in \{ 0, 1, \rewind, \bullet \}$, the node $v \oplus e$ at the end of the out-edge from $v$ labeled $e$ is computed as follows
    \[
    v\oplus e := \begin{cases}
        (t(v)||e)_{\ell(v)+1} & e\in\{0,1\}\\
        (t(v)[1:|t(v)|-1])_{\ell(v)+1}& e=\rewind \text{ and } y\neq\emptyset\\
        \emptyset_{\ell(v)+1} & e=\rewind \text{ and } t(v)=\emptyset\\
        t(v)_{\ell(v)+1} & e=\bullet
    \end{cases}.
    \]
    Vertices in layer $K$ have no out-edges.
    %\rnote{add parenthesis around $t(v)||e$ and before the subtract $\ell$. same for second case} \rnote{if you switch the order of this and the second bullet, you need to define $\ell(v)$}
    %\rnote{what is n?}
\end{itemize}
\end{definition}

% We let $\Sigma$ contain the symbol $\perp$, which is an extra symbol added to the alphabet such that for any $x$, $\PikC(x)$ does not contain $\perp$. Note that adding $\perp$ to the alphabet preserves that $\PikC$ is an $\epsilon$-list-code. 

As shorthand, for a layered code $\PikC$ on $G$, and for $v\in \Pi$ and $p\in \Sigma^*$, let $\PikC(v,p)\in \Sigma^{|p|} := \PikC(H)$ where $H$ is the subgraph of $G$ corresponding to the path starting at $v$ obtained by following the edges specified by $p$.

\subsubsection{Transcript Operations and Instructions}

%\rnote{ok, there's some conflict of notation going on. like you imply here that they're tracking a transcript guess $\in \{ 0, 1 \}^*$, but then later you say that actually you're tracking a full sequence of updates. can we distinguish between the two? like the collapsed form is a transcript guess, and the full sequence of updates? and perhaps have a conversion between the two, perhaps $\eval(updates) = trasncript guess$?} 
Along with $U_A$ and $U_B$, Alice and Bob track a weight (confidence) $w_A$ and $w_B$ associated with this guess. We will have that $w=0$ unless $T$ is a complete transcript. A message received from the other party will contain an \emph{instruction} for how to update $(U,w)$. The instruction is in $\{0,1,\rewind,\bullet\}$. 
%\rnote{so do you use $T$ later to denote the correct transcript? if so, then the double usage is confusing}

We define some functions that describe the updates that Alice and Bob make to $(U_A, w_A)$ and $(U_B, w_B)$. We begin with the definition of $\op_x(T)$ and $\op_y(T)$. This function takes a partial transcript $T\in \{0,1\}^*$\footnote{Notice that $T\in \{0,1\}^*$ while each party tracks $U\in \{0,1,\rewind,\bullet\}^*$. Each $U$ evaluates to a transcript $t(v(U))\in \{0,1\}^*$ which corresponds to the input to $\op$.} and calculates the instruction that the party with $x$ or $y$ gives to extend $T$. The function is defined on every possible partial transcript $T$, but only takes on a meaningful value when the party with the corresponding $x$ or $y$ is the next to speak, or if the transcript is complete (of length $n_0$).
%\rnote{i edited this paragraph so that $r$ is not in the input (it isn't in the input, since it's in the definition of the function)} \rnote{another thing is that i'm editing in the word ``partial'' before ``transcript'' -- $T$ is a partial transcript, or a transcript guess, but it's not a transcript, since it's not necessarily complete}

\begin{definition}[$\op_{r}(T)$]
    We define $\op_r(T) : \{0,1\}^{\leq n_0} \to \{0,1,\rewind\}$, for $r\in \{x,y\}$. 
    Let the set $S$ denote the set of lengths of $T$ on which $f_r$ is defined:
    %\rnote{this is not the correct use of a semicolon. just break the sentence, and say that $S$ is all the even indices $< n_0$ if $r = x$ or all the odd indices $< n_0$ if $r=y$.} 
    $S$ is all the even indices $<n_0$ if $r=x$ or all the odd indices $<n_0$ if $r=y$.
    
    \begin{itemize}
        \item If $T$ is inconsistent with $r$, then $\op_r(T) = \rewind$. 
        %\rnote{instead of saying ``return $\rewind$,'' say ``then $\op_r(T) := \rewind$''}
        \item Else if $|T|\in S$, then $\op_r(T)  = f_r(T)$.
        \item Else, $\op_r(T) = 1$. 
        %\rnote{can you say somewhere why you want this definition? maybe a footnote, or like explain in text}
    \end{itemize}
\end{definition}

The final condition which results in a ``default'' response of $\op_r(T) = 1$ occurs in one of two cases: when the party with input $r$ is not the next to speak, allowing $1$ to serve as a meaningless instruction, or when the transcript is complete (of length $n_0$) and the party wants to indicate it is consistent with their input.

Next, we define the function $\op_{T'}(T)$, where $T'$ is a complete transcript. The function $\op_{T'}(T)$ takes a partial transcript $T$ and returns the instruction that brings it one step closer to $T'$.
%\rnote{do you want $T'$ to only be a full transcript or can it be partial? if it can be partial, we should say that $T'$ is a (partial) transcript, or something}

\begin{definition}[$\op_{T'}(T)$]
    Let $T' \in \{0,1\}^{\leq n_0}$ with $|T'|=n_0$. We define $\op_{T'}(T) : \{0,1\}^{\leq n_0} \to \{0,1,\rewind\}$ as follows.
    
    \begin{itemize}
        \item If $T'=T$, then $\op_{T'}(T) = 1$. 
        \item Else, if $T$ is a strict prefix of $T'$, then $\op_{T'}(T) = T'[|T|+1]$. 
        \item Else, $\op_{T'}(T) = \rewind$.
    \end{itemize}
\end{definition}

%\rn{Say at the beginning that Alice and Bob each keep track of a pair $(T, c)$ where $T \in \Pi$ and $c \in \bbN_{\ge 0}$. This next definition is better defined as $(T, c) \otimes_r s$ in my opinion. This is because $(T, c)$ is what Alice/Bob is keeping track of, while $r$ is their hidden input that remains fixed. This way, the new values of $(T', c') \gets (T, c) \otimes_r s$ can be written as an operation to the old one, and makes much more sense. }

Next, we define a function that Alice and Bob use to update their transcript guess $U_A$ or $U_B$ and weight $w_A$ or $w_B$ when they receive an instruction. 
%Notice that $U_A$ and $U_B$ are paths in $\{0,1,\rewind,\bullet\}^*$ (in other words rooted paths of $G$) denoting the full history to reach the current transcript guess, which is different than the previous definitions where the transcript was in $\{0,1\}^*$. \rnote{im confused. when do you use $T \in \{ 0, 1 \}^*$ as you defined above? shouldn't you just define everything for $T \in \{ 0, 1, \rewind, \bullet \}$?} 
Every time a party receives a message, the party adds two edges onto their guess $U_A$ or $U_B$: namely the update $\hat{\delta}\in \{0,1,\rewind,\bullet\}$ that they deduce from the other party's message, and their own response to that addition.\footnote{They will also add two more edges, corresponding to $\bullet \bullet$, to account for parity issues, but we leave this discussion for later. We also do not yet discuss how they deduce $\hat{\delta}$ from the other party's message.} 
Again, recall that Alice's partial transcript guess $t(v(U_A))$ is of odd or exactly $0$ or $n_0$ length, and Bob's guess $t(v(U_B))$ is of even length. 

\begin{definition}[$(U,w)\otimes_r \hat{\delta}$] \label{def:otimes}
Let $r\in \{x,y\}$. Given a sequence of updates $U\in \{0,1,\rewind,\bullet\}^*$, an instruction $\hat{\delta}\in \{0,1,\rewind,\bullet\}$, and weight $w\in \bbN$, return a new pair $(U',w') \gets (U,w)\otimes_r \hat{\delta}$ as follows. As before, let the set $S$ denote the set of lengths of $T \in \{ 0, 1 \}^*$ on which $f_r$ is defined: $S$ is all the even indices $<n_0$ if $r=x$ and all the odd indices $< n_0$ if $r=y$. 

\begin{itemize}
\item If $\hat{\delta}=\bullet$: 

    Let $U'= U||\bullet||\bullet$ and $w'=w$.

\item If $\hat{\delta}=\rewind$:

    If $w>0$, then let $U'= U||\bullet||\bullet$ and $w'=w-1$. 

    Otherwise, if $|t(v(U))|-1\in S$, then let $U'=U||\rewind||\rewind$ and $w' = w$. Else, $|t(v(U))| \in S$, and let $U'=U||\rewind||\bullet$ and $w' = w$. 
    
    %\rnote{if the only time that you're in this last possibility is if $T$ is a complete transcript and you're Alice, then can we make this more explicit somewhere? like maybe even in paragraph text. like more discussion to say what's going on.}
    %\rnote{when is it possible that $P$ is not the last party to have spoken} In either case, $c'=0$.

\item If $\hat{\delta}=0$ or $\hat{\delta}=1$:

    Let $T = t(v(U))$. If $|T| = n_0$, then $U'= U || \bullet || \bullet$ and $w'=w+1$.
    
    Otherwise, if $|T| - 1 \in S$: if $|T|<n_0-1$, then $U'=U||\hat{\delta}||\op_{r}(t(v(U||\hat{\delta})))$, and if $|T|=n_0-1$, then $U'=U||\hat{\delta}||\bullet$. Else if $|T|\in S$, then $U'=U||\bullet||\op_{r}(T)$. In any case, $w'=0$.

    %\rnote{when is $P$ not the next to speak? like isn't $s$ always what you just received. Also, in this definition, you haven't defined the protocol at all yet. I would refrain away from saying things about party $P$ speaking, since right now we're just defining $(T, c) \oplus_r s$, which depends on only these four things and not the protocol}
\end{itemize}
\end{definition}

Notice that in every case, the path $U'$ is an extension of $U$ with two additional letters.

\subsubsection{The Error Correcting Code}

Finally, we define the error correcting code $\ECC$ that Alice and Bob use to encode the letters of the large alphabet layered code. 

% \begin{definition}[$\ECC$] \label{def:ecc}
% For a given $\epsilon > 0$ and alphabet $\Sigma$, we define the error correcting code $$\ECC_{\epsilon,\Sigma}:= \Sigma^2\times \{0,1,\rewind,\perp\} \rightarrow \{ 0, 1 \}^{M_\epsilon(|\Sigma|)}$$ with the following properties: \rnote{i made it $M_\epsilon$ instead of just $M$}
% \begin{itemize}
%     \item For any $z_0 \not= z_1 \in \Sigma^2$ and $\delta_0, \delta_1 \in \{ 0, 1, \rewind, \perp \}$,
%     \[
%         \Delta \big( \ECC(z_0, \delta_0), \ECC(z_1, \delta_1) \big) \ge \left( \frac12 - \epsilon \right) \cdot M_\epsilon(k), \numberthis \label{eqn:ecc-1/2}
%     \]\rnote{you used $k$ again here} \rnote{you probably want to make all these $\ECC_{\epsilon, \Sigma}$ instead of $\ECC$}
%     \item For any $z \in \Sigma^2$ and $\delta_0 \not= \delta_1 \in \{ 0, 1, \rewind, \perp \}$, 
%     \[
%         \Delta \big( \ECC(z, \delta_0), \ECC(z, \delta_1) \big) \geq \frac23M_\epsilon(|\Sigma|). \numberthis \label{eqn:ecc-2/3}
%     \]
% \end{itemize}
% \end{definition}

\begin{lemma}[\cite{GuptaZ22a}] \label{lemma:ecc}
    There exists an explicit error correcting code
    \[
        \ECC_{\Sigma,\epsilon} := \Sigma^2\times \{0,1,\rewind,?\} \rightarrow \{ 0, 1 \}^{M(|\Sigma|, \epsilon)}
    \]
    for some $M(|\Sigma|, \epsilon) = O_\epsilon(|\Sigma|)$ with the following properties:
    \begin{itemize}
        \item For any $z_0 \not= z_1 \in \Sigma^2$ and $\delta_0, \delta_1 \in \{ 0, 1, \rewind, ? \}$,
        \[
            \Delta \big( \ECC_{\Sigma,\epsilon}(z_0, \delta_0), \ECC_{\Sigma,\epsilon}(z_1, \delta_1) \big) \ge \left( \frac12 - \epsilon \right) \cdot M(|\Sigma|, \epsilon), \numberthis \label{eqn:ecc-1/2}
        \]
        %\rnote{you used $k$ again here} 
        \item For any $z \in \Sigma^2$ and $\delta_0 \not= \delta_1 \in \{ 0, 1, \rewind, ? \}$, 
        \[
            \Delta \big( \ECC_{\Sigma,\epsilon}(z, \delta_0), \ECC_{\Sigma,\epsilon}(z, \delta_1) \big) \geq \frac23M(|\Sigma|, \epsilon). \numberthis \label{eqn:ecc-2/3}
        \]
    \end{itemize}
\end{lemma}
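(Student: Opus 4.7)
The plan is to build $\ECC_{\Sigma,\epsilon}$ by coordinate-wise XORing two simpler codes: a balanced binary code $C_1$ for the $\Sigma^2$ component and a distance-$\frac{2}{3}$ code $C_2$ for the $4$-symbol alphabet $\{0,1,\rewind,?\}$. For $C_2$, I would take the four even-weight strings of $\{0,1\}^3$, namely $\{000, 011, 101, 110\}$, which are pairwise at Hamming distance exactly $2$. For $C_1$, I need a map $\Sigma^2 \to \{0,1\}^{N_1}$ satisfying the \emph{two-sided} bound $\Delta(C_1(z_0), C_1(z_1)) \in [(\tfrac12 - \epsilon) N_1, (\tfrac12 + \epsilon) N_1]$ for all $z_0 \neq z_1$. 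Explicit $\epsilon$-biased codes (e.g., the linear codes dual to the $\epsilon$-biased sets of Naor--Naor or Alon--Goldreich--H\aa{}stad--Peralta) deliver exactly this, with $N_1 = O(\log|\Sigma| / \epsilon^2) = O_\epsilon(|\Sigma|)$.

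Having fixed $C_1$ and $C_2$, I define $\ECC_{\Sigma,\epsilon}(z, \delta)[i,j] := C_1(z)[i] \oplus C_2(\delta)[j]$ for $i \in [N_1]$, $j \in [3]$, so that $M(|\Sigma|,\epsilon) := 3N_1 = O_\epsilon(|\Sigma|)$ as required. The distance guarantees then reduce to a short case analysis. When $z$ is fixed but $\delta_0 \neq \delta_1$, each $3$-bit block of $\ECC_{\Sigma,\epsilon}(z, \delta_0)$ differs from the corresponding block of $\ECC_{\Sigma,\epsilon}(z, \delta_1)$ by the fixed nonzero string $C_2(\delta_0) \oplus C_2(\delta_1)$ of weight $2$, giving total Hamming distance $2 N_1 = \tfrac{2}{3} M$, matching \eqref{eqn:ecc-2/3}.

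For \eqref{eqn:ecc-1/2}, let $D := \Delta(C_1(z_0), C_1(z_1))$ and split into two subcases. On the $N_1 - D$ positions where $C_1(z_0)$ and $C_1(z_1)$ agree, the two $3$-bit blocks differ exactly on the support of $C_2(\delta_0) \oplus C_2(\delta_1)$, contributing $0$ if $\delta_0 = \delta_1$ and $2$ otherwise. On the $D$ positions where $C_1(z_0)$ and $C_1(z_1)$ disagree, an overall complement is added, so the per-position contribution becomes $3$ (if $\delta_0 = \delta_1$) or $3 - 2 = 1$ (if $\delta_0 \neq \delta_1$). Totaling: for equal $\delta$'s the distance is $3D \geq 3(\tfrac12 - \epsilon) N_1 = (\tfrac12 - \epsilon) M$, and for unequal $\delta$'s the distance is $2(N_1 - D) + D = 2N_1 - D \geq (\tfrac32 - \epsilon) N_1 = (\tfrac12 - \tfrac{\epsilon}{3}) M \geq (\tfrac12 - \epsilon) M$.

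The main subtlety is the last subcase, where both $z$ and $\delta$ differ: the XOR contributions from $C_1$ and $C_2$ partially cancel, so a one-sided lower bound $D \geq (\tfrac12 - \epsilon) N_1$ alone would only yield a relative distance of $\tfrac13$, which is useless. The whole construction therefore hinges on the two-sided balanced property of $C_1$, which upper-bounds $D$ by $(\tfrac12 + \epsilon) N_1$ and prevents this cancellation from being catastrophic. This is why one must reach for $\epsilon$-biased codes rather than any off-the-shelf binary code approaching the $\tfrac12$-distance bound.
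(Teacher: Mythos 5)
The paper states Lemma~\ref{lemma:ecc} as a citation to~\cite{GuptaZ22a} without supplying a proof, so there is no in-paper argument to compare your work against; I can only evaluate the proposal on its own terms.

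Your construction is correct. Taking $C_2$ to be the even-weight code $\{000,011,101,110\}$ (pairwise distance $2$ out of $3$) and $C_1 : \Sigma^2 \to \{0,1\}^{N_1}$ an $\epsilon$-biased (i.e., two-sided balanced) code, and defining $\ECC(z,\delta)$ as the blockwise XOR $C_1(z)[i]\oplus C_2(\delta)[j]$, does satisfy both distance requirements, and your case analysis is right: with $D := \Delta(C_1(z_0),C_1(z_1))$, the distance in the mixed case $z_0\neq z_1$, $\delta_0\neq\delta_1$ is $2(N_1-D)+D = 2N_1-D$, which is lower-bounded by $(\tfrac32-\epsilon)N_1$ precisely because $D\le (\tfrac12+\epsilon)N_1$. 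You correctly identify that this is where the two-sided bound is indispensable: the $C_1$- and $C_2$-contributions cancel within each $3$-bit block when both coordinates change, so a mere lower bound $D\ge(\tfrac12-\epsilon)N_1$ would give only distance $\approx N_1 = \tfrac13 M$, which is far too weak. The $\delta_0=\delta_1$ and $z_0=z_1$ cases are immediate. One minor parenthetical: explicit $\epsilon$-biased sets (AGHP) have length $O\bigl((\log|\Sigma|/\epsilon)^2\bigr)$ rather than the probabilistic $O(\log|\Sigma|/\epsilon^2)$, but since either is $O_\epsilon(|\Sigma|)$, the claimed bound $M(|\Sigma|,\epsilon) = O_\epsilon(|\Sigma|)$ still holds.
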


\noindent
We remark that due to the distance conditions, for any fixed $z'$ and any string $s \in \{ 0, 1 \}^{M(|\Sigma|, \epsilon)}$, at most one of the following holds:
\begin{itemize}
    \item There exists $\delta \in \{ 0, 1, \rewind, ? \}$ such that $\Delta(s, \ECC_{\Sigma,\epsilon}(z', \delta)) < \frac13$.
    \item There exists $z \in \Sigma^2, \delta \in \{ 0, 1, \rewind, ? \}$ such that $\Delta(s, \ECC_{\Sigma,\epsilon}(z, \delta)) < \frac16 - \epsilon$.
\end{itemize}
In particular, the three cases in Protocol~\ref{prot:1/6} are disjoint.
\subsection{The Inefficient, Positive Rate Protocol} \label{sec:1/6-protocol}

We are now ready to state our (inefficient) positive rate protocol that is resilient to $\frac16 - \epsilon$ errors.

Recall that $\pi_0$ is an alternating protocol of length $n_0$, such that Alice speaks first and her first message is always a $1$. Let $\PikC$ be a $\epsilon$-sensitive-$(G,\Sigma)$-code for some alphabet $\Sigma$ of size $O_\epsilon(1)$. Note that Alice and Bob can agree on an explicit choice of $\PikC$, for example by both choosing the lexicographically first such code (it takes up to $2^{2^K}$-time to find such a code). Also let $\ECC = \ECC_{\Sigma, \epsilon}$ be the error correcting code from Lemma~\ref{lemma:ecc}.

Before we state our protocol formally in Section~\ref{sec:formal-protocol}, we give an explanation of the protocol. While Section~\ref{sec:overview-pi^2} and Section~\ref{sec:overview-graph-codes} give an explanation of the ideas in our protocol, this section explains how we implement them. In this explanation, we first focus on when Eve corrupts a message either entirely to another valid message, or not at all. We talk about the protocol from Alice's perspective (Bob is symmetric).

Recall that Alice tracks a guess for the sequence of updates $U_A\in \{0,1,\rewind,\bullet\}^*$ along with a confidence weight $w_A \ge 0$. The sequence of updates in $U_A$ describes Alice's guess for the transcript: her transcript guess $\in \{ 0, 1 \}^{\le n_0}$ is simply the result of applying the updates to the empty string. 

Every round, Alice sends one of two things: she either asks her own question (a message of the form $\ECC(z,?)$, where $z$ lets Bob deduce $U_A$ which specifies her transcript guess), or she sends an answer to Bob's question (a message of the form $\ECC(z,\delta\in\{0,1,\rewind\})$ where $z$ reflects the transcript she believes Bob has asked about). Likewise, Bob always sends a question $\ECC(z,?)$ or an answer $\ECC(z, \delta \in \{ 0, 1, \rewind \})$. We will discuss later what $z$ should look like.

Whenever Alice receives a message $\ECC(z_B, \delta \in \{ 0,1,\rewind,? \})$ from Bob, she updates $w_A$ and $U_A$ based on the received message and history. She then chooses to send either a question or an answer. Specifically:
\begin{itemize}
\item 
    If Alice receives an answer $\ECC(z_B, \delta \in \{ 0, 1, \rewind \})$ where $z_B$ matches her own transcript guess, she updates $(U_A, w_A)$ accordingly by setting $(U_A, w_A) \gets (U_A, w_A) \otimes_x \delta$. This consists of (with probability $1$) appending two symbols to $U_A$ and possibly adjusting the weight $w_A$ so that she has overall updated in the direction specified by $\delta$. She then asks a question.
\item 
    If she instead receives a question $\ECC(z_B, ?)$, she uses $z_B$ and the history of received messages to make a guess for the full sequence of updates $U^*_B$ that Bob has made. $T^*_B = t(v(U^*_B))$ is then her understanding of Bob's current transcript guess. 
    \begin{itemize}
    \item 
        If $T^*_B$ is a partial transcript or is inconsistent with $x$, she updates $(U_A, w_A) \gets (U_A, W_A) \otimes_x \bullet$ (``do nothing''). She then sends an answer $\ECC(z_A, \delta = \op_x(T^*_B) \in \{ 0, 1, \rewind \})$. 
    \item 
        Else if $T^*_B$ is a complete transcript (length $n_0$) that is also consistent with $x$, she updates $U_A$ with probability $0.5$ in the direction of $T^*_B$, i.e. by computing $(U_A, w_A) \gets (U_A, w_A) \otimes_x \op_{T^*_B}(t(v(U_A)))$. This consists of appending two symbols to $U_A$ and possibly adjusting $w_A$. She then asks a question.
    \end{itemize}
    In the special case that $t(v(U_B)) =: T_B = T_A := t(v(U_A))$, i.e. Bob's current transcript guess is the same as Alice's (because Alice and Bob's transcripts are usually different parity lengths, this can only happen if $T_B = T_A$ are both the same complete transcript or both the empty transcript), Alice asks a question. Bob will interpret her question $\ECC(z_A, ?)$ as both an answer of $1$ (extending his complete transcript guess or empty transcript) \emph{and} a question. That is, if Bob receives Alice's message correctly, he will both update $(U_B, w_B)$ (with probability $1$) via the operation $\hat{\delta} = 1$ \emph{and} send his question. Note that in both the case $T_B = T_A = \cT$ or $T_B = T_A = \emptyset$ the update $\hat{\delta} = 1$ causes a good update, since we assumed Alice's first message is always a $1$.
    % Alice sends a question if the message she receives from Bob is an answer, or if the transcript in Bob's question is a complete transcript consistent with $x$.
    % \item She sends an answer if Bob's message is a question.
    % \item However, if Bob's question indicates that his current transcript guess is the same as hers, she asks a question. Bob will interpret this as both an answer of $1$ (extending his complete transcript guess) and a question.
\end{itemize}

We emphasize that every time Alice updates (after receiving a message from Bob), she appends \emph{two} elements $\in \{ 0, 1, \rewind, \bullet \}$ to $U_A$, so that the resulting transcript guess $t(v(U_A))$ still ends on her speaking. (The exception is when $t(v(U_A))$ is a complete transcript of length $n_0$ or the empty transcript of length $0$: then, Alice still appends two update instructions, but the resulting transcript may be of even ($n_0$ or $0$) length.)

% After appending this new instruction to $U_A$, she \emph{also} has to append her own response to $U_A$. For example, if Alice hears from Bob that the next bit of the transcript is $1$, she also must add her own response to this in the noiseless protocol, which may be $0$ or $1$ depending on $y$, in total appending $10$ or $11$. Thus, upon receiving a message from from the other party, a party adds two elements of $\{0,1,\rewind,\bullet\}$ onto $U_A$, not only one.

\paragraph{The token $z$.}

When Alice is asking a question $\ECC(z, ?)$, we need $z$ to allow Bob to determine Alice's current transcript guess $T_A = t(v(U_A))$. Note that sending $z = U_A$ (or even $z = T_A$) is too long. Instead, Alice simply sends $z \in \Sigma^2$ to be her most recent updates to $U_A$, i.e. the last two operations she appended to $U_A$, encoded into a tree code. Then many of Alice's messages (the ones where she asked a question) are symbols of the tree code encoding of $U_A$, which will be sufficient for Bob to determine $U_A$.

In the case where Alice answers Bob's question, her message is of the form $\ECC(z, \delta \in \{ 0, 1, \rewind \})$, where $z$ must, in some way, echo Bob's question so that Bob can tell that she is answering the right question. As before, she cannot send $z$ as the entire belief of Bob's transcript guess $t(v_B)$ where $v_b \in \Pi$ is a vertex of $G$, because this is too long. Instead, $z$ will be $\in \Sigma^2$ and will be dependent on her current belief about Bob's current transcript guess (as a vertex $v_B$ in the transcript graph $G$). It is almost okay to let $z$ be exactly $z'$, if she just received $\ECC(z', ?)$ from Bob so that $z' \in \Sigma^2$ are the last two tree code symbols in the encoding of $U_B$; however this causes a misalignment in $\ell(v_B)$ and the length of $U_A$ that requires a different convention to fix.

% If she asks a question, Alice's message is $\ECC(z, ?)$. Notice that Alice cannot send her entire sequence of updates $U_A$ since this would result in each message being more too long. As such, she only sends what she last appended to $U_A$ 
% %\rnote{the sequence of updates she has made so far to obtain $T$}
% encoded into a tree code; in particular, since Alice appended two bits onto $U_A$, $z$ is \emph{both} of these bits. In the second case where Alice answers Bob's question, her message is always $\ECC(z, \delta \in \{ 0, 1, \rewind \})$, where $z$ must, in some way, echo Bob's question to ensure she is answering the right question. As before, she cannot send $z$ as her entire belief of Bob's transcript guess (call this guess $T'\in \{0,1,\rewind,\bullet\}^*$) \rnote{probably don't need the stuff in the parenthesis} \mnote{it's used later}, because this message is too long. Instead, $z$ will be $\in \Sigma^2$ and will be dependent on her current belief about Bob's transcript guess and on the tree code. It is almost okay to let $z$ be exactly $z'$, if she just received $\ECC(z', ?)$ from Bob; however this causes a misalignment in transcript lengths that requires a different convention to fix.

To elaborate, when Alice asks a question, she sends the last two symbols of the tree code at indices $|U_A|-1$ and $|U_A|$. When she answers Bob's question, she might want to send the symbols at positions $|U_B|$ and $|U_B|-1$ of what she believes to be Bob's update sequence $U_B$. However, $U_B$ (which has length $\ell(v_B)$) is shorter than $U_A$, since it was last updated on the previous message. 
% To elaborate, when Alice asks a question, she sends the last two symbols of the tree code at indices $|U_A|-1$ and $|U_A|$. When she answers Bob's question, she might want to send the symbols at positions $|T'|$ and $|T'|-1$ of what she believes to be Bob's transcript guess. \rnote{uh $T'$? do you want $U'$?}\mnote{I do actually mean T' here, I added a few more words, but lmk if its not enough.} However, Bob's transcript is shorter, since it was last updated on the previous message. 
This clashes with our requirement that when Alice and Bob both have the correct transcript $\cT$ as the evaluation of their guesses $U_A$ and $U_B$, then Bob must interpret the token $z$ in Alice's message as the same regardless of whether she is asking or answering a question.
% Notice an important detail: when Alice and Bob both have the correct transcript $\cT$ as the evaluation of their guesses $t(v(U_A))$ and $t(v(U_B))$, then Bob must interpret the element $z$ included in Alice's message is the same regardless of whether she is answering or asking a question. \rnote{i think we need to clarify this by saying that in this case, Bob will regard the $?$ in Alice's message as a $1$ answer, and we need the element $z$ to be the same no matter if she is answering or asking a question} This mandates that, regardless of whether she is asking or answering, she must send a message corresponding to the same layer of her tree code (same length transcripts). Thus, it is actually important that $|T'|=|U_A|$. 
To resolve this, we say that after she decodes Bob's message to $v_B$, she adds $\bullet\bullet$ onto it; this makes it the same length as $U_A$, and then she responds with the last two symbols of the new encoding $\PikC(v_B, \bullet\bullet)$. Additionally, every time she updates $U_A$, she first updates $U_A$ with $\bullet\bullet$ (as a space holder that says ``do nothing''). The result is that both $U_A$ and $U_B$ increase in length by $4$ every time the corresponding party receives a message and makes an update. For instance, after Bob has sent the $k$'th message (so both Alice and Bob have sent $k/2$ messages), Alice updates so that $U_A$ goes from length $2(k-1)$ to length $2(k+1)$, where the first two updates are simply $\bullet\bullet$ and the next two correspond to the additions to $U_A$. Meanwhile, $U_B$ is of length $2k$, so if she wishes to answer $v_B = v(U_B)$, she would add $\bullet\bullet$ to $v_B$ to make it length $2(k+1)$ as well, and then send the last two symbols in the tree code encoding.
% Correspondingly, every time she receives a message from Bob, she updates with the two characters but then also updates with the $\bullet\bullet$. The result is that both $T'$ and $U_A$ increase in length by $4$ every time a party receives a message (the benefit comes from staggering the addition of $\bullet\bullet$ to $T'$ and $U_A$).

Finally, we discuss a point glossed over so far: how Alice actually decodes Bob's question to $v_B$ if she only receives the encoding of the most recent two symbols $z\in \Sigma^2$ of his transcript guess $U_B$. She tracks $P_A \in (\Sigma^2)^*$ as a history of all the symbols $\in \Sigma^2$ that she and Bob have sent. That is, every time she sends or receives a message $\ECC(z \in \Sigma^2, \delta)$, she appends $z$ to $P_A$. Note that $P_A$ has the correct symbols of the tree code encoding of $U_B$ whenever Alice correctly receives Bob's question. Theorem~\ref{thm:listGcode-decoding} says that most of the time when Alice correctly receives Bob's question $\ECC(z, ?)$, she can decode his entire tree code encoding of $U_B$ correctly (even though many elements of $P_A$ do not even correspond to Bob's messages!). 
% Right before receiving the $k$'th message, $P_A$ is of length $2(k-1)$. When Alice receives a message $\ECC(z, \delta)$, she appends $z \in \Sigma^2$ to $P_A$, and then appends $\perp\perp$ as well, for a total of four symbols appended to $P_A$.
% Whenever she receives a message, After every message, she appends the symbol $\in \Sigma^2$ that he sent, as well as $\perp\perp$. 

To remember the rules for $U_A$ and $P_A$, it is helpful to keep in mind the following picture. After Alice speaks in the $k$'th round, i.e. a total of $k$ messages by either Alice or Bob have been sent so far, both $U_A$ and $P_A$ should be of length $2k$. $U_A$ is of the form $\dots || \bullet\bullet || (\delta_B\delta_A)_{k-2} || \bullet\bullet || (\delta_B\delta_A)_{k}$. That is, entries of $U_A$ that are $\bullet\bullet$ are when Bob is talking. Meanwhile, $P_A$ is of the form $\dots || z_{B,k-3} || z_{A,k-2} || z_{B,k-1} || z_{A,k}$, where $z_{A,i}$ corresponds to the symbols she sent in round $i$, and $z_{B,i}$ corresponds to the symbols she received in round $i$. 

\paragraph{Partial Corruptions.}

Lastly, we mention how we handle partial corruptions, i.e. if a received message is not a codeword. The receiver will choose a nearby codeword (with distance $< \frac13$ if the codeword is an answer to the party's last question, or with distance $\frac16 - \epsilon$ if the codeword is a question). With probability proportional to the distance from the codeword, they default to sending a question. Otherwise, they will respond to that codeword as we have described above. 

\paragraph{Summary.}

A brief summary of the most important details:
\begin{itemize}
    \item Every message Alice sends is of the form $ECC(z\in \Sigma^2,\delta\in\{0,1,\rewind,?\})$. The instruction $\delta$ is $?$ if Alice is asking Bob a question (potentially also responding to his question), and $0,1$ or $\rewind$ if she is only responding to his question.
    \item After receiving a message, Alice performs four updates to both $U_A$, appending $\bullet\bullet$ and two symbols in $\{0,1,\rewind,\bullet\}$. She similarly performs four updates to $P_A$, appending the two symbols $z^* \in \Sigma^2$ received in Bob's message and then appending the two symbols $z$ that she is sending in her own next message. 
    %\rnote{put in sub bullet points the changes to $U_A$ and $P_A$. like - she appends $\bullet \bullet$ and $\{ 0, 1, \rewind, \bullet \}^2$ to $U_A$ - she appends $\Sigma^2$ and then $\bullet\bullet$ to $P_A$. right now isn't great as written because not clear }
    \item After sending message $k$, $U_A$ and $P_A$ are both length $2k$.
    \item Partial corruptions are handled by performing the behavior described in this section with probability linearly decreasing with the distance to a nearby codeword. The default message is a question.
\end{itemize}

\paragraph{Indexing: Notational Change.}

Thus far, we have described $U_A$ and $P_A$ as being a length $2k$ sequence of symbols in $\{ 0, 1, \rewind, \bullet \}$ and $\Sigma$ respectively, where Alice has just sent the $k$'th message. Note however that symbols are always appended to $U_A$ and $P_A$ in pairs. Thus, we can instead regard the alphabets of $U_A$ and $P_A$ as being pairs of updates/layered code symbols instead. Throughout the rest of this section, we instead regard $U_A \in ( \{ 0, 1, \rewind, \bullet \}^2 )^*$ and $P_A \in ( \Sigma^2 )^*$, so that after Alice sends the $k$'th message both $U_A$ and $P_A$ are length $k$. Then, for instance $U_A[k]$ denotes the last two updates Alice has made to $U_A$, while $U_A[k-1] = \bullet\bullet$.

Similarly, the alphabet of $\PikC(U_A)$ is $\Sigma^2$, so that $\PikC(U_A)$ is of length $k = |U_A|$. For instance, $\PikC(U_A)[|U_A|]$ are the last two symbols of $\PikC(U_A)$.

% a guess for the other person's encoded transcript so far as $P$. 

% - at each step update T,w and decide what to send next

% - update $P$ to track their guess

% - After sending message k, both P and T are length k

% - update $P$ and $T$ twice so that they're the same length; Alice's transcript will correspond to odd indices and Bob's to even. (maybe opposite?)

\subsubsection{Formal Description of Protocol} \label{sec:formal-protocol}

\protocol{Inefficient, Positive Rate Scheme Resilient to $\approx \frac16$ Errors}{1/6}{
    %\rnote{define $\pi_0$ and $n_0$. define the sensitive layered graph code along with $\Sigma$ that you're using.} \mnote{definitions are supposed to carry over from the previous section. I could add a sentence recalling them if you think that's needed.}
    Recall that $\pi_0$ is a an alternating, noiseless protocol of length $n_0$, such that Alice speaks first and her first message is a $1$. Alice and Bob have inputs $x$ and $y$ respectively, determining their behavior in this protocol. The noiseless protocol has transcript $\cT=\cT(x,y) \in \{ 0, 1 \}^{n_0}$. Our error-resilient protocol consists of $K = \frac{n_0}{\epsilon}$ messages numbered $1, \dots, K$, each consisting of $M(|\Sigma|, \epsilon) = O_\epsilon(1)$ bits. Alice sends the odd messages and Bob sends the even. 
    
    % \vspace{0.075in}
    
    Recall that $\PikC$ is an $\epsilon$-sensitive layered code of $G$ with the alphabet $\Sigma$. Alice and Bob first (non-interactively) agree on an explicit choice of $\PikC$ by testing each labeling of $G$ and taking the lexicographically first layered code that is $\epsilon$-sensitive.
    
    \vspace{0.075in}
    
    Alice and Bob track a private sequence of updates of the transcript guess, denoted $U_A,U_B \in \{ 0, 1, \rewind, \bullet \}^2 )^*$ respectively initialized to $\emptyset$.
    They also track confidence weights $w_A,w_B\in \bbN$, both initialized to $0$. Alice and Bob additionally track the sequence $P_A, P_B \in (\Sigma^2)^*$ of pairs of symbols $\in \Sigma^2$ that they have sent and received throughout the protocol. $P_A, P_B$ are both initialized to $\emptyset$.
    
    \vspace{0.075in}
    
    In what follows, we describe Alice's behavior. Bob's behavior is identical, except notationally switching $x$ and $y$, and $A$ and $B$. At the end of the protocol, Alice and Bob output $(t(v(U_A)),\frac{2w_A}{K})$ and $(t(v(U_B)),\frac{2w_A}{K})$ respectively.
    
    Alice's first turn is special; she sets $U_A=\bullet1$, sets $P_A=\PikC(\bullet1)$, and sends $\ECC(\PikC(\bullet1),?)$. 
    
    \begin{center}
    {\bf \fbox{Alice}}
    \end{center}
    
    Alice has just received a message $m$ from Bob. Let $\asked=\true$ 
    %\rnote{do we use $b \in \{ 0, 1 \}$ anywhere else? it might make sense to make $b$ into a variable with a name, like $\mathsf{asked}$ or smth} 
    if the last message she sent was of the form $\ECC(z,?)$ for some $z\in \Sigma^2$ and $\false$ otherwise (we let $\asked=\false$ in the first round for Bob). Let $d_m(z, \delta)$ denote $\frac1{M(|\Sigma|, \epsilon)} \cdot \Delta(m, \ECC(z, \delta))$.
    
    Alice sets $(U_A,w_A)\gets (U_A,w_A)\otimes_x\bullet$ and $z_A\in \Sigma^2$ to be $\PikC(U_A)[|U_A|]$. 
    %\mnote{move to end of previous message and check that first message of each party is ok} \rnote{you want $z_A$ to be the last message Alice sent, right? then it should be $\PikC(P_A)[|P_A|]$?} 
    Then, she picks the first of the following cases that holds.
    
    \begin{caseofb}
    
    \caseb{$\asked=\true$ and for some $\delta\in\{0,1,\rewind,?\}$, we have $d_m(z_A,\delta) < \frac13$.\label{case1}}{
        Let $p=1-3d_m(z_A,\delta)$. 
        %\rnote{it is maybe slightly bad that we use $p$ for both a path and a probability}
        \begin{itemize}
        %   \item Alice sets $P_A\gets P_A||z_A||\perp\perp$.
           \item Let the instruction $\hat{\delta}=\delta$ unless $\delta=?$, in which case $\hat{\delta}=1$. Alice sets $(U_A,w_A)\gets (U_A,w_A)\otimes_x\hat{\delta}$ and otherwise (with probability $1-p$), sets $(U_A,w_A)\gets (U_A,w_A)\otimes_x\bullet$.
           She computes $\zeta = \PikC(U_A)[|U_A|]$.
           %\rnote{at some point we need to note that $\PikC(U_A)$ is computable efficiently, and furthermore you can compute it by computing it one character at a time (and don't need to recompute every time)}
           \item Alice sets $P_A \gets P_A || z_A || \zeta$.
           \item Alice sends $\ECC(\zeta,?)$.
        \end{itemize}
    }
        
    \caseb{For some $z^* \in \Sigma^2$, we have $d_m(z^*,?) \le \frac16 - \epsilon$.\label{case2}} {
        Alice computes $v^*=\PikCDec(P_A||z^*)$.
        %  and sets $P_A\gets P_A||z^*||\perp\perp$. 
        \begin{subcaseofb}
        
        \subcaseb{$v^*=\perp$.\label{sub21}}{
            \begin{itemize}
                \item Alice sets $(U_A,w_A)\gets (U_A,w_A)\otimes_x\bullet$. Alice sets $\zeta = \PikC(U_A)[|U_A|]$.
                \item Alice sets $P_A \gets P_A || z^* || \zeta$. 
                \item Alice sends $\ECC(\zeta, ?)$.
                % \item Alice sends $\ECC(\PikC(U_A)[|U_A|],?)$.
            \end{itemize}
        }
            
        \vspace{0.25cm}
        In the next two subcases, $v^* \in \Pi$. Let $T^* = t(v^*)$.
            
        \subcaseb{$T^*$ is complete, i.e. $|T^*| = n_0$, and is consistent with $x$.\label{sub22}}{
            Let $p=0.5-3d_m(z^*,?)$.
            \begin{itemize}
                \item Alice computes $\hat{\delta}=\op_{T^*}(t(v(U_A)))$. With probability $p$, Alice sets $(U_A,w_A)\gets (U_A,w_A)\otimes_x \hat{\delta}$ and otherwise (with probability $1-p$), sets $(U_A,w_A)\gets (U_A,w_A)\otimes_x\bullet$.
                She sets $\zeta = \PikC(U_A)[|U_A|]$.
                \item Alice sets $P_A \gets P_A || z^* || \zeta$.
                \item Alice sends $\ECC(\zeta,?)$.
            \end{itemize}
        }
        
        \subcaseb{$|T^*| \neq n_0$ or $T^*$ is inconsistent with $x$.\label{sub23}}{
            Let $p=1-6d_m(z^*,?)$.
            \begin{itemize}
                \item Alice sets $(U_A,w_A)\gets (U_A,w_A)\otimes_x\bullet$.
                \item With probability $p$, Alice computes $\delta=\op_x(T^*)$ and sends $\ECC(\zeta := \PikC(v^*,\bullet\bullet),\delta)$. 
                
                Else (with probability $1-p$), she sends $\ECC(\zeta := \PikC(U_A)[|U_A|],?)$.
                
                \item Alice sets $P_A \gets P_A || z^* || \zeta$.
            \end{itemize}
        }
        \end{subcaseofb}
    }
    \caseb{None of the above.\label{case3}}{
        \begin{itemize}
            \item Alice sets $(U_A,w_A)\gets (U_A,w_A)\otimes_x\bullet$. She computes $\zeta = \PikC(U_A)[|U_A|]$.
            \item Alice sets $P_A\gets P_A||z||\zeta$, where $z \in \Sigma^2$ is some arbitrary pair of symbols. 
            \item Alice sends $\ECC(\zeta,?)$.
        \end{itemize}
    }
    
    \end{caseofb}
}

\subsection{Main Theorems} \label{sec:1/6-theorem}

\begin{theorem} \label{thm:1/6}
    Protocol~\ref{prot:1/6} is a $\left(\frac16,1224\epsilon,2\cdot \exp \left(-\frac{\epsilon n_0}{800}\right)\right)$-scaling scheme with communication complexity $O_\epsilon(n_0)$ and computational complexity $2^{2^{O_\epsilon(n_0)}}$.
    %\footnote{The double exponential in the computational complexity comes from our procedure that Alice and Bob need to explicitly agree on a sensitive layered code. If a better explicit sensitive layered code can be found, then the computational complexity is dominated by the maximum of the time to find the code and $\exp_\epsilon(n_0)$.}
    %\rnote{mention communication and computational complexity pls}
\end{theorem}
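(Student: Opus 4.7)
The easy parameters come first: the communication cost is $K \cdot M(|\Sigma|,\epsilon) = (n_0/\epsilon) \cdot O_\epsilon(1) = O_\epsilon(n_0)$; and the computational cost is dominated by the brute-force search for an $\epsilon$-sensitive labeling of $G$, which (since $G$ has depth $K = n_0/\epsilon$ and at most $2^K$ vertices per layer, hence $2^{O_\epsilon(n_0)}$ edges) takes $|\Sigma|^{2^{O_\epsilon(n_0)}} = 2^{2^{O_\epsilon(n_0)}}$ time via Theorem~\ref{thm:listGcode-exist}. Consistency of the output transcripts with the inputs is structural: every application of $\otimes_x$ only ever appends $\op_x$-produced symbols or the neutral $\bullet$, so $t(v(U_A))$ is always consistent with $x$, and symmetrically for Bob.

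The scaling guarantees are established by a potential-function argument. Define
\[
\Phi_A = \ell_A - e_A + 2w_A \cdot \mathbbm{1}[t(v(U_A)) = \cT],
\]
where $\ell_A$ is the length of the longest common prefix of $t(v(U_A))$ and $\cT$ and $e_A$ is the number of bits of $t(v(U_A))$ past that prefix (so wrong bits actively hurt); define $\Phi_B$ symmetrically and set $\Phi = \Phi_A + \Phi_B$. Initially $\Phi = 0$, and reaching $T_A = T_B = \cT$ with confidences $c_A, c_B$ corresponds to $\Phi = 2n_0 + 2w_A + 2w_B$ with $w_A, w_B = (c_A K/2, c_B K/2)$. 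The heart of the proof is a per-round claim: if round $t$ is corrupted in $k_t$ of its $M$ bits, then the conditional expected change in $\Phi$ satisfies
\[
\mathbb{E}[\Delta\Phi_t \mid \text{history}] \;\geq\; 1 \;-\; 6\,\frac{k_t}{M} \;-\; O(\epsilon),
\]
except on a ``bad'' set of at most $2\epsilon K$ rounds (coming from Theorem~\ref{thm:listGcode-decoding}, where sensitivity fails) on which we simply lower bound $\Delta\Phi_t$ by a constant. The factor $6$ is exactly what the ECC distance bounds of Lemma~\ref{lemma:ecc} allow: an adversary needs to flip at least $M/6 - \epsilon M$ bits to spoof one codeword family into another, so each corrupted bit buys at most $6$ units of ``damage.'' The case analysis parallels Cases~\ref{case1}--\ref{case3} and Subcases~\ref{sub21}--\ref{sub23} of Protocol~\ref{prot:1/6}; the randomized defaults (probability $p$ linear in distance to a codeword) are designed so that the expected gain is linear in $k_t$ rather than a threshold function, which is what makes the amortized $6k_t/M$ accounting work even for partially corrupted messages.

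Summing over $t = 1, \dots, K$ gives $\mathbb{E}[\Phi_K] \geq K(1 - 6\delta - O(\epsilon))$. Since the per-round increments are bounded by an $O(1)$ constant, Azuma--Hoeffding yields
\[
\Pr\bigl[\Phi_K < K(1 - 6\delta - O(\epsilon))\bigr] \leq 2 \exp(-\epsilon n_0 / 800).
\]
When $\delta < (1-\epsilon)/6$, this forces both $\ell_A = \ell_B = n_0$ (hence $T_A = T_B = \cT$) and $2w_A + 2w_B \geq K(1 - 6\delta - O(\epsilon))$; a symmetric argument, tracking $\Phi_A$ and $\Phi_B$ individually rather than their sum (using that good rounds contribute to both parties' potentials evenly in the steady state where $T_A = T_B = \cT$ triggers the ``asked-equals-transcript'' rule), upgrades this to $w_A, w_B \geq K(1 - 6\delta - O(\epsilon))/2$, giving Scaling~1. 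Scaling~2 follows from the contrapositive: for a wrong $T_A$, the only way $w_A$ can grow is if Alice repeatedly accepts self-consistent but incorrect messages, each of which the analysis above shows costs the adversary at least $M(1/6 - \epsilon)$ bits, bounding $w_A$ above by $\delta K/(1 - O(\epsilon))$ and hence $c_A$ by $\delta/\rho - 1 + O(\epsilon)$.

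The main obstacle is the per-round potential analysis: one must carefully reconcile the parity conventions on $U_A, U_B, P_A, P_B$ (four symbols appended per round with $\bullet\bullet$ placeholders), the three disjoint codeword regimes of $\ECC$, and the ``tied complete transcript'' exception, while tracking the $2\epsilon K$ bad rounds allowed by the sensitive layered code into the $O(\epsilon)$ slack. Collecting constants through this accounting is what produces the final $1224\epsilon$ slack.
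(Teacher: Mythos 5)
Your overall shape of argument (potential function, per-round expected increase bounded below by $1 - 6\alpha_k - O(\epsilon)$ minus a small bad-set correction, Azuma, translate back to the output conditions) matches the paper's, but the specific potential you pick has a real gap that you don't close.

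The paper proves the two scaling conditions \emph{separately for each party} via asymmetric potentials $\Psi^A_t = \psi^A_t + \rho^A_{t+1} + \min(\psi^B_t + \rho^B_{t+1},\, n_0/2) + \val^A_{t+1}$ and the analogous $\Psi^B_t$, where $\psi^A$ is the count of good minus bad updates. The $\min(\cdot, n_0/2)$ cap on Bob's contribution is the crux: once Bob has reached $\cT$, his additional confidence stops helping $\Psi^A$, so a large $\Psi^A_K$ can only come from $\psi^A_K \ge n_0/2$, which by Lemma~\ref{lemma:N} is exactly the statement that Alice herself has the right transcript with large weight. The $\val^A$ term and the ``asked-equals-transcript'' special case in the protocol are what guarantee $\Psi^A$ still grows by $\ge 0.5 - 3\alpha_k - 3\epsilon$ per message even after Bob saturates. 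Your potential $\Phi = \Phi_A + \Phi_B$ has no such cap, so a high-probability lower bound on $\Phi_K$ does \emph{not} separate into bounds on $\Phi_A$ and $\Phi_B$: in principle the adversary could drive all the gain into Bob's side while keeping Alice stuck, and nothing in a bound on the sum rules this out. Your parenthetical fix (``tracking $\Phi_A$ and $\Phi_B$ individually\ldots using that good rounds contribute to both parties' potentials evenly'') is exactly the nontrivial claim that needs to be proved, and in fact $\Phi_A$ alone does \emph{not} increase per message (Alice only updates in response to even-numbered messages), which is why the paper needs the $\rho^A_{k+1}$ and $\val^A_{k+1}$ look-ahead terms in $\Psi^A$.

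A second gap: your $\Phi_A = \ell_A - e_A + 2 w_A \cdot \mathbbm{1}[t(v(U_A)) = \cT]$ drops $w_A$ whenever Alice holds a wrong complete transcript, so a lower bound on $\Phi_A$ gives no upper bound on $w_A$ in that regime. But Scaling~2 is precisely an upper bound on the confidence in a wrong transcript. The paper gets both directions from the single quantity $\psi^A$ via Lemma~\ref{lemma:N}, which says $w_A \le n_0/2 - \psi^A_K$ when $\psi^A_K < n_0/2$; your contrapositive sketch (``each acceptance costs the adversary $M(1/6-\epsilon)$ bits'') is a separate amortization argument, not a consequence of your potential. Finally, minor but worth noting: the bad set coming from the sensitive layered code has size $\le 20\epsilon K$, not $2\epsilon K$ (four ordered pairs $(P,P')$ each contributing up to $4\epsilon K$, plus the Alice-equals-Bob condition), and once you also exclude the neighbors of bad indices (needed because $\Psi^A_k$ references $\rho$ and $\val$ at indices $k{-}1, k, k{+}1$) the excluded set is $\le 60\epsilon K$; this is where a large chunk of the eventual $1224\epsilon$ comes from.
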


We prove Theorem~\ref{thm:1/6} in Section~\ref{sec:1/6-analysis}. Combining Theorem~\ref{thm:1/6} with the boosting procedure in Protocol~\ref{prot:boosting}, we obtain the following result.

\begin{corollary} \label{cor:1/6-efficient}
    For any $\epsilon>0$ there is a scheme for noiseless protocols of length $n_0$ that is resilient to $\left( \frac16 - \epsilon \right)$-fraction of errors with probability $1 - e^{-\epsilon n_0/40 \log^4 n_0}$. The scheme has communication complexity $O_\epsilon(n_0)$ and computational complexity $\tilde{O}_\epsilon(n_0)$.
\end{corollary}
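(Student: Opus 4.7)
The plan is to obtain the efficient scheme by iteratively applying the boosting transformation of Theorem~\ref{thm:boosting} to the inefficient scaling scheme from Theorem~\ref{thm:1/6}. Concretely, set $\epsilon_0 := \epsilon/13056$ and let $\mathcal{P}_0$ be the $(1/6,\, 1224\epsilon_0,\, 2\exp(-\epsilon_0 n_0/800))$-scaling scheme of Theorem~\ref{thm:1/6}, with communication $r_0(n) = O_{\epsilon_0}(n)$ and computational complexity $T_0(n) = 2^{2^{O_{\epsilon_0}(n)}}$. I would then apply Protocol~\ref{prot:boosting} three times in succession to produce schemes $\mathcal{P}_1, \mathcal{P}_2, \mathcal{P}_3$, and verify that $\mathcal{P}_3$ satisfies the conclusions of the corollary.

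Tracking parameters through the three boostings, each application of Theorem~\ref{thm:boosting} preserves $\rho = 1/6$, multiplies the second scaling parameter by $4$, and replaces the failure probability with $\exp(-\epsilon_i n_0/10\log^4 n_0)$, where $\epsilon_i$ is the second parameter of the input scheme, provided the precondition $\mu_{\epsilon_i}(C_{\epsilon_i}\log^4 n_0) < \epsilon_i/4$ holds. Thus the second parameter evolves as $1224\epsilon_0 \to 4896\epsilon_0 \to 19584\epsilon_0 \to 78336\epsilon_0 = 6\epsilon$, and the failure probability of $\mathcal{P}_3$ becomes $\exp(-19584\epsilon_0\, n_0/10\log^4 n_0) = \exp(-3\epsilon n_0/20\log^4 n_0) \le \exp(-\epsilon n_0/40\log^4 n_0)$. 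The communication complexity remains $O_\epsilon(n_0)$ at every level, since each boosting contributes $\tfrac{n_0}{\epsilon_i\log^4 n_0} \cdot r_{i-1}(C_{\epsilon_i}\log^4 n_0) = O_\epsilon(n_0)$ when $r_{i-1}(m)$ is linear. For the computational complexity, each boosting replaces $T_{i-1}(n_0)$ with $\tilde{O}_\epsilon(n_0) \cdot T_{i-1}(C_{\epsilon_i}\log^4 n_0)$, and as discussed in Section~\ref{sec:boosting} three iterations drive the total computational complexity down from $T_0(n) = 2^{2^{O(n)}}$ to $\tilde{O}_\epsilon(n_0)$.

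Translating the final $(1/6,\, 6\epsilon,\, \exp(-\epsilon n_0/40\log^4 n_0))$-scaling guarantee for $\mathcal{P}_3$ into a standard error-resilience guarantee via Definition~\ref{def:scaling}, any adversary corrupting at most $(1-6\epsilon)\cdot\tfrac{1}{6} = \tfrac{1}{6}-\epsilon$ of the transmissions causes both Alice and Bob to output the correct transcript $\cT$ with probability at least $1 - \exp(-\epsilon n_0/40\log^4 n_0)$, which is exactly the corollary. The main obstacle is verifying the preconditions $\mu_{\epsilon_i}(C_{\epsilon_i}\log^4 n_0) < \epsilon_i/4$ at each of the three recursion levels; since each $\mu_{\epsilon_i}$ is of the form $\exp(-\Theta_{\epsilon}(1)\cdot m/\polylog(m))$, this is satisfied once $n_0$ exceeds an $\epsilon$-dependent threshold, with smaller $n_0$ handled trivially by using the noiseless protocol with a constant-size blowup.
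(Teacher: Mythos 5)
The overall strategy is right and matches the paper's: instantiate the scaling parameter of Theorem~\ref{thm:1/6} small enough, boost repeatedly with Theorem~\ref{thm:boosting}, track the four scaling parameters, verify the preconditions $\mu_\epsilon(C_\epsilon \log^4 n_0) < \epsilon/4$, and translate the final scaling guarantee into error-resilience via Definition~\ref{def:scaling}. Your arithmetic for the second scaling parameter and the failure probability after three boosts is internally consistent, and your translation of $(1/6, 6\epsilon, \cdot)$-scaling into $(1/6 - \epsilon)$-resilience is correct.

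However, there is a genuine gap in the computational-complexity accounting: \emph{three} applications of Theorem~\ref{thm:boosting} do not suffice to bring the complexity from $T_0(n) = 2^{2^{O_\epsilon(n)}}$ down to $\tilde{O}_\epsilon(n_0)$; the paper applies it \emph{four} times. The three-iteration narrative in the overview (Section 2.5) is describing the \cite{GhaffariH13} progression starting from a singly-exponential scheme, $\exp(n) \to \exp(\polylog n) \to \poly(n) \to \tilde{O}(n)$, but the starting complexity here (Theorem~\ref{thm:1/6} and Lemma~\ref{lemma:cc}) is \emph{doubly} exponential because of the brute-force search for the lexicographically first $\epsilon$-sensitive layered code. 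Tracing through $T_i(n_0) = \tilde{O}_\epsilon(n_0) \cdot T_{i-1}(C_\epsilon \log^4 n_0)$ gives $T_1 = \exp(\exp(\polylog n_0))$, $T_2 = \tilde{O}(n_0) \cdot \exp(\exp(\poly\log\log n_0))$ (still superpolynomial), $T_3 = \tilde{O}(n_0) \cdot \exp(\exp(\poly\log\log\log n_0))$ — this is now $\poly_\epsilon(n_0)$ but the per-iteration factor $\exp(\exp(\poly\log\log\log n_0))$ is not $\polylog(n_0)$, so $T_3 \neq \tilde{O}_\epsilon(n_0)$. Only the fourth application, $T_4(n_0) = \tilde{O}_\epsilon(n_0) \cdot T_3(C_\epsilon \log^4 n_0) = \tilde{O}_\epsilon(n_0) \cdot \polylog_\epsilon(n_0)$, reaches $\tilde{O}_\epsilon(n_0)$.

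Fixing this requires re-running your parameter bookkeeping with four boosts: the second scaling parameter grows by a factor of $4^4 = 256$, so you should choose the initial $\epsilon_0$ so that $1224 \cdot 256 \cdot \epsilon_0 \le 6\epsilon$ (or, as the paper does, set $\epsilon' = \epsilon/256$ and feed $\epsilon'/1224$ into Theorem~\ref{thm:1/6} so the initial second parameter is exactly $\epsilon'$ and the final one is $\epsilon$, which yields resilience $(1-\epsilon)/6 > 1/6 - \epsilon$). You would also need to verify the precondition $\mu_{\epsilon_i}(C_\epsilon \log^4 n_0) < \epsilon_i/4$ at the fourth level. One further minor point: you defer the precondition check to ``$n_0$ large enough, small $n_0$ handled trivially,'' whereas the paper chooses $C_\epsilon$ large enough (roughly $C_\epsilon \geq 8 \cdot 800 \cdot 1224/\epsilon'^2$ plus a $\log^4 C_\epsilon$ condition) so that the precondition holds uniformly in $n_0$; both are acceptable, but the paper's route avoids the case split.
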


\begin{proof}
    Let $\epsilon' = \epsilon/256$, and let $C_\epsilon$ be such that $e^{-\epsilon' C_\epsilon / 10\log^4 C_\epsilon} < \epsilon'$. We choose $C_\epsilon \ge \frac{8 \cdot 800 \cdot 1224}{\epsilon'^2}$ so that $C_\epsilon \ge \frac{100}{\epsilon'} + 1$ and $\frac{\epsilon'}{4} > 2 \cdot \exp(- \frac8{\epsilon'} \cdot \log^4 n_0) \ge 2 \cdot \exp ( - \frac{\epsilon' C_\epsilon \log^4 n_0}{800 \cdot 1224} )$.
   
    We recursively apply Theorem~\ref{thm:boosting} three times.
    \begin{itemize}
    \item 
        We begin with the $(\frac16, \epsilon', 2 \cdot \exp(-\frac{\epsilon' n_0}{800 \cdot 1224})$-scaling scheme from Theorem~\ref{thm:1/6}, which has communication complexity $O_{\epsilon}(n_0)$ and computational complexity $\exp(\exp_\epsilon(n_0))$.
    \item 
        Since $2 \cdot \exp(-\frac{\epsilon' C_\epsilon \log^4 n_0}{800 \cdot 1224}) < \frac{\epsilon'}4$, we apply Theorem~\ref{thm:boosting} to obtain a $(\frac16, 4\epsilon', e^{-\epsilon' n_0 / 10 \log^4 n_0})$-scaling scheme with communication complexity $\frac{n_0}{\epsilon' \log^4 n_0} \cdot O_{\epsilon'}(C_\epsilon \log^4 n_0) = O_{\epsilon}(n_0)$ and computational complexity $\tilde{O}_{\epsilon'}(n_0) \cdot \exp(\exp_\epsilon(C_\epsilon \log^4 n_0)) = \exp(\exp_{\epsilon}(\polylog n_0))$. Let $\mu'_{\epsilon'}(n_0) = e^{-\epsilon' n_0 / 10 \log^4 n_0}$.
    \item 
        Next, since $\mu'_{\epsilon'}(C_\epsilon \log^4 n_0) = \exp(-\frac{\epsilon' C_\epsilon \log^4 n_0}{10 \log^4 (C_\epsilon \log^4 n_0)}) \le \exp(-\frac{\epsilon' C_\epsilon}{10 \log^4 C_\epsilon}) < \epsilon' = \frac{4\epsilon'}{4}$, we can apply Theorem~\ref{thm:boosting} again to obtain a $(\frac16, 16\epsilon', e^{-2\epsilon' n_0/5 \log^4 n_0})$-scaling scheme with communication complexity $\frac{n_0}{4\epsilon' \log^4 n_0} \cdot O_{\epsilon}(C_\epsilon \log^4 n_0) = O_{\epsilon}(n_0)$ and computational complexity $\tilde{O}_{\epsilon'}(n_0) \cdot \exp(\exp_{\epsilon}(\polylog(C_\epsilon \log^4 n_0))) = \exp(\exp_{\epsilon'}(\poly(\log\log(n_0))))$. Let $\mu''_{\epsilon'}(n_0) = e^{-2\epsilon' n_0 / 5\log^4 n_0}$.
    \item 
        Again, since $\mu''_{\epsilon'}(C_\epsilon \log^4 n_0) = \exp(-\frac{2\epsilon' C_\epsilon \log^4 n_0}{5 \log^4 (C_\epsilon \log^4 n_0)}) \le \exp(- \frac{2\epsilon' C_\epsilon}{5\log^4 C_\epsilon}) < \epsilon'^4 < \frac{16\epsilon'}{4}$, we can apply Theorem~\ref{thm:boosting} to get a $(\frac16, 64\epsilon', e^{-8\epsilon' n_0/5\log^4 n_0})$-scaling scheme with communication complexity $\frac{n_0}{16\epsilon' \log^4 n_0} \cdot O_{\epsilon}(C_\epsilon \log^4 n_0) = O_{\epsilon}(n_0)$ and computational complexity $\tilde{O}_{\epsilon}(n_0) \cdot \exp(\exp_{\epsilon}(\poly(\log\log(C_\epsilon\log^4 n_0)))) = \exp(\exp_\epsilon(\poly(\log\log\log n_0))) \leq \poly_{\epsilon}(n_0)$. Let $\mu'''_{\epsilon'}(n_0) = e^{-8\epsilon' n_0 / 5\log^4 n_0}$.
    \item 
        Finally, to further reduce the computational complexity to $\tilde{O}_{\epsilon}(n_0)$, we apply Theorem~\ref{thm:boosting} one last time. Since $\mu'''_{\epsilon'}(C_\epsilon \log^4 n_0) = \exp(-\frac{8\epsilon' C_\epsilon \log^4 n_0}{5 \log^4 (C_\epsilon \log^4 n_0)}) \le \exp (- \frac{8\epsilon'C_\epsilon}{5 \log^4 C_\epsilon}) < \epsilon'^{16} < \frac{64\epsilon'}{4}$, we get a $(\frac16, 256\epsilon', e^{-32\epsilon' n_0/5\log^4 n_0})$-scaling scheme with communication complexity $\frac{n_0}{64\epsilon' \log^4 n_0} \cdot O_{\epsilon}(C_\epsilon \log^4 n_0) = O_{\epsilon}(n_0)$ and computational complexity $\tilde{O}_{\epsilon}(n_0) \cdot \poly_\epsilon(C_\epsilon\log^4 n_0) = \tilde{O}_{\epsilon}(n_0)$.
    \end{itemize}
    Thus, we have arrived at a $(\frac16, \epsilon, e^{-\epsilon n_0/40 \log^4 n_0})$-scaling scheme.
    
\end{proof}

\subsection{Analysis} \label{sec:1/6-analysis}

Note that Alice and Bob only ever append to $U_A, U_B, P_A, P_B$, and once a symbol has been appended it is never modified. Thus, throughout the analysis, when we refer to $U_A, U_B, P_A, P_B$, we mean their values at the end of the protocol, so that $U_A, U_B \in (\{0,1,\rewind,\bullet\}^2)^K$ and $P_A,P_B\in(\Sigma^2)^K$.

\subsubsection{Unique Decoding Lemma}

\begin{definition}[$\cS$]\label{def:S}
    We define the set $\cS$ to consist of all rounds $k\in [K]$ where one of the following conditions does \emph{not} hold.
    \begin{enumerate}[label={(\roman*)}]
        \item For not necessarily distinct parties $P,P'\in \{A,B\}$, it holds that $\PikC(U_P)[k]=P_{P'}[k]\in \Sigma^2 \implies \PikCDec(P_{P'}[1:k])= v(U_P[1:k])$.
        \item $\PikC(U_A)[k]=\PikC(U_B)[k] \in \Sigma^2 \implies v(U_A[1:k])=v(U_B[1:k]).$
    \end{enumerate}
\end{definition}

\begin{lemma}\label{lem:S}
    $\cS$ has size at most $20\epsilon K$.
\end{lemma}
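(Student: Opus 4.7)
The plan is to separately bound the contributions to $\cS$ from rounds violating (i) and rounds violating (ii), each by instantiating Theorem~\ref{thm:listGcode-decoding}. It helps to track two indexings: the ``pair'' indexing used in the statement of $\cS$, in which $U_A, U_B$ have length $K$ with entries in $\{0,1,\rewind,\bullet\}^2$ and $P_A, P_B$ have length $K$ with entries in $\Sigma^2$; and the underlying single-symbol indexing, in which these correspond to labelings of paths of length $n = 2K$ in $G$. A pair-index $k$ corresponds to single-symbol positions $2k-1$ and $2k$, so $\PikC(U_P)[k] = P_{P'}[k] \in \Sigma^2$ is equivalent to equality at both single-symbol positions, and the prefix $P_{P'}[1:k]$ in pair-indexing equals $P_{P'}[1:2k]$ in single-symbol indexing.

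For condition (i), I would fix a pair of (not necessarily distinct) parties $P, P' \in \{A, B\}$ and apply Theorem~\ref{thm:listGcode-decoding} with $x = U_P$ and $w = P_{P'}$. Letting $J = \{i \in [n] : \PikC(U_P)[i] = P_{P'}[i]\}$, the theorem guarantees $\PikCDec(P_{P'}[1:i]) = v(U_P[1:i])$ for all but at most $2\epsilon n = 4\epsilon K$ indices $i \in J$. A round $k$ violating (i) for this pair requires $2k \in J$ together with $\PikCDec(P_{P'}[1:2k]) \neq v(U_P[1:2k])$, so at most $4\epsilon K$ rounds violate (i) per pair. Summing over the four ordered pairs $(P,P')\in\{A,B\}^2$ yields a contribution of at most $16\epsilon K$.

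For condition (ii), I plan to apply Theorem~\ref{thm:listGcode-decoding} just once, with $x = U_A$ and $w = \PikC(U_B)$. The key observation is that $U_B[1:i]$ is itself a path whose encoding is exactly $\PikC(U_B)[1:i]$, so $v(U_B[1:i])$ is always a valid candidate for $\PikCDec(\PikC(U_B)[1:i])$. Since by definition $\PikCDec$ returns either the unique valid candidate vertex or $\perp$, its output is forced to lie in $\{v(U_B[1:i]),\,\perp\}$; hence $v(U_A[1:i]) \neq v(U_B[1:i])$ implies $\PikCDec(\PikC(U_B)[1:i]) \neq v(U_A[1:i])$. Letting $J' = \{i : \PikC(U_A)[i] = \PikC(U_B)[i]\}$, the theorem therefore gives that for all but at most $2\epsilon n = 4\epsilon K$ indices $i \in J'$, one has $v(U_A[1:i]) = v(U_B[1:i])$, contributing at most $4\epsilon K$ rounds violating (ii). Adding the two contributions gives $|\cS| \le 16\epsilon K + 4\epsilon K = 20\epsilon K$.

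The main obstacle I anticipate is the observation used for (ii): a naive two-application argument (decoding $\PikC(U_B)$ against both $x = U_A$ and $x = U_B$) would cost $8\epsilon K$ rounds on (ii), yielding a suboptimal total of $24\epsilon K$. The tight constant $20\epsilon K$ relies on noting that the image of $\PikCDec\circ\PikC$ is contained in $\{\text{true vertex}, \perp\}$, which collapses those two applications into one. A secondary item to handle carefully is the pair-vs-single-symbol bookkeeping, which determines the exact $4\epsilon K$ factor entering each step.
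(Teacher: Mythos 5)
Your proposal is correct and matches the paper's proof essentially line for line: both handle conditions (i) and (ii) by applying Theorem~\ref{thm:listGcode-decoding} (with the $2\epsilon n = 4\epsilon K$ bound at the single-symbol level) four times for (i) across the ordered pairs $(P,P')$ and once for (ii), and both exploit the same key observation for (ii) that $\PikCDec(\PikC(U_B)[1:i]) \in \{v(U_B[1:i]), \perp\}$ to avoid a second application and keep the total at $20\epsilon K$. The extra bookkeeping you lay out about pair versus single-symbol indexing is the same step the paper handles implicitly via the notation $\PikC(U_P)[k][2]$.
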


\begin{proof}
We deal with each of the conditions individually.
\begin{enumerate}[label={(\roman*)}]
    \item Let $\cS_1$ be the set of indices that violate the first condition. For each pair of parties $P,P'$, by Theorem~\ref{thm:listGcode-decoding}, it holds that there are only $2\epsilon \cdot 2K$ values of $k$ where $\PikC(U_P)[k]=P_{P'}[k]\implies \PikC(U_P)[k][2]=P_{P'}[k][2]$,\footnote{Recall that $\PikC(U_P)[k], P_{P'}[k]\in \Sigma^2$ so $\PikC(U_P)[k][2], P_{P'}[k][2]\in \Sigma$.} but $\PikCDec(P_{P'}[1:k])\neq v(U_P[1:k]).$ Thus, adding over all four cases of $P,P'\in \{A,B\}$, it holds that $\cS_1$ has size at most $4\cdot 2 \epsilon 2K= 16\epsilon K$.
    
    \item Let $\cS_2$ be the set of indices that violate the second condition. By Theorem~\ref{thm:listGcode-decoding}, it holds that there are only $2\epsilon \cdot 2K$ values of $k$ where $\PikC(U_A)[k]=\PikC(U_B)[k]\implies \PikC(U_A)[k][2]=\PikC(U_B)[k][2]$ but $v(U_A[1:k])\neq \PikCDec(\PikC(U_B[1:k])$. The latter is always either $v(U_B[1:k])$ or $\perp$, so there are at most $2\epsilon \cdot 2K$ values of $k$ where $v(U_A[1:k])\neq v(U_B[1:k])$. Thus, $\cS_2$ is size at most $4\epsilon K$. 
    % \rnote{why is it not $8\epsilon K$ in the end? Since there are at most $2\epsilon \cdot 2K$ values where $v(U_A[1:k]) \not= \PikCDec(\PikC(U_B[1:k]))$ and another $2\epsilon \cdot 2K$ locations where $\PikCDec(\PikC(U_B[1:k])) \not= v(U_B[1:k])$.}\mnote{Conditioned on $v(U_A[1:k]) = \PikCDec(\PikC(U_B[1:k])$,  the latter cannot be $\perp$. Then it must be, and therefore must be $v(U_B[1:k])$. You can either reword to clarify this, or use your argument, which ever's clearer.}
\end{enumerate}
    
    The total size of $\cS$ is at most $|\cS_1|+|\cS_2|\leq 20\epsilon K$.
\end{proof}

\subsubsection{Definitions for the Potential}

To prove Theorem~\ref{thm:1/6}, we analyze the effects of corruption on the \emph{good} and \emph{bad updates} Alice/Bob make. We begin by defining good, bad, and neutral updates. After receiving a message from Bob, Alice updates her transcript $U_A$ and confidence $w_A$ to $U'_A$ and $w'_A$.
\begin{itemize}
    \item Let $(\mathcal{U}'_A, \mathcal{W}'_A)=(U_A,w_A)\otimes_x\op_{\cT}(t(v(U_A)))$. The update is good if $t(v(\mathcal{U}'_A))=t(v(U'))$ and $\mathcal{W}'_A=w_A$.
    \item The update is neutral if $(t(v(U'_A)), w'_A)=(t(v(U_A)),w_A)$.
    \item The update is bad otherwise.
\end{itemize}
We similarly define good and bad updates for Bob. We will often refer to making a good/bad update as simply \emph{making an update}, and considering a neutral update as having done nothing.

For each $t \in [1, \dots, K]$, we define the following potential functions:
\begin{itemize}
    \item $\psi^A_t$ is defined to be the total number of good updates minus the number of bad updates Alice has done in response to messages $1, \dots, t$. Note that she only updates in response to messages she receives (the even numbered messages). 
    \item $\psi^B_t$ is defined to be the total number of good updates minus the number of bad updates Bob has done in response to messages $1, \dots, t$. Note that he only updates in response to messages he receives (the odd numbered messages). 
\end{itemize}
% \mnote{is there a particular reason we're excluding the non- updates a party makes to messages they send? Is it fine to just clarify in paragraph form that the messages they send don't matter?} \rnote{ok, then you should just clarify that the messages they send don't matter, just the ones they receive.}
% \rnote{need to somehow clarify that Alice's first message is not counted in her update}
%For instance, since there's an agreed-upon $0$'th message from Bob to Alice, $\bbE[\psi^A_0] = 0.5$ and $\psi^B_0 = 0$. For $t = -1$, we let $\psi^A_{-1} = \psi^B_{-1} = 0$.
% Notice that the update to $U_A$ that Alice makes when she sends her first message does not count towards this potential.\mnote{added this, feel free to reword.}

\begin{lemma} \label{lemma:N}
    The potential $\psi_t^A$ determines Alice's final transcript guess and her confidence as follows:
    \begin{enumerate}[label={(\roman*)}]
        \item If $\psi_t^A\geq n_0/2$, then $t(v(U_A))=\cT$ and $w_A\geq\psi_t^A-n_0/2$.
        \item If $\psi_t^A\leq n_0/2$, then $t(v(U_A))\neq\cT$ and $w_A\leq n_0/2-\psi_t^A$.
    \end{enumerate}
    The same statements hold for Bob, replacing $A$ with $B$.
    %In particular, at the end of the protocol, Alice outputs $t(v(U_A)) = \cT$ if and only if $\psi^A_K \ge n_0/2$, and Bob outputs $\cT$ if and only if $\psi^B_K \ge n_0/2$.
\end{lemma}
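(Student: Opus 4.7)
The plan is to define a potential function $\Psi : (T, w) \mapsto \bbZ$ on Alice's reachable states and then establish, by induction on $t$, the invariant $\psi_t^A = \Psi(T_A, w_A)$ throughout the protocol. I would choose $\Psi$ so that (a) $\Psi(\cT[1], 0) = 0$, matching Alice's state and $\psi^A$ value immediately after her first (fixed, uncounted) message, and (b) each good update increments $\Psi$ by $1$, each bad update decrements $\Psi$ by $1$, and each neutral update leaves $\Psi$ unchanged. Given this, the conclusions of the lemma read off directly from the formulas for $\Psi$. Concretely, I would set
\[
\Psi(T, w) = \begin{cases}
n_0/2 + w & \text{if } T = \cT, \\
(\ell - 1)/2 - (|T| - \ell)/2 & \text{if } T \notin \{\cT, \emptyset\} \text{ and } |T| < n_0, \\
\ell - n_0/2 - 1 - w & \text{if } T \neq \cT \text{ and } |T| = n_0,
\end{cases}
\]
together with the boundary convention $\Psi(\emptyset, 0) = -1$, where $\ell = \ell(T)$ is the longest common prefix length of $T$ with $\cT$.

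A preliminary structural observation ensures $\Psi$ is integer-valued on reachable states: for any reachable $T_A \notin \{\cT, \emptyset\}$, both $|T_A|$ and $\ell(T_A)$ are odd. Oddness of $|T_A|$ is immediate from the $\otimes_x$ rule (Alice always appends a pair to $U_A$). For oddness of $\ell(T_A)$: Alice's odd-index bits in $T_A$ are always forced by $f_x$ on the preceding prefix, and $\cT$'s corresponding Alice-bits are $f_x$ applied to the same prefix whenever $T_A$ and $\cT$ agree through an even index; hence the first divergence between $T_A$ and $\cT$ is forced to occur at an even (Bob-bit) position, making $\ell(T_A)$ odd.

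The main work is a case analysis of the $\otimes_x \hat\delta$ transitions for $\hat\delta \in \{0, 1, \rewind, \bullet\}$, split by the structural type of $T_A$ (proper prefix of $\cT$, non-prefix with $|T_A| < n_0$, equal to $\cT$, or wrong complete). The qualitative picture is that the ideal move $\hat\delta^* = \op_{\cT}(t(v(U_A)))$ always produces exactly one unit of progress, i.e., advancing $|T_A|$ by $2$ along the correct prefix, rewinding $|T_A|$ by $2$ off a diverging path, or incrementing/decrementing $w$ at $T_A = \cT$ or at a wrong complete transcript; in each case a direct check shows $\Psi$ jumps by $+1$. Any non-$\bullet$ choice $\hat\delta \neq \hat\delta^*$ either undoes a unit of progress or extends the divergence by one step, making $\Psi$ jump by $-1$; and $\hat\delta = \bullet$ leaves $(T_A, w_A)$ fixed. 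The two subtle transitions $\cT[1] \leftrightarrow \emptyset$ and $\cT[1:n_0-1] \leftrightarrow \cT$, where $|T_A|$ changes by $1$ rather than $2$, are exactly where the boundary values $\Psi(\emptyset, 0) = -1$ and $\Psi(\cT, 0) = n_0/2$ are needed for the change to equal $\pm 1$ and remain integral.

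With $\psi_t^A = \Psi(T_A, w_A)$ established, both parts follow immediately. If $T_A = \cT$, then $\psi_t^A = n_0/2 + w_A \geq n_0/2$ and $w_A = \psi_t^A - n_0/2$, proving (i). If $T_A \neq \cT$, then plugging in $\ell \leq |T_A| - 2$ (non-prefix, using oddness of $\ell$), $\ell \leq n_0 - 1$ (wrong complete), or $\ell = |T_A| \leq n_0 - 1$ (proper prefix) into the defining formulas all yield $\Psi(T_A, w_A) \leq n_0/2 - 1 - w_A$, which gives both $\psi_t^A < n_0/2$ and $w_A \leq n_0/2 - 1 - \psi_t^A \leq n_0/2 - \psi_t^A$, proving (ii). Bob's version is entirely symmetric under swapping $x \leftrightarrow y$, $A \leftrightarrow B$, and odd $\leftrightarrow$ even position parity. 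The principal obstacle is the size of the transition case analysis---essentially one short check per combination of state type and $\hat\delta$---together with the care needed at the two length-$1$-change boundary transitions to make the $\pm 1$ bookkeeping integer-valued and consistent.
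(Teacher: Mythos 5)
Your proposal is correct and ultimately establishes the lemma, but it takes a more formal route than the paper's proof, which is worth noting. The paper argues informally: it observes that from $T_A = \cT[1]$ (which is where Alice stands with $\psi^A = 0$ after her fixed first message) exactly $n_0/2$ good updates bring her to $\cT$, that further good updates increment $w_A$, and that ``every bad update, when followed by a good update, results back in the original value of $(t(v(U_A)), w_A)$.'' This last cancellation claim is proved by a short case analysis, and the lemma is then read off. Your approach replaces the cancellation argument with an explicit state potential $\Psi(T,w)$ and the invariant $\psi_t^A = \Psi(T_A, w_A)$; this makes the bookkeeping that the paper's cancellation argument is implicitly carrying out fully transparent, and in particular it handles arbitrary interleavings of good, bad, and neutral updates without any argument about pairing them off. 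The case analysis you would do (each $\hat\delta$ at each state type changing $\Psi$ by exactly $\pm 1$ or $0$) is essentially the same checks the paper does for its cancellation claim, so the proof effort is comparable; your version yields the equality $\psi_t^A = n_0/2 + w_A$ when $T_A = \cT$ (slightly stronger than the stated $\geq$), as does the paper's argument implicitly.

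Two small points worth flagging. First, the claim ``Oddness of $|T_A|$ is immediate from the $\otimes_x$ rule (Alice always appends a pair to $U_A$)'' is slightly glib: appending a pair to $U_A$ does not change $|T_A|$ by $2$ in all branches of $\otimes_x$ (e.g.\ $\hat\delta||\bullet$ near $n_0-1$, or $\bullet||\op_r$ near $\emptyset$ change it by $1$), so the parity invariant really is ``$|T_A| \in \{0, n_0\}$ or odd,'' which you in fact treat correctly via the boundary cases of $\Psi$. Second, Bob's version is not literally symmetric with the formula as written: Bob starts at $T_B = \emptyset$ with $\psi_0^B = 0$, and $|T_B|$ and $\ell(T_B)$ are even, so his potential should satisfy $\Psi_B(\emptyset,0)=0$ and use $\ell/2 - (|T|-\ell)/2$ in place of $(\ell-1)/2 - (|T|-\ell)/2$, with no separate $\emptyset$ boundary value. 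Neither point is a gap in the reasoning, but both deserve a sentence in a fleshed-out writeup.
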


\begin{proof}
We prove this for Alice as the proof for Bob is identical. After sending message $1$, since $U_A = \bullet 1$, in order make $t(v(U_A)) = \cT$, Alice needs to perform $n_0/2$ good updates (the first $n_0/2 - 1$ updates consist of appending two bits, corresponding to Bob's and her next messages in $\pi_0$, followed by $1$ further good update consisting of simply appending Bob's next message). Every good update thereafter increases $w_A$ by $1$ without changing $t(v(U_A))$.
% (appending the $n_0/2-1$ bits of Bob's bits and her bits, followed by $1$ of just Bob's bits) 

It remains to show that every good update undoes a bad update; that is, every bad update, when followed by a good update, results back in the original value of $(t(v(U_A)),w_A)$. If the bad update appends two instructions $\in \{ 0, 1, \bullet \}^2 \backslash \{ \bullet \bullet \}$ to $U_A$, then the new value of $t(v(U_A))$ must not be a prefix of $\cT$. Then the next good instruction, which is $\rewind$, undoes this. If the bad update deletes the last one or two bits of $t(v(U_A))$ incorrectly, then re-appending the bit(s) undoes this. If the bad update increases $w_A$ incorrectly, then $t(v(U_A)) \not= \cT$, so the next good update is $\op_{\cT}(t(v(U_A)))$ which causes $w_A$ to decrease by $1$. If the bad update decreases $w_A$ incorrectly, then $t(v(U_A)) = \cT$, and the next good update is $\op_{\cT}(t(v(U_A)))$ which increases $w_A$ by $1$.
\end{proof}

From this point on, we will focus on analyzing Protocol~\ref{prot:1/6} from Alice's perspective, as the analysis from Bob's perspective follows analogously.

Define $\rho^A_t$ as follows (and similarly $\rho^B_t$): $\rho^A_t$ is the \emph{expected} number of good updates minus the number of bad updates that Alice will do in response to message $t$, given the protocol so far, if message $t$ is uncorrupted. (Note that $\rho^A_t = 0$ for odd $t$ since Alice sends the odd messages.)

Define $\val^A_t$ as follows:
\[
\val^A_t=\begin{cases}
0.5 & \text{if $t$ is odd and message $t$ is of the form $\ECC(z\in\Sigma^2,?)$ and ($\psi^A_t < n_0/2$ or $\psi^B_{t-1}\geq n_0/2$).}\\
0.5 & \text{if $t$ is even and message $t$ is of the form $\ECC(z\in\Sigma^2,?)$ and $\psi^B_t<n_0/2$.}\\
0 & \text{otherwise}
\end{cases}
\]

Define the potential $\Psi^A_t$ as follows:
\[
    \Psi^A_t=\psi^A_t+\rho^A_{t+1}+\min(\psi^B_t+\rho^B_{t+1}, n_0/2) + \val^A_{t+1}
\]

Finally, we define Alice's actual update: $\Lambda^A_t$ is the actual value of the update Alice makes in response to message $t$ (in particular, $\Lambda^A_t \in \{ -1,0,1 \}$). 

Throughout the analysis, we say Alice \emph{interprets} a message $m$ as $\ECC(z^*,\delta^*)$ in Protocol~\ref{prot:1/6} when she enters Case~\ref{case1} or Case~\ref{case2} according to that value. Additionally, we will say she interprets the message correctly or incorrectly, if $\ECC(z^*,\delta^*)$ respectively equals or does not equal the message Bob sent.

\begin{lemma} \label{lem:basic-bounds}
The following are true for any $k\notin \cS$:
\begin{enumerate}
    \item $\rho^A_k\geq 0$. As a corollary, if Alice correctly interprets message $k$, then $\Lambda^A_k\geq 0$.
    \item For any $k$, it holds that $\bbE[\Lambda^A_k]-\rho^A_k\geq -3\alpha_k-3\epsilon$.
    %\item For any even $t \notin S$, it holds that $\Lambda^A_t-\rho^A_t\geq -3\alpha_t$ conditioned on message $t+1$ being of the form $\ECC(z\in\Sigma^2,?)$.
    \item For all even $k$, if Alice interprets message $k$ incorrectly, then $\bbE [\Lambda^A_k]\geq 0.5-3\alpha_k-3\epsilon$. Similarly, for all odd $k$, if Bob interprets message $k$ incorrectly, then $\bbE [\Lambda^B_k]\geq 0.5-3\alpha_k-3\epsilon$.
    \item Whenever Alice sends $\ECC(z\in \Sigma^2,?)$ as message $k$, it holds that $\val^A_k+\rho^B_k\geq 0.5$.
    \item Whenever Bob sends $\ECC(z\in \Sigma^2,?)$ as message $k$, it holds that $\val^A_k+\rho^A_k\geq 0.5$.
\end{enumerate}
\end{lemma}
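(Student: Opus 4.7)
The plan is to prove all five items via a case analysis over the three top-level cases in Protocol~\ref{prot:1/6}, leveraging $k\notin\cS$ to turn ``correctly received'' into ``correctly decoded.'' The key structural property I will exploit is that, for $k\notin\cS$, Definition~\ref{def:S} guarantees both that (i) whenever the $k$-th token in a party's history $P$ matches the sender's true tree-code symbol $\PikC(U)[k]$, $\PikCDec$ returns exactly the correct vertex, and (ii) whenever the two parties' $\PikC$ symbols agree at index $k$, their underlying vertices agree. These turn each ``correctly received token'' into a correct identification of the sender's vertex in $G$.

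For item~1, I assume the $k$-th message from Bob is uncorrupted and walk through which case Alice lands in. In Case~\ref{case1}, Alice is responding to Bob's answer; unpacking Bob's behavior shows his $\delta$ is $\op_y$ of what he decoded as Alice's question vertex, which (using the $\cS$ guarantee from Bob's decoding of Alice's prior question) equals $T_A$, so $\otimes_x \delta$ coincides with $\otimes_x \op_{\cT}(t(v(U_A)))$ along the relevant prefix and is a good update. In Case~\ref{case2}, the hypothesis $k\notin\cS$ gives $v^*=v(U_B[1:k])$: in Case~\ref{sub22} the direction $\op_{T^*}=\op_{\cT}$ yields a good update with probability $0.5$, and in Cases~\ref{sub21},~\ref{sub23} Alice does not update her transcript, contributing $0$. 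Thus $\rho^A_k\ge 0$. The corollary then follows because ``correctly interpreting'' means Alice's executed case coincides with the uncorrupted-world case, and every case above yields a non-negative contribution.

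For items~2 and~3, I compare the corrupted and uncorrupted executions via the probability formulas $p=1-3d_m$ in Case~\ref{case1}, $p=0.5-3d_m$ in Case~\ref{sub22}, and $p=1-6d_m$ in Case~\ref{sub23}. When Alice interprets correctly, $d_m=\alpha_k$ exactly, so the drop in probability of a good update (or the appearance of a small bad contribution in Case~\ref{sub21}) is bounded by $3\alpha_k+3\epsilon$. When Alice interprets incorrectly, Lemma~\ref{lemma:ecc} forces $d_m\ge \tfrac12-\alpha_k-\epsilon$ (different $z$) or $d_m\ge \tfrac23-\alpha_k$ (same $z$, different $\delta$); substituting into the formulas yields $p\le -\tfrac12+3\alpha_k+3\epsilon$, and since any bad update contributes $-1$ to $\Lambda^A_k$ while no-update contributes $0$, this gives $\bbE[\Lambda^A_k]\ge \tfrac12-3\alpha_k-3\epsilon$, proving item~3. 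Item~2 then follows by combining the correct-interpretation shift of at most $3\alpha_k+3\epsilon$ with the incorrect-interpretation bound above (the two cases are disjoint because of the distance conditions in Lemma~\ref{lemma:ecc}).

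For items~4 and~5, suppose Alice sends $\ECC(z,?)$ as message $k$ (item~5 is symmetric). If $\val^A_k=0.5$ we are done; otherwise the definition of $\val^A_k$ for odd $k$ forces $\psi^A_k\ge n_0/2$ and $\psi^B_{k-1}<n_0/2$, and Lemma~\ref{lemma:N} then yields $t(v(U_A))=\cT$ while $t(v(U_B))\ne\cT$ at the moment of Bob's receipt. By $k\notin\cS$, Bob's uncorrupted decoding of $\ECC(z,?)$ returns a vertex with transcript exactly $\cT$, which is complete and consistent with $y$; Bob therefore enters Case~\ref{sub22} with $p=0.5$ and makes a good update in direction $\op_{\cT}(t(v(U_B)))$, so $\rho^B_k\ge 0.5$ and hence $\val^A_k+\rho^B_k\ge 0.5$. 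The main obstacle throughout is careful bookkeeping of which party's tree-code index the $\cS$ guarantee applies to at each stage, and tracking which protocol subcase is active in the corrupted versus uncorrupted executions; once this is set up, each item reduces to a short calculation using Lemma~\ref{lemma:ecc} and the definition of $\otimes$.
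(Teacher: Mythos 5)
Your overall strategy matches the paper's: case analysis over the protocol's three top-level cases, using $k\notin\cS$ to upgrade ``correctly received token'' to ``correctly decoded vertex,'' and the probability formulas $p = 1-3d_m$, $0.5-3d_m$, $1-6d_m$ together with the $\frac12$ and $\frac23$ distance guarantees from Lemma~\ref{lemma:ecc}. Items 4 and 5 are handled essentially as the paper does. However, there are two genuine gaps.

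First, in your treatment of item 1, the Case~\ref{case1} branch only considers the scenario where Bob sent an answer $\delta\in\{0,1,\rewind\}$ and Alice responds with $\otimes_x\delta$. You omit the scenario where Bob sent a \emph{question} $\ECC(z_B,?)$ and Alice nevertheless lands in Case~\ref{case1} because $z_B = z_A$; in that branch Alice sets $\hat\delta=1$ and the argument is entirely different: the $\cS$ guarantee (condition (ii)) gives $v(U_A[1:k]) = v(U_B[1:k])$, so both parties are at $\emptyset$ or $\cT$, and only then is the $\hat\delta=1$ update good. This is exactly the subcase the paper addresses first, and without it the claim $\rho^A_k\ge 0$ is unproved.

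Second, your derivation of item 2 ``by combining the correct-interpretation shift with the incorrect-interpretation bound'' does not go through as stated. (a) You never address the case where Alice enters none of Case~\ref{case1} or Case~\ref{case2} Subcase~\ref{sub22} --- e.g.\ she falls into Case~\ref{case3}, or into Subcases~\ref{sub21} or~\ref{sub23}. In those cases $\Lambda^A_k=0$ while $\rho^A_k$ can be as large as $1$, and the inequality $\bbE[\Lambda^A_k]-\rho^A_k\ge -3\alpha_k-3\epsilon$ only holds because failing to enter the matching case forces $\alpha_k\ge\frac13$ (if Bob sent $\ECC(z_A,\delta)$) or $\alpha_k\ge\frac16-\epsilon$ (if Bob sent $\ECC(z^*\ne z_A,?)$, in which case also $\rho^A_k\le 0.5$). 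The paper has this as an explicit Case 1. (b) Even restricting to the incorrect-interpretation cases, item 3's conclusion $\bbE[\Lambda^A_k]\ge 0.5-3\alpha_k-3\epsilon$ is too weak to give item 2 when $\rho^A_k=1$: you need the sharper per-subcase bound (using $d_m\ge\frac23-\alpha_k$ when $z=z^*$) to obtain $\bbE[\Lambda^A_k]\ge 1-3\alpha_k$. In the paper, item 2 is therefore proved directly by subcase, not as a corollary of item 3; you should do the same.

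A smaller point: your Case~\ref{case1} argument for item 1 invokes ``the $\cS$ guarantee from Bob's decoding of Alice's prior question,'' but the lemma statement only assumes $k\notin\cS$, not $k-1\notin\cS$. The paper avoids this by using only index-$k$ information: since Bob set $P_B[k]=\PikC(v^*,\bullet\bullet)$, the structure of $\PikCDec$ forces $\PikCDec(P_B[1:k])\in\{v^*\oplus\bullet\oplus\bullet,\perp\}$, and condition (i) of Definition~\ref{def:S} at index $k$ then identifies this with $v(U_A[1:k])$. You should reformulate along these lines rather than appealing to an $\cS$-guarantee at index $k-1$.
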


\begin{proof}
    We prove the statements individually.
    \begin{enumerate}
        \item We assume Alice interprets Bob's message in the $k$'th round correctly. Let Bob's intended message be $\ECC(z,\delta)$. If $\delta=?$, then $z=\PikC(U_B)[k]$. We have $z=\PikC(U_B)[k]=P_A[k]$, so by Lemma~\ref{lem:S}, $v(U_B[1:k])=\PikCDec(P_A[1:k])$. Then, if Alice enters Case~\ref{case1}, $\PikCDec(P_A[1:k])=v(U_A[1:k])$ as well, so $v(U_A[1:k])=v(U_B[1:k])$. Since they are the same, they must be either $\emptyset$ or $\cT$. In either case, $\hat{\delta}=1$ results in a positive update. If Alice enters Case~\ref{case2}, then in order to have made an update, she must enter Case~\ref{case2} Subcase~\ref{sub22}, which she only enters if $v(U_B[1:k])$ is complete and consistent with her input, and therefore $=\cT$, resulting in a positive update. 
        
        If $\delta\in\{0,1,\rewind\}$, Bob sent $\ECC(P_B[k],\delta)$. The only way that Alice can make an update is by entering Case~\ref{case1}. This requires $P_B[k]=C(U_A[1:k])[k]\implies \PikCDec(P_B[1:k])=v(U_A[1:k])$.
        Note also that Bob must have decoded $\PikCDec(P_B[1:k-1])$ to $v^*$ and set $P_B[k] \gets \PikC(v^*, \bullet\bullet)$. Then, $\PikCDec(P_B[1:k]) \in \{ v^* \oplus \bullet \oplus \bullet, \perp \}$. Since $\PikCDec(P_B[1:k]) = v(U_A[1:k]) \not= \perp$, it holds that $\PikCDec(P_B[1:k]) = v^* \oplus \bullet \oplus \bullet \implies v(U_A[1:k]) = v^* \oplus \bullet \oplus \bullet)$. This means that Bob sends an instruction which causes Alice to make a positive update.
        
        To show $\Lambda^A_k\geq 0$, Alice either makes the update corresponding to the case she is in, or no update at all. In order for $\rho^A_k\geq 0$, this one possible update she could make must be a good update, so $\Lambda^A_k\geq 0$ as well.
        
        \item Clearly, if $k$ is odd, then $\Lambda^A_k - \rho^A_k = 0 \ge -3\alpha_k - 3\epsilon$. We focus on when $k$ is even. Let Bob's intended message be $\ECC(z\in\Sigma^2,\delta\in\{0,1,\rewind,\bullet\})$ 
        
        %Let Alice interpret the $k$'th message $m$ as $\ECC(z',\delta')$, with $d_m$ as the relative distance between $m$ and $\ECC(z',\delta')$. 
        % Note that if Bob's message was of the form $\ECC(z_A,\delta)$, then $\rho^A_k=1$. This is because 
        
        We split the proof into cases.
        
        \begin{caseof}
        \begin{mdframed}[topline=false, rightline=false, bottomline=false]
            \case{Alice does not enter Case~\ref{case1} or Case~\ref{case2} Subcase~\ref{sub22}.}{
                Alice does not update, so $\Lambda^A_k=0$. If Bob's message was of the form $\ECC(z_A,\delta)$, then $\rho^A_k \leq 1$ and $\alpha_k\geq \frac13$ (otherwise Alice should have entered Case~\ref{case1}). This gives
                \begin{align*}
                    &~ \bbE[\Lambda^A_k]-\rho^A_k \\
                    \geq&~ 0-1 \\
                    \geq&~ -3\alpha_k-3\epsilon.
                \end{align*}
                Otherwise if Bob's message was of the form $\ECC(z^*\neq z_A\in\Sigma^2,?)$, then $\alpha_k\geq \frac16-\epsilon$. He must enter Case~\ref{case2} or Case~\ref{case3}, so his expected update is at most $0.5$. Then, 
                \begin{align*}
                    &~\bbE[\Lambda^A_k]-\rho^A_k \\
                    \geq&~ 0-0.5 \\
                    \geq&~ -3\alpha_k-3\epsilon.
                \end{align*}
            }
            \case{Alice interprets message $k$ correctly and she enters Case~\ref{case1} or Case~\ref{case2}.}{
                We have $d_m\leq\alpha_k$. 
                We only need to look at the case where her possible update is positive; if it is $0$, the result follows from the calculation above and cannot be negative. If she enters Case~\ref{case1}, her probability of updating is $1-3d_m\geq 1-3\alpha_k$, so
                \begin{align*}
                    &~ \bbE[\Lambda^A_k]-\rho^A_k \\
                    \geq&~ (1-3\alpha_k) -1 \\
                    \geq&~ -3\alpha_k-3\epsilon.
                \end{align*}
                If she enters Case~\ref{case2} Subcase~\ref{sub22}, her probability of updating is $0.5-3d_m\geq 0.5-3\alpha_k$, so 
                \begin{align*}
                    &~ \bbE[\Lambda^A_k]-\rho^A_k \\
                    \geq&~ (0.5-3\alpha_k) -0.5 \\
                    \geq&~ -3\alpha_k-3\epsilon.
                \end{align*}
            }
            \case{Alice interprets message $k$ incorrectly as $\ECC(z^*,\delta^*)$ and enters Case~\ref{case1} or Case~\ref{case2} Subcase~\ref{sub22}.}{
                If she enters Case~\ref{case1} and $z=z^*$, then $d_m\geq \frac23-\alpha_k$ so her probability of updating is $1-3d_m\leq -1+3\alpha_k$, so
                \begin{align*}
                    &~ \bbE[\Lambda^A_k]-\rho^A_k \\
                    \geq&~ -1(-1+3\alpha_k) -1 \\
                    \geq&~ -3\alpha_k-3\epsilon.
                \end{align*}
                If she enters Case~\ref{case1} and $z\neq z^*$, then $d_m\geq \frac12-\epsilon-\alpha_k$, so her probability of updating is $1-3d_m\leq -0.5+3\alpha_k+3\epsilon$. Also, $\rho^A_k\leq 0.5$. This gives
                \begin{align*}
                    &~ \bbE[\Lambda^A_k]-\rho^A_k \\
                    \geq&~ -1(-0.5+3\alpha_k+3\epsilon) -0.5 \\
                    \geq&~ -3\alpha_k-3\epsilon.
                \end{align*}
                If she enters Case~\ref{case2} Subcase~\ref{sub22}, then $d_m\geq \frac12-\epsilon-\alpha_k$, so her probability of updating is $0.5-3d_m\leq -1+3\alpha_k+3\epsilon$.  This gives
                \begin{align*}
                    &~ \bbE[\Lambda^A_k]-\rho^A_k \\
                    \geq&~ -1(-1+3\alpha_k+3\epsilon) -1 \\
                    \geq&~ -3\alpha_k-3\epsilon.
                \end{align*}
            }
        \end{mdframed}
        \end{caseof}
        \item We prove this for Alice as the proof for Bob is symmetric. If $\rho^A_k\geq 0.5$, then the result follows from the previous item. Otherwise, $\rho^A_k=0$. Alice interprets message $k$ as $(z^*,\delta^*)$ and Bob's intended message was $(z,\delta)$, where $(z^*,\delta^*)\neq (z,\delta)$.
        
        If she enters Case~\ref{case1} and $z=z^*$, then $d_m\geq \frac23-\alpha_k$ so her probability of updating is $1-3d_m\leq -1+3\alpha_k$, so
        \begin{align*}
            &~ \bbE[\Lambda^A_k] \\
            \geq&~ -1(-1+3\alpha_k) \\
            \geq&~ 1-3\alpha_k-3\epsilon.
        \end{align*}
        If she enters Case~\ref{case1} and $z\neq z^*$, then $d_m\geq \frac12-\epsilon-\alpha_k$, so her probability of updating is $1-3d_m\leq -0.5+3\alpha_k+3\epsilon$. Also, $\rho^A_k\leq 0.5$. This gives
        \begin{align*}
            &~ \bbE[\Lambda^A_k] \\
            \geq&~ -1(-0.5+3\alpha_k+3\epsilon) \\
            \geq&~ 0.5-3\alpha_k-3\epsilon.
        \end{align*}
        If she enters Case~\ref{case2} Subcase~\ref{sub22}, then $d_m\geq \frac12-\epsilon-\alpha_k$, so her probability of updating is $0.5-3d_m\leq -1+3\alpha_k+3\epsilon$.  This gives
        \begin{align*}
            &~ \bbE[\Lambda^A_k] \\
            \geq&~ -1(-1+3\alpha_k+3\epsilon) \\
            \geq&~ 1-3\alpha_k-3\epsilon.
        \end{align*}
                
        \item Alice sends the odd messages, so we are in the case where $k$ is odd. If $\psi^A_t< n_0/2$ or $\psi^B_{t-1}\geq n_0/2$, then the result follows because $\val^A_k = 0.5$ and $\rho^B_k\geq 0$. Otherwise $\psi^A_k=\psi^A_{k-1}\geq n_0/2$. Thus, Alice's message is $\ECC(\PikC(U_A)[k],?)$ where $t(v(U_A)) = \cT$. If Bob receives this message uncorrupted, then $\PikC(U_A)[k]=P_B[k]$, so by Definition~\ref{def:S}, $v(U_A[1:k])=\PikCDec(P_B[1:k])$. If he enters Case~\ref{case1}, then $\PikC(U_A)[k] = \PikC(U_B)[k] \implies \cT = t(v(U_A[1:k])) = t(v(U_B[1:k]))$ so it must be the case that he makes a good update. If he enters Case~\ref{case2}, he decodes $v^*$ such that $t(v^*)=\cT$, and so also makes a good update with at least $0.5$ probability.
        \item The proof is very similar. Bob sends the odd messages, so we are in the case where $k$ is even. If $\psi^B_t< n_0/2$, then the result follows because $\psi^A_k = 0.5$ and $\rho^A_k\geq 0$. Otherwise $\psi^B_k=\psi^B_{k-1}\geq n_0/2$. Thus, Bob's message is $\ECC(\PikC(U_B)[k],?)$ where $t(v(U_B)) = \cT$. If Alice receives this message uncorrupted, then $\PikC(U_B)[k]=P_A[k]$, so by Definition~\ref{def:S}, $v(U_B[1:k])=\PikCDec(P_A[1:k])$. If she enters Case~\ref{case1}, she makes a good update, and if she enters Case~\ref{case2}, she decodes $v^*$ such that $t(v^*)=\cT$, and so also makes a good update with at least $0.5$ probability.
    \end{enumerate}
\end{proof}

\subsubsection{Calculating the Change in Potential}

The main objective is to prove the following lemma.

\begin{lemma}\label{lem:potential-change}
For any $k\in [K]$ such that $k-1,k,k+1\notin \cS$, if an $\alpha_{k}$ fraction of message $k$ is corrupted, then
\[
    \bbE [\Psi^A_k - \Psi^A_{k-1}] \ge 0.5 - 3\epsilon - 3\alpha_{k}.
\]
\end{lemma}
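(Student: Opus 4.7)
The plan is to expand $\Psi^A_k - \Psi^A_{k-1}$ using the definition of $\Psi^A_t$ and case-split on the parity of $k$ (which determines whether Alice or Bob sent message $k$). Since $\rho^A_t = 0$ for odd $t$ and $\rho^B_t = 0$ for even $t$, many cross terms vanish. Focusing on $k$ even (the case $k$ odd is symmetric with the roles swapped), the expansion simplifies to
\[
\Psi^A_k - \Psi^A_{k-1} = (\Lambda^A_k - \rho^A_k) + \bigl[\min(\psi^B_{k-1} + \rho^B_{k+1}, \tfrac{n_0}{2}) - \min(\psi^B_{k-1}, \tfrac{n_0}{2})\bigr] + (\val^A_{k+1} - \val^A_k),
\]
while the analogous expansion for $k$ odd replaces $(\Lambda^A_k - \rho^A_k)$ with $\rho^A_{k+1}$ and puts the $\Lambda^B_k$-versus-$\rho^B_k$ comparison inside the $\min$-difference.

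I would then bound each piece. The first is handled directly by Lemma~\ref{lem:basic-bounds}.2, giving $\bbE[\Lambda^A_k - \rho^A_k] \ge -3\alpha_k - 3\epsilon$. For the middle piece, $\min(\cdot, n_0/2)$ is non-decreasing and $\rho^B_{k+1} \ge 0$ by Lemma~\ref{lem:basic-bounds}.1 (using $k+1 \notin \cS$); moreover, in the unsaturated regime $\psi^B_{k-1} + \rho^B_{k+1} \le n_0/2$, the middle piece equals $\rho^B_{k+1}$ exactly. So the middle piece contributes $\ge \rho^B_{k+1}$ below the cap and $\ge 0$ always. In the $k$-odd version, the middle piece is analyzed via the 1-Lipschitz, monotone envelope inequality $\min(a+X, c) - \min(a+r, c) \ge \min(X - r, 0)$ and then combined with Lemma~\ref{lem:basic-bounds}.2 for $\Lambda^B_k$.

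The ``$+0.5$'' gain is produced via a sub-case split on whether messages $k$ and $k+1$ are questions ($\ECC(\cdot, ?)$) or answers. The key structural tool is Lemmas~\ref{lem:basic-bounds}.4 and~\ref{lem:basic-bounds}.5: whenever a party sends a question in round $t$, $\val^A_t + \rho^{\text{receiver}}_t \ge 0.5$. Applied to Alice sending a question in round $k+1$, this yields $\val^A_{k+1} + \rho^B_{k+1} \ge 0.5$, which combines with the middle piece to give a net $+0.5$ in the unsaturated regime, absorbing $-\val^A_k$ to the extent it is nonzero. The delicate sub-case is when $\val^A_k = 0.5$: here Bob must have sent a question in round $k$ with $\psi^B_{k-1} < n_0/2$, so Lemma~\ref{lem:basic-bounds}.5 gives $\rho^A_k \ge 0$, and the very constraint $\psi^B_{k-1} < n_0/2$ also keeps the middle piece unsaturated. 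Combining Lemmas~\ref{lem:basic-bounds}.4 and~\ref{lem:basic-bounds}.5 on rounds $k$ and $k+1$ then yields the needed additional $0.5$ via $\rho^A_k$.

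The main obstacle will be the saturation regime of the $\min$ operator together with the sub-case where the responder sends an answer (rather than a question) in round $k+1$. Saturation requires $\psi^B_{k-1}$ to be within $1$ of the cap, but in this regime the definition of $\val^A$ forces $\val^A_k = 0$ whenever Bob sends a question (since then $\psi^B_{k-1} \ge n_0/2$), preventing simultaneous $\val$-loss and saturation-loss. The answering sub-case only arises in Case~\ref{case2} Subcase~\ref{sub23} of Protocol~\ref{prot:1/6}, occurs with probability $p$, and is paired with the complementary $(1-p)$ question branch: Lemma~\ref{lem:basic-bounds}.3 furnishes a large positive expected $\Lambda$ whenever the message is interpreted incorrectly, which compensates for the lost question bonus. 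Assembling these bounds across all parity and message-type configurations yields $\bbE[\Psi^A_k - \Psi^A_{k-1}] \ge 0.5 - 3\epsilon - 3\alpha_k$ in every case.
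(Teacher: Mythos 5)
Your high-level decomposition of $\Psi^A_k - \Psi^A_{k-1}$, your parity split, and your use of Lemma~\ref{lem:basic-bounds} to control the individual terms all match the structure of the paper's proof. The ``envelope inequality'' for the $\min$-difference is correct, and the observation that the saturated regime of the $\min$ forces $\val^A_k = 0$ is a genuine and correct simplification (it is exactly why the paper's Cases~2 and~4, with $\psi^B_k \ge n_0/2$ or $\psi^B_{k-1} \ge n_0/2$, only require $+0.5$). Your treatment of the partially-corrupted/answering branch via Lemma~\ref{lem:basic-bounds}.3 is also on the right track.

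However, there is a genuine gap in where you claim the extra $+0.5$ comes from when $\val^A_k = 0.5$. You write that ``Combining Lemmas~\ref{lem:basic-bounds}.4 and~\ref{lem:basic-bounds}.5 on rounds $k$ and $k+1$ then yields the needed additional $0.5$ via $\rho^A_k$.'' This cannot be right: when $\val^A_k = 0.5$, Lemma~\ref{lem:basic-bounds}.5 only gives $\rho^A_k \ge 0$, and in the key subcase (Bob sends an uncorrupted question, Alice interprets correctly) one can show $\rho^A_k = 0$ exactly --- Alice enters Case~\ref{case2} Subcase~\ref{sub23} and makes a neutral update, since $v(U_B[1:k]) \neq v(U_A[1:k])$ when $\psi^B_k < n_0/2$. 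Meanwhile Lemma~\ref{lem:basic-bounds}.4 at round $k+1$ only gives $\val^A_{k+1} + \rho^B_{k+1} \ge 0.5$, and combining these two $0.5$'s just cancels $-\val^A_k = -0.5$, leaving you exactly $0.5$ short of the required $1 - 3\alpha_k - 3\epsilon$. The actual mechanism, which your sketch does not contain, is a \emph{strict improvement} over Lemma~\ref{lem:basic-bounds}.4: when Alice correctly decodes Bob's question in Case~\ref{case2} Subcase~\ref{sub23}, she answers with $\ECC(\PikC(U_B)[k+1], \delta)$ with probability $p \ge 1 - 6\alpha_k$, and this answer forces $\rho^B_{k+1} = 1$ --- a full unit, not just $\ge 0.5$ --- because an uncorrupted answer to a correctly-echoed question always causes the questioner to make a good update. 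The bookkeeping then yields $0\cdot 1 + (1-6\alpha_k)\cdot 1 + 6\alpha_k \cdot 0.5 = 1 - 3\alpha_k$, which is the needed bound. Without this observation (that answering a correctly-decoded question is worth a full $+1$, not merely the $+0.5$ guaranteed generically by Lemma~\ref{lem:basic-bounds}.4/5), the lemma does not follow. The same observation is also needed in the odd-$k$ analogue, where $\rho^A_{k+1} = 1$ when Bob correctly decodes Alice's question and answers it.
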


\begin{proof}

% See that 
% \begin{align*}
%     &\bbE [\Psi^A_k - \Psi^A_{k-1}] \\
%     =&~ \bbE[\psi^A_k+\rho^A_{k+1}+\min(\psi^B_k+\rho^B_{k+1}, n_0/2) + \val^A_{k+1}-\psi^A_{k-1}-\rho^A_{k}-\min(\psi^B_{k-1}+\rho^B_k, n_0/2) - \val^A_k] \\
%     =&~ \bbE[\Lambda^A_k+\rho^A_{k+1}+\min(\psi^B_k+\rho^B_{k+1}, n_0/2) + \val^A_{k+1}-\rho^A_{k}-\min(\psi^B_{k-1}+\rho^B_k, n_0/2) - \val^A_k]
% \end{align*}

We split the proof into four parts depending on the parity of $k$ and on the value of $\psi^B_k$ or $\psi^B_{k-1}$.

% \paragraph{$k$ is even.} 

% \begin{align*}
%     &\bbE [\Psi^A_k - \Psi^A_{k-1}] \\
%     =&~ \bbE[\Lambda^A_k+\rho^A_{k+1}+\min(\psi^B_k+\rho^B_{k+1}, n_0/2) + \val^A_{k+1}-\rho^A_{k}-\min(\psi^B_{k-1}+\rho^B_k, n_0/2) - \val^A_k] \\
%     =&~ \bbE[\Lambda^A_k-\rho^A_{k} + \val^A_{k+1} - \val^A_k+\min(\psi^B_k+\rho^B_{k+1}, n_0/2)-\min(\psi^B_k, n_0/2)]
% \end{align*}

\paragraph{$k$ is even and $\psi^B_k < n_0/2$.}

% We begin by assuming $\psi^B_k<n_0/2$. 
Then 
\begin{align*}
    &~ \bbE [\Psi^A_k - \Psi^A_{k-1}] \\
    =&~ \bbE[\psi^A_k+\rho^A_{k+1}+\min(\psi^B_k+\rho^B_{k+1}, n_0/2) + \val^A_{k+1}-\psi^A_{k-1}-\rho^A_{k}-\min(\psi^B_{k-1}+\rho^B_k, n_0/2) - \val^A_k] \\
    =&~ \bbE[\Lambda^A_k-\rho^A_{k} + \val^A_{k+1} - \val^A_k+\min(\psi^B_k+\rho^B_{k+1}, n_0/2)-\min(\psi^B_k, n_0/2)] \\
    =&~ \bbE[\Lambda^A_k-\rho^A_{k} + \val^A_{k+1} - \val^A_k+\rho^B_{k+1}]. 
\end{align*}
\begin{caseof}
    \begin{mdframed}[topline=false,rightline=false,bottomline=false]
    \case{Message $k$ is of the form $\ECC(z\in \Sigma^2,?)$.}{
        Notice that $z=\PikC(U_B)[k]$. 
        
        It holds that $\val^A_k= 0.5$. If the message is uncorrupted, Alice must enter Case~\ref{case2} Subcase~\ref{sub23} because $\PikCDec(P_A[1:k])=v(U_B[1:k])\neq v(U_A[1:k])$ by Definition~\ref{def:S}. Alice only enters Case~\ref{case1} when $\PikCDec(P_A[1:k])=v(U_A[1:k])$. Thus, $\rho^A_k=0$ because Alice makes a neutral update.
        Thus, we need to show
        \[
            \bbE[\Lambda^A_k + \val^A_{k+1} +\rho^B_{k+1}]\geq 1-3\alpha_k-3\epsilon.
        \]
        \begin{subcaseof}
            \subcase{Alice interprets message $k$ correctly.}{
             Then we must be in Case~\ref{case2} Subcase~\ref{sub23} as shown earlier. Also, $\Lambda^A_k=0$. With probability at least $1-6\alpha_k$, Alice sends a message of the form $\ECC(\PikC(U_B)[k+1],\delta)$ upon computing $\PikCDec(P_A[1:k])=v(U_B[1:k])$. This results in $\rho^B_{k+1}=1$. Otherwise (with probability at most $6\alpha_k$), she sends $\ECC(\PikC(U_A)[k+1],?)$; then by Lemma~\ref{lem:basic-bounds} $\val^A_{k+1}+\rho^A_{k+1}\geq 0.5$. Overall, this evaluates to 
            \begin{align*}
                &~ \bbE[\Lambda^A_k + \val^A_{k+1} +\rho^B_{k+1}] \\
                =&~ 0+(1-6\alpha_k)(1)+6\alpha_k(0.5) \\
                =&~ 1-6\alpha_k+3\alpha \\
                \geq&~ 1-3\alpha_k-3\epsilon.
            \end{align*}
            }
            \subcase{Alice enters Case~\ref{case3}.}{
                $\Lambda^A_k=0$ and $\val^A_{k+1}+\rho^B_{k+1}\geq 0.5$ by Lemma~\ref{lem:basic-bounds}. Also, $\alpha_k\geq \frac16-\epsilon$. This gives
            \begin{align*}
                &~\bbE[\Lambda^A_k + \val^A_{k+1} +\rho^B_{k+1}] \\
                =&~ 0+0.5 \\
                \geq&~ 1-3\alpha_k-3\epsilon.
            \end{align*}
            }
            \subcase{Alice interprets message $k$ incorrectly as $\ECC(z_A,\delta\in\{0,1,\rewind,\delta\})$.} {
                We have $\bbE[\Lambda^A_k]\geq 0.5-3\alpha_k-3\epsilon$ by Lemma~\ref{lem:basic-bounds} regardless of whether $z_A=z$. Alice sends a message of the form $\ECC(z\in\Sigma^2,?)$ so $\val^A_{k+1}+\rho^B_{k+1}\geq 0.5$ by Lemma~\ref{lem:basic-bounds}. This gives
            \begin{align*}
                &\bbE[\Lambda^A_k + \val^A_{k+1} +\rho^B_{k+1}] \\
                =&~ 0.5-3\alpha_k-3\epsilon+0.5 \\
                \geq&~ 1-3\alpha_k-3\epsilon.
            \end{align*}
            }
            \subcase{Alice interprets message $k$ incorrectly as $(z^*,\delta)$ where $z^*\neq z_A$.}{
                Let $d_m$ be the relative distance from the received message to $\ECC(z^*,\delta)$. Notice that Alice updates with probability $0.5-3d_m\leq 0.5-3(0.5-\epsilon-\alpha_k)=-1+3\alpha_k+3\epsilon$ probability, so
                \begin{align*}
                    &~\bbE[\Lambda^A_k + \val^A_{k+1} +\rho^B_{k+1}] \\
                    \geq&~ \Lambda^A_k \\
                    \geq&~ -1(-1 + 3\alpha + 3\epsilon) \\
                    \geq&~ 1-3\alpha_k-3\epsilon.
                \end{align*}
            }
        \end{subcaseof}
    }
    \case{Message $k$ is of the form $\ECC(z,\delta)$ for some $\delta\in \{0,1,\rewind\}$.}{
        We have $\val^A_k=0$ because $\delta\neq ?$. Thus, we need to show
        \[
            \bbE[\Lambda^A_k -\rho^A_k + \val^A_{k+1} +\rho^B_{k+1}]\geq 0.5-3\alpha_k-3\epsilon.
        \]
        \begin{subcaseof}
            \subcase{Alice enters any case except Case~\ref{case2} Subcase~\ref{sub23}.}{
                We have $\bbE[\Lambda^A_k]-\rho^A_k\geq -3\alpha_k-3\epsilon$ by Lemma~\ref{lem:basic-bounds} and $\val^A_{k+1}+\rho^B_{k+1}\geq 0.5$ by Lemma~\ref{lem:basic-bounds}. This gives
                \begin{align*}
                    &~ \bbE[\Lambda^A_k -\rho^A_k + \val^A_{k+1} +\rho^B_{k+1}] \\
                    \geq&~ -3\alpha_k-3\epsilon + 0.5 \\
                    =&~ 0.5-3\alpha_k-3\epsilon.
                \end{align*}
            }
            \subcase{Alice enters Case~\ref{case2} Subcase~\ref{sub23}.}{
                $\Lambda^A_k=0$ because Alice does not update. Also $\rho^A_k\leq 1$. Alice must have interpreted incorrectly since the received message has $\delta = ?$, so with probability of at least $1-p\geq 6(0.5-\epsilon-\alpha_k)$, Alice sends a message of the form $\ECC(z\in\Sigma^2,?)$, where $\val^A_{k+1}+\rho^B_{k+1}\geq 0.5$. This gives
                \begin{align*}
                    &~ \bbE[\Lambda^A_k -\rho^A_k + \val^A_{k+1} +\rho^B_{k+1}] \\
                    \geq&~ 0 - 1 +6(0.5-\epsilon-\alpha_k)\cdot 0.5 + =\\
                    \geq&~ 0.5-3\alpha_k-3\epsilon.
                \end{align*}
            }
        \end{subcaseof}
    }
    \end{mdframed}
\end{caseof}

\paragraph{$k$ is even and $\psi^B_k \geq n_0/2$.}

% Next, we assume $\psi^B_k\geq n_0/2$. 
Then
\begin{align*}
    &~\bbE [\Psi^A_k - \Psi^A_{k-1}] \\
    =&~ \bbE[\psi^A_k+\rho^A_{k+1}+\min(\psi^B_k+\rho^B_{k+1}, n_0/2) + \val^A_{k+1}-\psi^A_{k-1}-\rho^A_{k}-\min(\psi^B_{k-1}+\rho^B_k, n_0/2) - \val^A_k] \\
    =&~ \bbE[\Lambda^A_k-\rho^A_{k} + \val^A_{k+1} - \val^A_k+\min(\psi^B_k+\rho^B_{k+1}, n_0/2)-\min(\psi^B_k, n_0/2)] \\
    =&~ \bbE[\Lambda^A_k-\rho^A_{k} + \val^A_{k+1} - \val^A_k].
\end{align*}

    We have that $\val^A_k=0$ because either the message is $\ECC(z\in\Sigma^2,?)$ with $\psi^B_k\geq n_0/2$, or $\ECC(z\in\Sigma^2,\delta\in\{0,1,\rewind\})$.
    Thus, we need to show
    \[
        \bbE[\Lambda^A_k-\rho^A_{k} + \val^A_{k+1}]\geq 0.5-3\alpha_k-3\epsilon.
    \]
\begin{caseof}
    \begin{mdframed}[topline=false,rightline=false,bottomline=false]
    \case{Alice does not enter Case~\ref{case2} Subcase~\ref{sub23}.}{
         We know $\Lambda^A_k-\rho^A_{k}\geq -3\alpha_k-3\epsilon$ by Lemma~\ref{lem:basic-bounds} and $\val^A_{k+1}=0.5$ because message $k+1$ is of the form $\ECC(z\in\Sigma^2,?)$. Then
         \begin{align*}
            &~ \bbE[\Lambda^A_k-\rho^A_{k} + \val^A_{k+1}] \\
            \geq&~ -3\alpha_k-3\epsilon+0.5\\
            \geq&~ 0.5-3\alpha_k-3\epsilon.
        \end{align*}
    }
    \case{Alice interprets message $k$ enters correctly and enters Case~\ref{case2} Subcase~\ref{sub23}.}{
        Bob must have sent $\ECC(\PikC(U_B)[k],?)$. It holds that $P_A[k]=\PikC(U_B)[k]$ so by Definition~\ref{def:S}, unless $k\in \cS$, $\PikCDec(P_A)=v(U_B)$. However, since $\psi^B_k\geq n_0/2$ she must have actually entered Case~\ref{case2} Subcase~\ref{sub22}, which is a contradiction.
    }
    \case{Alice interprets message $k$ incorrectly and enters Case~\ref{case2} Subcase~\ref{sub23}.}{
        $\Lambda^A_k=0$ because Alice does not update. Also, $\rho^A_k\leq 1$. With probability at least $1-p\geq 6(0.5-\alpha_k)$, Alice sends a message of the form $\ECC(z\in\Sigma^2,?)$, so $\val^A_{k+1}+\rho^B_{k+1}\geq 0.5$. This gives
        \begin{align*}
            & \bbE[\Lambda^A_k -\rho^A_k + \val^A_{k+1}] \\
            \geq&~ 0 - 1 +6(0.5-\alpha_k)\cdot 0.5\\
            \geq&~ 0.5-3\alpha_k-3\epsilon.
        \end{align*}
    }
    \end{mdframed}
\end{caseof}

% \paragraph{$k$ is odd.}

% \begin{align*}
%     &\bbE [\Psi^A_k - \Psi^A_{k-1}] \\
%     =&~ \bbE[\Lambda^A_k+\rho^A_{k+1}+\min(\psi^B_k+\rho^B_{k+1}, n_0/2) + \val^A_{k+1}-\rho^A_{k}-\min(\psi^B_{k-1}+\rho^B_k, n_0/2) - \val^A_k] \\
%     =&~ \bbE[\rho^A_{k+1} + \val^A_{k+1} - \val^A_k+\min(\psi^B_k, n_0/2)-\min(\psi^B_{k-1}+\rho^B_k, n_0/2)]
% \end{align*}

\paragraph{$k$ is odd and $\psi^B_{k-1} < n_0/2$.}

% We begin by assuming $\psi^B_{k-1}<n_0/2$. 
Then the expression simplifies to
\begin{align*}
    &~ \bbE [\Psi^A_k - \Psi^A_{k-1}] \\
    =&~ \bbE[\psi^A_k+\rho^A_{k+1}+\min(\psi^B_k+\rho^B_{k+1}, n_0/2) + \val^A_{k+1}-\psi^A_{k-1}-\rho^A_{k}-\min(\psi^B_{k-1}+\rho^B_k, n_0/2) - \val^A_k] \\
    =&~ \bbE[\rho^A_{k+1} + \val^A_{k+1} - \val^A_k+\min(\psi^B_k, n_0/2)-\min(\psi^B_{k-1}+\rho^B_k, n_0/2)] \\
    =&~ \bbE[\rho^A_{k+1} + \val^A_{k+1} - \val^A_k+\psi^B_k-\psi^B_{k-1}-\rho^B_k] \\
    =&~ \bbE[\rho^A_{k+1}+\Lambda^B_k-\rho^B_k + \val^A_{k+1} - \val^A_k].
\end{align*}

\begin{caseof}
\begin{mdframed}[topline=false,rightline=false,bottomline=false]
\case{$\psi^A_k\geq n_0/2$ or message $k$ is of the form $\ECC(z\in\Sigma^2,\delta\in\{0,1,\rewind\})$.}{
We know that $\val^A_k=0$. Thus, we want to show
\begin{align*}
    \bbE[\Lambda^B_k-\rho^B_k +\rho^A_{k+1}+ \val^A_{k+1}] \geq 0.5-3\alpha_k-3\epsilon.
\end{align*}
    \begin{subcaseof} 
        \subcase{Bob does not enter Case~\ref{case2} Subcase~\ref{sub23}.}{
            Bob's next message is of the form $\ECC(z\in\Sigma^2,?)$ so $\rho^A_{k+1}+\val^A_{k+1}\geq 0.5$ by Lemma~\ref{lem:basic-bounds}. By the same lemma, $\bbE[\Lambda^B_k]-\rho^B_k\geq -3\alpha_k-3\epsilon$. This gives
            \begin{align*}
                &~ \bbE[\Lambda^B_k-\rho^B_k +\rho^A_{k+1}+ \val^A_{k+1}] \\
                \geq&~ 0.5-3\alpha_k-3\epsilon.
            \end{align*}
        }
        \subcase{Bob interprets message $k$ correctly and enters Case~\ref{case2} Subcase~\ref{sub23}.}{
            If message $k$ is of the form $\ECC(z\in\Sigma^2,\delta)$ for some $\delta\neq?$, Bob cannot have entered Case~\ref{case2}. Thus, $\psi^A_k\geq n_0/2$ and Alice must have sent $\ECC(\PikC(U_A)[k],?)$, and so $P_B[k]=\PikC(U_A)[k]$. Then by Definition~\ref{def:S}, $\PikCDec(P_B[1:k])=v(U_A[1:k])$, and since $\psi^A_k\geq n_0/2$, it holds that $t(\PikCDec(P_A[1:k]))=t(v(U_A))=\cT$. Then, Bob enters Case~\ref{case2} Subcase~\ref{sub22}, which is a contradiction.
        }
        \subcase{Bob interprets message $k$ incorrectly and enters Case~\ref{case2} Subcase~\ref{sub23}.}{
            $\Lambda^B_k=0$ and Bob sends $\ECC(z\in\Sigma^2,?)$ with probability $1-p\geq6(0.5-\epsilon - \alpha_k)$ resulting in $\val^A_{k+1}+\rho^A_{k+1}\geq 0.5$, so
            \begin{align*}
                &~ \bbE[\Lambda^B_k-\rho^B_k +\rho^A_{k+1}+ \val^A_{k+1}] \\
                \geq&~ 0-1+0.5(3-6\epsilon-6\alpha_k) \\
                =&~  0.5-3\alpha_k-3\epsilon.
            \end{align*}
        }
    \end{subcaseof}
}
\case{Message $k$ is of the form $\ECC(z\in\Sigma^2,?)$ and $\psi^A_k<n_0/2$.}{
    Note that $z=\PikC(U_B)[k]$ and we know that $\val^A_k=0.5$ and $\rho^B_k=0$. Thus, we need to show
    \begin{align*}
        \bbE[\Lambda^B_k +\rho^A_{k+1}+ \val^A_{k+1}] \geq 1-3\alpha_k-3\epsilon.
    \end{align*}
    \begin{subcaseof}
        \subcase{Bob interprets message $k$ correctly.}{
            Bob must enter Case~\ref{case2} Subcase~\ref{sub23}. This is because $v(U_B[1:k])\neq v(U_A[1:k])$, so Bob cannot enter Case~\ref{case1} by Definition~\ref{def:S}. Upon entering Case~\ref{case2}, he correctly decodes $\PikCDec(P_B[1:k])=v(U_A[1:k])$, causing him to enter Case~\ref{case2} Subcase~\ref{sub23}. Then, with $p\geq 1-6\alpha_k$, we have $\rho^A_{k+1}=1$, because Bob sends $\ECC(\PikC(U_A)[k+1],\delta)$, where $\delta$ is such that Alice would make a positive update upon entering Case~\ref{case1} if she interprets the message correctly. Otherwise $\rho^A_{k+1}+\val^A_{k+1}\geq 0.5$. By Lemma~\ref{lem:basic-bounds}, $\Lambda^B_k\geq0$, which gives
            \begin{align*}
                &~ \bbE[\Lambda^B_k +\rho^A_{k+1}+ \val^A_{k+1}] \\
                \geq&~ 1(1-6\alpha_k)+0.5(6\alpha_k)+0 \\
                \geq&~ 1-3\alpha_k-3\epsilon.
            \end{align*}
        }
        \subcase{Bob interprets message $k$ incorrectly and does not enter Case~\ref{case2} Subcase~\ref{sub23}.}{
            Notice $\Lambda^B_k>0.5-3\alpha_k-3\epsilon$ by Lemma~\ref{lem:basic-bounds}, and $\rho^A_{k+1}+\val^A_{k+1}\geq 0.5$ by Lemma~\ref{lem:basic-bounds} since he sends $\ECC(z\in\Sigma^2,?)$ in all cases except Case~\ref{case2} Subcase~\ref{sub23}. This gives
            \begin{align*}
                &~ \bbE[\Lambda^B_k +\rho^A_{k+1}+ \val^A_{k+1}] \\
                \geq&~ 0.5-3\alpha_k-3\epsilon + 0.5 \\
                \geq&~ 1-3\alpha_k-3\epsilon.
            \end{align*}
        }
        \subcase{Bob interprets message $k$ incorrectly and enters Case~\ref{case2} Subcase~\ref{sub23}.}{
            Notice $\Lambda^B_k=0$ and $\alpha_k\geq \frac13$.
            \begin{align*}
                &~ \bbE[\Lambda^B_k +\rho^A_{k+1}+ \val^A_{k+1}] \\
                \geq&~ 0+0+0 \\
                =&~  1-3\alpha_k-3\epsilon.
            \end{align*}
        }
    \end{subcaseof}
}
\end{mdframed}
\end{caseof}

\paragraph{$k$ is odd and $\psi^B_{k-1} \geq n_0/2$.}

% Now we assume $\psi^B_{k-1}\geq n_0/2$. 
Then
\begin{align*}
    &~ \bbE [\Psi^A_k - \Psi^A_{k-1}] \\
    =&~ \bbE[\psi^A_k+\rho^A_{k+1}+\min(\psi^B_k+\rho^B_{k+1}, n_0/2) + \val^A_{k+1}-\psi^A_{k-1}-\rho^A_{k}-\min(\psi^B_{k-1}+\rho^B_k, n_0/2) - \val^A_k] \\
    =&~ \bbE[\rho^A_{k+1} + \val^A_{k+1} - \val^A_k+\min(\psi^B_k, n_0/2)-\min(\psi^B_{k-1}+\rho^B_k, n_0/2)] \\
    \geq&~ \bbE[\rho^A_{k+1} + \val^A_{k+1} - \val^A_k+\min(\Lambda^B_k,0)].
\end{align*}

\begin{caseof}
    \begin{mdframed}[topline=false,rightline=false,bottomline=false]
    \case{Message $k$ is of the form $\ECC(z\in\Sigma^2,?)$.}{
        It holds that $z=\PikC(U_A)[k]$. Moreover, $\val^A_k=0.5$ since $\psi^B_{k-1}\geq n_0/2$, so we want to show
        \begin{align*}
            \bbE[\rho^A_{k+1} + \val^A_{k+1}+\min(\Lambda^B_k,0)] \geq 1-3\alpha_k-3\epsilon.
        \end{align*}
        \begin{subcaseof}
            \subcase{Bob interprets message $k$ correctly.}{
                If Bob entered Case~\ref{case1}, then $\PikC(U_A)[k]=\PikC(U_B)[k]$, which means $v(U_A[1:k])=v(U_B[1:k])$ by Definition~\ref{def:S}. If Bob entered Case~\ref{case2} Subcase~\ref{sub22}, then $v^*=v(U_A[1:k])=v(U_B[1:k])$ In either case, since $t(v(U_B[1:k]))=\cT$, Bob makes a neutral or positive update from his current complete correct transcript, so his next message is always $\ECC(\PikC(v(U_B[1:k]),\bullet\bullet),?)$ which has $\rho^A_{k+1}=1$. Also, $\Lambda^B_k\geq 0$ by Lemma~\ref{lem:basic-bounds}, so
                \begin{align*}
                    &~ \bbE[\rho^A_{k+1} + \val^A_{k+1}+\min(\Lambda^B_k,0)] \\
                    \geq&~ 1+0+0 \\
                    \geq&~  1-3\alpha_k-3\epsilon.
                \end{align*}
                
                If he entered Case~\ref{case2} Subcase~\ref{sub23}, he correctly decodes $v^*=v(U_A[1:k])$, and sends $\ECC(\PikC(U_A)[k],\delta\in\{0,1,\rewind,?\})$ with $\rho^A_{k+1}=1$ with probability at least $1-6\alpha_k$ and otherwise $\rho^A_{k+1}+\val^A_{k+1}\geq 0.5$. Also, $\Lambda^B_k\geq 0$ by Lemma~\ref{lem:basic-bounds}. This gives
                \begin{align*}
                    &~ \bbE[\rho^A_{k+1} + \val^A_{k+1}+\min(\Lambda^B_k,0)] \\
                    \geq&~ 1(1-6\alpha_k)+0.5(6\alpha_k)+0 \\
                    \geq&~  1-3\alpha_k-3\epsilon.
                \end{align*}
            }
            \subcase{Bob interprets message $k$ incorrectly.}{
                If Bob enters Case~\ref{case2} Subcase~\ref{sub23}, he never updates, in which case $\Lambda^B_k=0$. With probability at least $1-p\geq 6(0.5-\alpha_k-\epsilon)$, Bob sends $\ECC(z\in\Sigma^2,?)$, so $\rho^A_{k+1} + \val^A_{k+1}\geq 0.5$. This gives 
                \begin{align*}
                    &~ \bbE[\rho^A_{k+1} + \val^A_{k+1}+\min(\Lambda^B_k,0)] \\
                    \geq&~ 0.5\cdot 6(0.5-\alpha_k-\epsilon)+0 \\
                    =&~  1.5-3\alpha_k-3\epsilon.
                \end{align*}
                
                Otherwise, his probability of updating is at most $3\alpha_k+3\epsilon-0.5$, so $\bbE[\Lambda^B_k]\geq 0.5-3\alpha_k-3\epsilon$. Since he sends $\ECC(z\in\Sigma^2,?)$, we have $\rho^A_{k+1} + \val^A_{k+1}\geq 0.5$ which gives
                \begin{align*}
                    &~ \bbE[\rho^A_{k+1} + \val^A_{k+1}+\min(\Lambda^B_k,0)] \\
                    \geq&~ 0.5+0.5-3\alpha_k-3\epsilon \\
                    =&~  1-3\alpha_k-3\epsilon.
                \end{align*}
            }
        \end{subcaseof}
    }
    \case{Message $k$ is of the form $\ECC(z,\delta\in \{0,1,\rewind\})$.}{
        The message is not a question so $\val^A_k=0$. Thus, we need to show
        \begin{align*}
            \bbE[\rho^A_{k+1} + \val^A_{k+1} + \min(\Lambda^B_k,0)]\geq 0.5-3\alpha_k-\epsilon.
        \end{align*}
        
        \begin{subcaseof}
            \subcase{Bob interprets message $k$ correctly.} {He always sends a message $k+1$ of the form $\ECC(z,?)$, so $\rho^A_{k+1}+ \val^A_{k+1}\geq 0.5$. Then 
            \begin{align*}
                &~ \bbE[\rho^A_{k+1} + \val^A_{k+1} + \min(\Lambda^B_k,0)] \\
                \geq &~ 0.5-0 \\
                \geq &~ 0.5-3\alpha_k-\epsilon.
            \end{align*}
            }
            
            \subcase{Bob interprets message $k$ incorrectly.}{
            Notice that $\alpha_k\geq \frac16$ and so $\min(\Lambda^B_k,0)>0.5-3\alpha_k-3\epsilon$. Then
            \begin{align*}
                &~ \bbE[\rho^A_{k+1} + \val^A_{k+1} + \min(\Lambda^B_k,0)] \\
                \geq &~ 0-0.5-3\alpha_k-\epsilon \\
                = &~ 0.5-3\alpha_k-\epsilon.
            \end{align*}
            }
        \end{subcaseof}
    }
    \end{mdframed}
\end{caseof}

\end{proof}

\subsubsection{Concluding with Azuma's Inequality}

\begin{proof}[Proof of Theorem~\ref{thm:1/6}]
    We defer the proof of communication complexity and computational complexity to Lemma~\ref{lemma:cc}.
    % The communication complexity of Protocol~\ref{prot:1/6} is $K \cdot M(|\Sigma|, \epsilon) = O_\epsilon(n_0)$ bits. As for the computational complexity, note that the limiting factor is the time it takes to agree explicitly on the code $\PikC$: this takes time $2^{2^{O_\epsilon(n_0)}}$. (Decoding the code can be done by searching over all possible output vertices, which takes only $\exp(n_0/\epsilon)$ time.)
    % is the time it takes to decode the $\epsilon$-sensitive layered code $\PikC$. Decoding by searching over all possible output vertices takes up to $\exp(n_0/\epsilon)$ time. Thus, the total computational complexity is $\exp(\poly(n_0, \frac1\epsilon))$.
    Here, we simply show that Protocol~\ref{prot:1/6} is $\left(\frac16,1224\epsilon,2\cdot \exp \left(\frac{-\epsilon n_0}{800}\right)\right)$-scaling. First, the consistency property is clear: Alice never appends an operation to $U_A$ such that the resulting transcript $t(v(U_A))$ is inconsistent with $x$. It suffices to show the two scaling properties. In particular, we will show that with probability at least $1 - \exp \left( - \frac{\epsilon n_0}{800} \right)$, both of the following statements hold for Alice:
    \begin{itemize}
        \item If $\alpha<\frac16-1224\epsilon$, then $t(v(U_A))=\cT$ and $w_A\geq \frac{K}{2}(1-6\alpha-1224\epsilon)$.
        \item If $\alpha\geq \frac16-1224\epsilon$, then if $t(v(U_A))\neq \cT$ then $w_A\leq \frac{K}{2}(6\alpha-1+1224\epsilon)$.
    \end{itemize}
    We call these the Alice-scaling conditions. By a similar analysis, the equivalent statements will hold for Bob as well. Then a union bound will give that the probability the scaling conditions hold simultaneously for both parties is at least $1 - 2 \cdot \exp(- \frac{\epsilon n_0}{800})$.
    
    Let $\alpha_1, \dots, \alpha_{K}$ denote the fractional number of corruptions in messages $1, \dots, K$. Define $$\cS_k=\{i : i\leq k \wedge (i-1\in \cS \vee i\in \cS \vee i+1\in \cS)\}.$$ For $k \in \{1\dots K\}$, we define the random variables
    \begin{align*}
        \Phi^A_k &= \Psi^A_k - 0.5k + 3k\epsilon + \sum_{i=1}^{k} 3\alpha_i + 10|\cS_k|, \\
        \Phi^B_k &= \Psi^B_k - 0.5k + 3k\epsilon + \sum_{i=1}^{k} 3\alpha_i + 10|\cS_k|.
    \end{align*}
    
    By Lemma~\ref{lem:potential-change}, for all $k$ such that $k-1,k,k+1\notin \cS$, 
    \begin{align*}
        \bbE[\Phi^A_k] &= \bbE \left[ \Psi^A_k - 0.5k + 3k\epsilon + \sum_{i=1}^{k} 3\alpha_i + 10|\cS_k| \right] \\
        &\ge \bbE \left[ \Psi^A_{k-1} - 0.5(k-1) +3(k-1)\epsilon + \sum_{i=1}^{k-1} 3\alpha_i + 10|\cS_k| \right] \\
        &= \bbE[\Phi^A_{k-1}].
    \end{align*}
    For all $k$ such that either $k-1\in \cS$, $k\in \cS$, or $k+1 \in \cS$,
    \begin{align*}
        \bbE[\Phi^A_k] &~ = \bbE \left[ \Psi^A_k - 0.5k + 3k\epsilon + \sum_{i=1}^{k} 3\alpha_i + 10|\cS_k| \right] \\
        \ge&~ \bbE \left[ 
        \begin{aligned} 
            & \Psi^A_{k-1} + \Lambda^A_k + \rho^A_k - \rho^A_{k-1} +\min(\psi^B_k+\rho^B_{k+1}, n_0/2) -\min(\psi^B_{k-1}+\rho^B_{k}, n_0/2) \\
            & + \val^A_{k+1} - \val^A_{k} -0.5k +3k\epsilon + \sum_{i=1}^{k-1} 3\alpha_i + 10 |\cS_{k-1}| + 10 
        \end{aligned}
        \right]\\
        \ge&~ \bbE[\Phi^A_{k-1}] - \left| \Lambda^A_{k} \right| - \left| \Lambda^B_{k} \right| - \left| \rho^B_{k} \right| - \left| \rho^B_{k-1}
        \right| - \left| \rho^A_{k} \right| - \left| \rho^A_{k-1} \right| - \left| \val^A_{k+1} \right| - \left| \val^A_{k}
        \right| - 0.5 +3\epsilon + 3\alpha_k + 10 \\
        \ge&~ \bbE[\Phi^A_{k-1}].
    \end{align*}
    Therefore, $\{\Phi^A_k\}_{k\geq1}$ is a submartingale. A similar argument shows it has bounded distance 
    \begin{align*}
        |\Phi^A_k - \Phi^A_{k-1}| 
        &= \left| \Psi^A_k - \Psi^A_{k-1} - 0.5 +3\epsilon + 3\alpha_k + |\cS_k|-|\cS_{k-1}| \right| \\
        &\le \left| \Lambda^A_{k} \right| + \left| \Lambda^B_{k} \right| + \left| \rho^B_{k} \right| + \left| \rho^B_{k-1}
        \right| + \left| \rho^A_{k} \right| + \left| \rho^A_{k-1} \right| + \left| \val^A_{k+1} \right| + \left| \val^A_{k}
        \right| + \left|- 0.5 +3\epsilon + 3\alpha_{k} \right| + 10 \\
        &< 20.
    \end{align*}
    Similarly, $\Phi^B_k$ is a submartingale with bounded distance $< 20$. For convenience, define $\Phi^A_0=\Phi^B_0=-5$, and because $\Phi^A_1, \Phi^B_1 \in [-1, 15]$, it still holds that $\Phi^A$ and $\Phi^B$ are submartingales. Moreover, recall that $|\cS|\leq 20K\epsilon$ by Lemma~\ref{lem:S} which implies that $|\cS_K|\leq 60K\epsilon$.
    
    We now show that the Alice-scaling conditions hold as long as $\Psi^A_{K} \ge R:=n_0+2+\frac{K}{2}(1-6\alpha-1224\epsilon)$. Note that this implies that
    \begin{align*}
        \psi^A_K =&~ \Psi^A_K-\rho^A_{K+1}-\min(\psi^B_K+\rho^B_{K+1}, n_0/2) - \val^A_{K+1} \\
        \geq&~ \Psi^A_K-n_0/2-2 \\
        \geq&~ n_0/2 + \frac{K}{2} ( 1 - 6\alpha - 1224 \epsilon ).
    \end{align*}
    Then, by Lemma~\ref{lemma:N}, if $\alpha < \frac16 - 1224 \epsilon$, it holds that $\psi^A_K \ge n_0/2$ which means that Alice outputs $t(v(U_A)) = \cT$ with weight $w_A \ge \frac{K}{2} (1 - 6\alpha - 1224 \epsilon)$. On the other hand, if $\alpha \ge \frac16 - 1224 \epsilon$, then either $t(v(U_A)) = \cT$ or $\psi^A_K < n_0/2$, in which case $w_A \le n_0/2 - \psi^A_K \le \frac{K}{2}(6\alpha - 1 + 1224 \epsilon)$.
    
    % If $\alpha<\frac16-1224\epsilon$, by Lemma~\ref{lemma:N}, Alice holds $t(v(U_A))=\cT$ with confidence $w_A\geq \frac{K}2(1-6\alpha-1224\epsilon)$ as long as $\psi^A_K\geq n_0/2+\frac{K}2(1-6\alpha-1224\epsilon)$, and therefore also as long as $\Psi^A_K\geq n_0+2+\frac{K}2(1-6\alpha-1224\epsilon)=R$. On the other hand if $\alpha\geq \frac16-1224\epsilon$, by Lemma~\ref{lemma:N}, if Alice holds $t(v(U_A))\neq\cT$, then the confidence $w_A\leq \frac{K}2(6\alpha-1+1224\epsilon)$ as long as $\Psi^A_K\geq n_0+2-\frac{K}2(6\alpha-1+1224\epsilon)=R$. Now,
    Finally,
    \begin{align*}
        \Pr \left[ \Psi^A_{K} \ge R \right]
        &= 1 - \Pr \left[ \Phi^A_{K} - \Phi^A_0 < R - 0.5K + 3K\epsilon + \sum_{i=0}^{K} 3\alpha_i + 10|\cS_K| - \Phi^A_0 \right] \\
        &\ge 1 - \Pr \left[ \Phi^A_{K} - \Phi^A_0 < R - 0.5K + 3K\epsilon + 3 \alpha K +600K\epsilon + 5 \right] \\
        &\ge 1 - \Pr \left[ \Phi^A_K - \Phi^A_0 < n_0+2-\frac{K}{2}(1-6\alpha-1224\epsilon)-0.5K+3K\epsilon+3\alpha K+600K\epsilon+ 5 \right] \\
        &\ge 1 - \Pr \left[ \Phi^A_K - \Phi^A_0 < -n_0 \right] \\
        &\ge 1 - \exp \left( \frac{-\epsilon n_0}{800} \right).
    \end{align*}
    
    The same calculation holds for Bob. It follows that Protocol~\ref{prot:1/6} is $(\frac16, 1224 \epsilon, 2 \cdot \exp(-\frac{\epsilon n_0}{800}))$-scaling.
    
\end{proof}

\subsubsection{Communication and Computational Complexity}

\begin{lemma} \label{lemma:cc}
    The communication complexity of Protocol~\ref{prot:1/6} is $O_\epsilon(n_0)$, and the computational complexity is $2^{2^{O_\epsilon(n_0)}}$.
\end{lemma}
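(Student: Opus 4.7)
My plan is to handle the two complexity claims separately.

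The communication complexity is essentially immediate. The protocol consists of exactly $K = n_0/\epsilon$ transmissions, and each message is a single $\ECC$ codeword of length $M(|\Sigma|, \epsilon)$. Since $|\Sigma|$ depends only on $\epsilon$ (Theorem~\ref{thm:listGcode-exist} only requires $|\Sigma| = O_\epsilon(1)$), Lemma~\ref{lemma:ecc} gives $M(|\Sigma|, \epsilon) = O_\epsilon(1)$. Multiplying, the total communication is $K \cdot M(|\Sigma|, \epsilon) = O_\epsilon(n_0)$.

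For the computational complexity, I would first bound the size of the transcript graph $G$. By Definition~\ref{def:G}, the vertices in layer $\ell$ are indexed by strings in $\{0,1\}^{\leq \min(\ell, n_0)}$, so $G$ has at most $K \cdot 2^{n_0+1} = 2^{O_\epsilon(n_0)}$ vertices, and since each vertex has out-degree $4$, at most $2^{O_\epsilon(n_0)}$ edges. The expensive one-time step is the agreement on $\PikC$: each party enumerates the $|\Sigma|^{|E(G)|} = 2^{2^{O_\epsilon(n_0)}}$ possible labelings of $G$ in lexicographic order and, for each candidate, tests the $\epsilon$-sensitivity condition of Definition~\ref{def:listGcode}. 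Theorem~\ref{thm:listGcode-exist} guarantees the first such labeling exists. Verifying $\epsilon$-sensitivity for a single labeling is easily done in time polynomial in $|G|$ (e.g., by enumerating all $w \in \Sigma^K$ and all prefix trees built from paths in $G$, both of which have size $2^{O_\epsilon(n_0)}$), so the preprocessing takes $2^{2^{O_\epsilon(n_0)}}$ time.

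Once $\PikC$ is fixed, I would argue the per-round cost is only singly exponential. In each round, a party must (i) compute $\PikC(U_A)[|U_A|]$ or $\PikC(v^*, \bullet\bullet)$, which is an $O(1)$-time table lookup once $\PikC$ is stored, and (ii) possibly invoke $\PikCDec(P_A \| z^*)$. The decoding can be performed by a DFS/DP over the layered graph $G$ up to depth $K$, maintaining for each vertex the minimum Hamming distance from $P_A \| z^*$ to any encoded path ending there; this runs in time $O(|E(G)|) = 2^{O_\epsilon(n_0)}$. Summing over the $K = O_\epsilon(n_0)$ rounds gives $2^{O_\epsilon(n_0)}$ total runtime for the online phase.

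Combining, the overall computational complexity is dominated by the preprocessing, yielding $2^{2^{O_\epsilon(n_0)}}$. The only step I anticipate needing any care is checking that enumerating and verifying candidate codes can actually be implemented within the claimed double-exponential bound; in particular, the verification of $\epsilon$-sensitivity must iterate over prefix trees of rooted subgraphs, but since $|G|$ itself is $2^{O_\epsilon(n_0)}$, even a crude enumeration stays within $2^{2^{O_\epsilon(n_0)}}$ time.
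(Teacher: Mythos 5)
Your proof is correct and follows essentially the same route as the paper's: the communication bound is the immediate product $K \cdot M(|\Sigma|,\epsilon)$, and the computational bound is dominated by the brute-force agreement on $\PikC$, with encoding/decoding in the online phase contributing only a singly exponential factor. The only cosmetic differences are that you cap the layer-$\ell$ vertex set at $\{0,1\}^{\le \min(\ell,n_0)}$ (the paper's Definition~\ref{def:G} does not impose the $n_0$ cap, but this does not affect the $2^{O_\epsilon(n_0)}$ bound on $|G|$) and that you propose a layered DP over $G$ for $\PikCDec$ where the paper just enumerates all $4^K$ paths — both are $2^{O_\epsilon(n_0)}$ and swamped by the double-exponential preprocessing.
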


\begin{proof}
    The communication complexity is $K \cdot M(|\Sigma|, \epsilon) = O_\epsilon(n_0)$.
    
    As for the computational complexity, at the beginning, Alice and Bob agree on the code $\PikC$. Each possible code is defined by a labeling of $G$; there are $4 \cdot (2^K-1)$ edges with $|\Sigma|$ labels each, for $\le |\Sigma|^{4 \cdot 2^K}$ possible codes. Both Alice and Bob choose the lexicographically first one that is an $\epsilon$-sensitive layered code: $\epsilon$-sensitivity can be checked in time $\poly(|\Sigma|^K)$ by checking each word $w \in \Sigma^K$ and all possible prefix decodings. In each of the $K$ rounds, the substantial actions that Alice (respectively Bob) performs are some subset of the following:
    \begin{itemize}
        \item Alice appends elements in $\{0,1,\rewind,\bullet\}^2$ to $U_A$ or appends elements in $\Sigma^2$ to $P_A$. These steps take time $\Tilde{O}_\epsilon(1)$.
        \item Alice encodes $\PikC(U_A)$. This step takes time $\Tilde{O}_\epsilon(n_0)$.
        \item Alice decodes $\PikCDec(P_A)$. She may need to test all $4^{K}$ possible paths, which could take time $\Tilde{O}_\epsilon(n_0)\cdot 4^{K}$.
        % \item Alice sends a message $\ECC(z\in\Sigma^2,\delta\in\{0,1,\rewind,?\}$. 
        \item Alice decodes a message $m$ to the nearest $\ECC(z\in\Sigma^2,\delta\in\{0,1,\rewind,?\}$ and computes the distance between $m$ and $\ECC(z \in \Sigma^2, \delta \in \{0,1,\rewind,?\})$. Since $|\Sigma|$ and therefore the length of $m$ is a constant independent of $n_0$, these steps take time $O_\epsilon(1)$.
    \end{itemize}
    
    In combination, the steps take total computational complexity $2^{2^{O_\epsilon(n_0)}}$ (where recall that $K = n_0/\epsilon$).
\end{proof}

%% ACKNOWLEDGEMENTS
\section{Acknowledgments}

% The authors would like to thank their advisor, Dr. Yael Tauman Kalai (Microsoft Research and MIT) for helpful discussions and introducing them to the field of interactive coding. 
Rachel Yun Zhang is supported by an Akamai Presidential Fellowship.

%% BIBLIOGRAPHY
\bibliographystyle{alpha}
\bibliography{refs}

\newcommand{\etalchar}[1]{$^{#1}$}
\begin{thebibliography}{DHM{\etalchar{+}}15}

\bibitem[BE14]{BravermanE14}
Mark Braverman and Klim Efremenko.
\newblock {List and Unique Coding for Interactive Communication in the Presence
  of Adversarial Noise}.
\newblock In {\em 2014 IEEE 55th Annual Symposium on Foundations of Computer
  Science (FOCS)}, pages 236--245, Los Alamitos, CA, USA, oct 2014. IEEE
  Computer Society.

\bibitem[BGMO15]{BravermanGMO15}
Mark Braverman, Ran Gelles, Jieming Mao, and Rafail Ostrovsky.
\newblock {Coding for Interactive Communication Correcting Insertions and
  Deletions}.
\newblock {\em IEEE Transactions on Information Theory}, PP, 08 2015.

\bibitem[BK12]{BrakerskiK12}
Zvika Brakerski and Yael~Tauman Kalai.
\newblock {Efficient Interactive Coding against Adversarial Noise}.
\newblock In {\em 2012 IEEE 53rd Annual Symposium on Foundations of Computer
  Science}, pages 160--166, 2012.

\bibitem[BN13]{BrakerskiN13}
Zvika Brakerski and Moni Naor.
\newblock {Fast Algorithms for Interactive Coding}.
\newblock In {\em Proceedings of the Twenty-Fourth Annual ACM-SIAM Symposium on
  Discrete Algorithms}, SODA '13, page 443–456, USA, 2013. Society for
  Industrial and Applied Mathematics.

\bibitem[BR11]{BravermanR11}
Mark Braverman and Anup Rao.
\newblock {Towards Coding for Maximum Errors in Interactive Communication}.
\newblock In {\em Proceedings of the Forty-Third Annual ACM Symposium on Theory
  of Computing}, STOC '11, page 159–166, New York, NY, USA, 2011. Association
  for Computing Machinery.

\bibitem[Bra12]{Braverman12}
Mark Braverman.
\newblock {Towards Deterministic Tree Code Constructions}.
\newblock In {\em Proceedings of the 3rd Innovations in Theoretical Computer
  Science Conference}, ITCS '12, page 161–167, New York, NY, USA, 2012.
  Association for Computing Machinery.

\bibitem[BYCY21]{BenyaacovCY21}
Inbar Ben-Yaacov, Gil Cohen, and Tal Yankovitz.
\newblock Explicit binary tree codes with sub-logarithmic size alphabet.
\newblock 2021.

\bibitem[CHS18]{CohenHS18}
Gil Cohen, Bernhard Haeupler, and Leonard~J. Schulman.
\newblock {Explicit Binary Tree Codes with Polylogarithmic Size Alphabet}.
\newblock In {\em Proceedings of the 50th Annual ACM SIGACT Symposium on Theory
  of Computing}, STOC 2018, page 535–544, New York, NY, USA, 2018.
  Association for Computing Machinery.

\bibitem[DHM{\etalchar{+}}15]{DaniHMSY15}
Varsha Dani, Thomas~P. Hayes, Mahnush Movahedi, Jared Saia, and Maxwell Young.
\newblock {Interactive Communication with Unknown Noise Rate}, 2015.

\bibitem[EGH16]{EfremenkoGH16}
Klim Efremenko, Ran Gelles, and Bernhard Haeupler.
\newblock {Maximal Noise in Interactive Communication Over Erasure Channels and
  Channels With Feedback}.
\newblock {\em {IEEE} Trans. Inf. Theory}, 62(8):4575--4588, 2016.

\bibitem[EKS20]{EfremenkoKS20b}
Klim Efremenko, Gillat Kol, and Raghuvansh~R. Saxena.
\newblock {Binary Interactive Error Resilience Beyond ${{}^{1}}\!/\!_{8}$ (or
  why $({{}^{1}}\!/\!_{2})^{3} > {{}^{1}}\!/\!_{8})$}.
\newblock In {\em 2020 IEEE 61st Annual Symposium on Foundations of Computer
  Science (FOCS)}, pages 470--481, 2020.

\bibitem[FGOS15]{FranklinGOS15}
Matthew Franklin, Ran Gelles, Rafail Ostrovsky, and Leonard~J. Schulman.
\newblock {Optimal Coding for Streaming Authentication and Interactive
  Communication}.
\newblock {\em IEEE Transactions on Information Theory}, 61(1):133--145, 2015.

\bibitem[Gel17]{Gelles-survey}
Ran Gelles.
\newblock {Coding for Interactive Communication: A Survey}.
\newblock {\em Foundations and Trends® in Theoretical Computer Science},
  13:1--161, 01 2017.

\bibitem[GH13]{GhaffariH13}
Mohsen Ghaffari and Bernhard Haeupler.
\newblock {Optimal Error Rates for Interactive Coding II: Efficiency and List
  Decoding}.
\newblock {\em Proceedings - Annual IEEE Symposium on Foundations of Computer
  Science, FOCS}, 12 2013.

\bibitem[GH17]{GellesH17}
Ran Gelles and Bernhard Haeupler.
\newblock {Capacity of Interactive Communication over Erasure Channels and
  Channels with Feedback}.
\newblock {\em SIAM Journal on Computing}, 46:1449--1472, 01 2017.

\bibitem[GHK{\etalchar{+}}16]{GellesHKRW16}
Ran Gelles, Bernhard Haeupler, Gillat Kol, Noga Ron-Zewi, and Avi Wigderson.
\newblock {\em {Towards Optimal Deterministic Coding for Interactive
  Communication}}, pages 1922--1936.
\newblock 2016.

\bibitem[GI18]{GellesI18}
Ran Gelles and Siddharth Iyer.
\newblock {Interactive coding resilient to an unknown number of erasures}.
\newblock {\em arXiv preprint arXiv:1811.02527}, 2018.

\bibitem[GMS11]{GellesMS11}
Ran Gelles, Ankur Moitra, and Amit Sahai.
\newblock {Efficient and Explicit Coding for Interactive Communication}.
\newblock pages 768--777, 10 2011.

\bibitem[GZ22]{GuptaZ22a}
Meghal Gupta and Rachel~Yun Zhang.
\newblock {The Optimal Error Resilience of Interactive Communication Over
  Binary Channels}.
\newblock In {\em Symposium on Theory of Computing, {STOC} 2012, New York, NY,
  USA, June 20 - June 24, 2022}, STOC '22. {ACM}, 2022.

\bibitem[Hae14]{Haeupler14}
Bernhard Haeupler.
\newblock {Interactive Channel Capacity Revisited}.
\newblock In {\em 55th {IEEE} Annual Symposium on Foundations of Computer
  Science, {FOCS} 2014, Philadelphia, PA, USA, October 18-21, 2014}, pages
  226--235, 2014.

\bibitem[Ham50]{Hamming50}
R.~W. Hamming.
\newblock {Error detecting and error correcting codes}.
\newblock {\em The Bell System Technical Journal}, 29(2):147--160, 1950.

\bibitem[HS21]{HaeuplerS21}
Bernhard Haeupler and Amirbehshad Shahrasbi.
\newblock {Synchronization Strings: Codes for Insertions and Deletions
  Approaching the Singleton Bound}.
\newblock {\em J. ACM}, 68(5), sep 2021.

\bibitem[MS14]{MooreS14}
Cristopher Moore and Leonard~J. Schulman.
\newblock {Tree Codes and a Conjecture on Exponential Sums}.
\newblock In {\em Proceedings of the 5th Conference on Innovations in
  Theoretical Computer Science}, ITCS '14, page 145–154, New York, NY, USA,
  2014. Association for Computing Machinery.

\bibitem[Pud16]{Pudlak16}
Pavel Pudlák.
\newblock Linear tree codes and the problem of explicit constructions.
\newblock {\em Linear Algebra and its Applications}, 490:124--144, 2016.

\bibitem[Sch92]{Schulman92}
Leonard~J. Schulman.
\newblock {Communication on noisy channels: a coding theorem for computation}.
\newblock In {\em Proceedings., 33rd Annual Symposium on Foundations of
  Computer Science}, pages 724--733, 1992.

\bibitem[Sch93]{Schulman93}
Leonard~J. Schulman.
\newblock {Deterministic Coding for Interactive Communication}.
\newblock In {\em Proceedings of the Twenty-Fifth Annual ACM Symposium on
  Theory of Computing}, STOC '93, page 747–756, New York, NY, USA, 1993.
  Association for Computing Machinery.

\bibitem[Sch96]{Schulman96}
Leonard~J. Schulman.
\newblock {Coding for interactive communication}.
\newblock {\em IEEE Transactions on Information Theory}, 42(6):1745–1756,
  1996.

\bibitem[Sha48]{Shannon48}
Claude~E. Shannon.
\newblock {A mathematical theory of communication}.
\newblock {\em The Bell System Technical Journal}, 27(3):379--423, 1948.

\end{thebibliography}

\end{document}